\newtheorem{theorem}{Theorem}
\newtheorem{lemma}[theorem]{Lemma}
\newtheorem{proposition}[theorem]{Proposition}
\newtheorem{definition}{Definition}
\theoremstyle{nonumberplain}
\newtheorem{proof}{Proof.}
\theoremstyle{empty}
\newcommand{\norm}[1]{\left\lVert #1 \right\rVert}
\newcommand{\abs}[1]{\left\lvert #1 \right\rvert}
\newcommand{\bind}{\bold{I}}
\newcommand{\p}[1]{\left( #1 \right)}
\newcommand{\bW}{\boldsymbol{W}}
\newcommand{\sig}{\boldsymbol{\sigma}}
\newcommand{\bsigma}{\boldsymbol{\sigma}}
\newcommand{\bdelta}{\boldsymbol{\delta}}
\newcommand{\bdel}{\bdelta}
\newcommand{\rhO}{\boldsymbol{\rho}}
\newcommand{\lamb}{\boldsymbol{\lambda}}
\newcommand{\Lamb}{\boldsymbol{\Lambda}}
\newcommand{\MW}{\textsf{MW}}
\newcommand{\cN}{\mathcal{N}}
\newcommand{\bphi}{\boldsymbol{\phi}}
\newcommand{\beq}{\begin{eqnarray}}
\newcommand{\eeq}{\end{eqnarray}}
\newcommand{\beqn}{\begin{equation}}
\newcommand{\eeqn}{\end{equation}}
\newcommand{\R}{\mathbb{R}}
\newcommand{\N}{\mathbb{N}}
\newcommand{\Z}{\mathbb{Z}}
\newcommand{\Rp}{\mathbb{R}_+}
\newcommand{\Zp}{\mathbb{Z}_+}
\newcommand{\beps}{\varepsilon}
\newcommand{\bzero}{\mathbf{0}}
\newcommand{\bone}{\mathbf{1}}
\newcommand{\mc}{\mathcal}
\newcommand{\mb}{\mathbf}
\newcommand{\LN}{\textbf{LN}}
\renewcommand{\hat}{\widehat}
\newcommand{\cM}{\mc{M}}
\newcommand{\sX}{{\sf X}}
\newcommand{\cX}{\mc{X}}
\newcommand{\cR}{\mc{R}}
\newcommand{\cB}{\mc{B}}
\newcommand{\cI}{\mc{I}}
\newcommand{\bu}{\mb{u}}
\newcommand{\bv}{\mb{v}}
\newcommand{\bx}{\mb{x}}
\newcommand{\bz}{\mb{z}}
\newcommand{\bQ}{\mathbf{Q}}
\newcommand{\bR}{\mathbf{R}}
\newcommand{\E}{\mathbb{E}}
\newcommand{\ind}[1]{\boldsymbol{1}_{[#1]}}  
\newcommand{\x}{\boldsymbol{x}}
\newcommand{\y}{\boldsymbol{y}}
\newcommand{\z}{\boldsymbol{z}}
\newcommand{\X}{{\cal X}}              
\DeclareMathOperator{\Conv}{\textsf{Conv}}
\begin{document}
\begin{frontmatter}

\title{Randomized  Scheduling Algorithm  for Queueing Networks}

\runtitle{Randomized Network Scheduling}

\begin{aug}
  \author{\fnms{D.} \snm{Shah} \quad \fnms{J.} \snm{Shin}\thanksref{shin}\ead[label=e1]{jinwoos@mit.edu}}
  \runauthor{Shah \& Shin}
  \affiliation{Massachusetts Institute of Technology}
\thankstext{shin}{Both authors are with the Laboratory for
Information and Decision Systems at MIT. DS and JS are with
the department of EECS and Mathematics, respectively.
Authors' email addresses: {\tt \{devavrat, jinwoos\}@mit.edu}}
\end{aug}

\vspace{.1in}

\begin{abstract}

There has recently been considerable interests in design of
low-complexity, myopic, distributed and stable scheduling
policies for constrained queueing network models that arise
in the context of emerging communication networks. Here, we
consider two representative models. One, a model for
the collection of wireless nodes communicating through a
shared medium, that represents randomly varying number of
packets in the queues at the nodes of networks. Two, a
buffered circuit switched network model for an optical
core of future Internet, to capture the randomness in
calls or flows present in the network. The maximum weight
scheduling policy proposed by Tassiulas and Ephremide
\cite{TE92} leads to a myopic and stable policy for
the packet-level wireless network model. But computationally
it is very expensive (NP-hard) and centralized. It is
not applicable to the buffered circuit switched network due
to the requirement of non-premption of the calls in the
service. As the main contribution of this paper, we
present a stable scheduling algorithm for both of these
models. The algorithm is myopic, distributed and performs
few logical operations at each node per unit time.


\end{abstract}

\begin{keyword}[class=AMS]
\kwd[Primary ]{60K20}
\kwd{68M12}
\kwd[; Secondary ]{68M20}
\end{keyword}

\begin{keyword}
\kwd{Wireless Medium Access}
\kwd{Buffered Circuit Switched Network}
\kwd{Aloha}
\kwd{Stability}
\kwd{Scheduling}
\kwd{Mixing time}
\kwd{Slowly Varying Markov Chain}
\end{keyword}

\end{frontmatter}

\section{Introduction}

The primary task of a communication network architect is
to provision as well as utilize network resources efficiently
to satisfy the demands imposed on it. The main algorithmic
problem is that of allocating or scheduling resources among
various entities or data units, e.g. packets, flows, that
are contending to access them. In recent years, the question
of designing a simple, myopic, distributed and high-performance
(aka stable) scheduling algorithm has received considerable
interest in the context of emerging communication network
models. Two such models that we consider this paper are that
of a wireless network and a buffered circuit switched network.

The wireless network consists of wireless transmission capable nodes.
Each node receives exogenous demand in form of packets. These nodes
communicate these packets through a shared wireless medium. Hence
their simultaneous transmission may contend with each
other. The purpose of a scheduling algorithm is to resolve
these contentions among transmitting nodes so as to utilize
the wireless network bandwidth efficiently while keeping the
queues at nodes finite. Naturally the desired scheduling algorithm
should be distributed, simple/low-complexity and myopic (i.e. utilize only
the network state information like queue-sizes).

The buffered circuit switched network can be utilized to model
the dynamics of flows or calls in an optical core of future
Internet. Here a link capacitated network is given with a collection
of end-to-end routes. At the ingress (i.e. input or entry point)
of each route, calls arriving as per exogenous
process are buffered or queued. Each such call
desires resources on each link of its route for a random amount
of time duration. Due to link capacity constraints, calls of
routes sharing links contend for resources. And, a scheduling algorithm
is required to resolve this contention so as to utilize the network
links efficiently while keeping buffers or queues at ingress of routes
finite. Again, the scheduling algorithm is desired to be
distributed, simple and myopic.

An important scheduling algorithm is the maximum weight
policy that was proposed by Tassiulas and Ephremides \cite{TE92}.
It was proposed in the context of a packet queueing network model
with generic scheduling constraints. It is primarily applicable
in a scenario where scheduling decisions are synchronized or
made every discrete time. It suggests scheduling queues, subject
to constraints, that have the maximum net weight at each time
step with the weight of a queue being its queue-size. They
established throughput optimality property of this algorithm
for this general class of networks. Further, this algorithm, as the
description suggests, is myopic. Due to the general applicability
and myopic nature, this algorithm and its variants have received
a lot of attention in recent years, e.g.
\cite{MAW,DB,stolyar,SW06,dailin,SW07}.

The maximum weight algorithm provides a myopic and stable scheduling
algorithm for the wireless network model. However, it requires
solving a combinatorial optimization problem, the maximum
weight independent set problem, to come up with a schedule every
time. And the problem of finding a maximum weight independent set is
known to be NP-hard as well as hard to approximate in general
\cite{IS}. To address this concern, there has been
a long line of research conducted to devise implementable approximations
of the maximum weight scheduling algorithm, e.g. \cite{islip,tassiulas98,G-P-S,DW04,MSZ06}.
A comprehensive survey of such maximum weight inspired and other
algorithmic approaches that have been studied over more than four decades
in the context of wireless networks can be found in \cite{RSS09,LSSW}.

In the context of buffered circuit switched network, calls have
random service requirement. Therefore, scheduling decisions can not be
synchronized. Therefore, the maximum weight scheduling algorithm is not
applicable. To the best of our knowledge, no other myopic and stable
algorithm is known for this network model.

\subsection{Contributions}

We propose a scheduling algorithm for both wireless and buffered
circuit switched network model. The algorithm utilizes only local,
queue-size information to make scheduling decisions. That is, the
algorithm is myopic and distributed. It requires each queue
(or node) in the network to perform few (literally, constant) logical
operations per scheduling decision. We establish that it is throughput
optimal. That is, the network Markov process is positive Harris
recurrent as long as the network is under-loaded (or not overloaded).

Philosophically, our algorithm design is motivated by a certain product-form
distribution that can be characterized as the stationary distribution of
a simple and distributed Markovian dynamics over the space of schedules. For
the wireless network, it corresponds to the known Glauber dynamics
(cf. \cite{Glauber}) over the space of independent sets of the wireless
network interference graph; for the buffered circuit switched network,
it corresponds to the known stochastic loss network (cf. \cite{Kelly}).

To establish the stability property of the algorithm, we exhibit an
appropriate Lyapunov function. This, along with standard machinery
based on identifying an appropriate `petit set', leads to the positive Harris
recurrence property of the network Markov process.  Technically, this is the most
challenging part of our result. It requires proving an effective
`time scale separation' between the network queuing dynamics and
the scheduling dynamics induced by the algorithm. To make this
possible, we use an appropriately slowly increasing
function ($\log\log (\cdot + e)$) of queue-size as weight
in the scheduling algorithm. Subsequently, the time scale separation
follows by studying the mixing property of a specific time varying
Markov chain over the space of schedules.

We note that use of Lyapunov function for establishing stability is somewhat classical now (for example, see \cite{TE92, stolyar, SW06}). Usually difficulty lies in finding an appropriate candidate function followed by establishing that it is indeed a ``Lyapunov'' function.

\subsection{Organization}

We start by describing two network models, the wireless network and
the buffered circuit switched network in Section \ref{sec:model}. We
formally introduce the problem of scheduling and performance metric
for scheduling algorithms. The maximum weight scheduling algorithm
is described as well. Our randomized algorithm and its throughput
optimality for both network models are presented in Section \ref{sec:main}.
The paper beyond Section \ref{sec:main} is dedicated to establishing
the throughput optimality. Necessary technical preliminaries
are presented in Section \ref{sec:prelim}. Here we relate our algorithm
for both models with appropriate reversible Markov chains on the space
of schedules and state useful properties of these Markov chains.
We also describe known facts about the positive Harris recurrence
as well as state the known Lyapunov drift criteria, to establish
positive Harris recurrence. Detailed proofs of our main results are
presented in Section \ref{sec:mainproof}.

\section{Setup}\label{sec:model}

\subsection{Wireless Network}

We consider a {\em single-hop} wireless network of $n$ queues.
Queues receive work as per exogenous arrivals and work leaves
the system upon receiving service. Specifically, let
$Q_i(t) \in \R_+ = \{ x \in \R: x\geq 0\}$
denote the amount of work in the $i$th queue at time
$t\in \R_+$ and $\bQ(t)=[Q_i(t)]_{1{\le}i{\le}n}$;
initially $t=0$ and $\bQ(0) = \bzero$\footnote{Bold letters are reserved
for vectors; $\bzero, \bone$ represent vectors of all $0$s \& all $1$s
respectively.}.
Work arrives to each queue in terms of unit-sized packets
as per a discrete-time process. Let $A_i(s,t)$ denote the
amount of work arriving to queue $i$ in time interval $[s,t]$
for $0\leq s < t$. For simplicity, assume that
for each $i$, $A_i(\cdot)$ is an independent Bernoulli process
with parameter $\lambda_i$, where $A_i(\tau) \stackrel{\triangle}{=} A_i(0,\tau)$.
That is, $A_i(\tau+1)-A_i(\tau) \in \{0,1\}$ and
$\Pr(A_i(\tau+1)-A_i(\tau) = 1) = \lambda_i$ for all
$i$ and $\tau\in \Zp = \{ k \in \Z : k \geq 0\}$.  Denote the arrival rate vector as $\lamb =
[\lambda_i]_{1\le i\le n}$. We assume that arrivals happen at
the end of a time slot.

The work from queues is served at the unit rate, but subject to
{\em interference} constraints. Specifically, let $G = (V,E)$
denote the inference graph between the $n$ queues, represented by
vertices $V = \{1,\dots n\}$ and edges $E$: an $(i,j) \in E$ implies
that queues $i$ and $j$ can not transmit simultaneously since
their transmission {\em interfere} with each other. Formally,
let $\sigma_i(t) \in \{0,1\}$ denotes whether the queue $i$
is transmitting at time $t$, i.e. work in queue $i$ is being served
at unit rate at time $t$ and $\sig(t) = [\sigma_i(t)]$. Then, it must
be that for $t \in \R_+$,
$$\sig(t) \in \cI(G) \stackrel{\Delta}{=}
\{ \rhO = [\rho_i] \in \{0,1\}^n :
   \rho_i + \rho_j \le 1\text{ for all }(i,j) \in E \}.$$
The total amount of work served at queue $i$ in time interval $[s,t]$ is
$$D_i(s,t) = \int_{s}^t \sigma_i(y) \bind_{\{Q_i(y) > 0\}} dy,$$
where $\bind_{\{x\}}$ denotes the indicator function.

In summary, the above model induces the following queueing dynamics:
for any $0 \leq s < t$ and $1\leq i\leq n$,
$$ Q_i(t) = Q_i(s) - \int_{s}^t \sigma_i(y) \bind_{\{Q_i(y) > 0\}} dy + A_i(s,t). $$


\subsection{Buffered Circuit Switched Network}

We consider a buffered circuit switched network. Here
the network is represented by  a capacitated graph
$G=(V,E)$ with $V$ being vertices, $E \subset V \times V$
being links (or edges) with each link $e \in E$ having
a finite integral capacity $C_e \in \N$. This network
is accessed by a fixed set of $n$ routes $R_1, \dots, R_n$;
each route is a collection of interconnected links. At
each route $R_i$, flows arrive as per an exogenous arrival
process. For simplicity, we assume it to be an independent
Possion process of rate $\lambda_i$ and let $A_i(s,t)$ denote
total number of flow arrivals to route $R_i$ in time interval
$[s,t]$. Upon arrival of a flow to route $R_i$, it joins the queue or
buffer at the ingress of $R_i$. Let $Q_i(t)$ denote the number
of flows in this queue at time $t$; initially $t=0$ and $Q_i(0)=0$.

Each flow arriving to $R_i$, comes with the service requirement of
using unit capacity simultaneously on all the links of $R_i$
for a time duration -- it is assumed to be distributed independently
as per Exponential of unit mean. Now a flow in the queue of
route $R_i$ can get simultaneous possession of links along
route $R_i$ in the network at time $t$, if there is a unit
capacity available at {\em all} of these links. To this end,
let $z_i(t)$ denote the number of flows that are {\em active}
along route $R_i$, i.e. posses links along the route $R_i$.
Then, by capacity constraints on the links of the network,
it must be that $z(t) = [z_i(t)]$ satisfies
$$ \z(t)\in \X \stackrel{\Delta}{=}\{\z=[z_i]\in \Zp^n : \sum_{i:e\in R_i} z_i \leq C_e, ~~\forall~ e\in E\}.$$
This represents the scheduling constraints of the circuit switched
network model similar to the interference constraints of the wireless
network model.

Finally, a flow active on route $R_i$, departs the network
after the completion of its service requirement and frees unit
capacity on the links of $R_i$. Let $D_i(s,t)$ denote the
number of flows which are served (hence leave the system)
in time interval $[s,t]$.

\subsection{Scheduling Algorithm \& Performance Metric}

In both models described above, the scheduling is the
key operational question. In the wireless network, queues
need to decide which of them transmit subject to interference
constraints. In the circuit switched network, queues need to
agree on which flows becomes active subject to network
capacity constraints. And, a {\em scheduling algorithm}
is required to make these decisions every time.

In wireless network, the scheduling algorithm decides
the schedule $\bsigma(t) \in \cI(G)$ at each time $t$.
We are interested in {\em distributed} scheduling algorithms,
i.e. queue $i$ decides $\sigma_i(t)$ using its local information,
such as its queue-size $Q_i(t)$. We assume that queues
have instantaneous {\em carrier sensing} information, i.e.
if a queue (or node) $j$ starts transmitting at time $t$,
then all neighboring queues can {\em listen} to this
transmission immediately.

In buffered circuit switched network, the scheduling
algorithm decides active flows or schedules $\bz(t)$ at
time $t$. Again, our interest is in {\em distributed}
scheduling algorithms, i.e. queue at ingress of route $R_i$
decides $z_i(t)$ using its local information. Each
queue (or route) can obtain instantaneous information
on whether all links along its route have unit capacity
available or not.

In summary, both models need scheduling algorithms
to decide when each queue (or its ingress port) will request
the network for availability of resources; upon a positive
answer (or successful request) from the network, the queue acquires
network resources for certain amount of time. And,
these algorithm need to be based on local information.

From the perspective of network performance, we would like the
scheduling algorithm to be such that the queues in network remain
as small as possible for the largest possible range of arrival
rate vectors.  To formalize this notion of performance, we
define the capacity regions for both of these models. Let $\Lamb_{w}$
be the capacity region of the wireless network model defined as
\begin{eqnarray}
 \Lamb_{w} & = & \Conv(\cI(G))\nonumber\\
 &=&\left\{ \y \in \Rp^n : \y \leq \sum_{\sig \in \cI(G)} \alpha_{\sig} \sig,
~\mbox{with}~\alpha_{\sig} \geq 0,~\mbox{and}~\sum_{\sig \in \cI(G)} \alpha_{\sig} \leq 1 \right\}.
\label{eq:c1}
\end{eqnarray}
And let $\Lamb_{cs}$ be the capacity region of the buffered circuit switched
network defined as
\begin{eqnarray}
 \Lamb_{cs} & = &  \Conv(\X)\nonumber\\
 &=&\left\{ \y \in \Rp^n : \y \leq \sum_{\z \in \X} \alpha_{\z} \z,
~\mbox{with}~\alpha_{\z} \geq 0,~\mbox{and}~\sum_{\z \in \X} \alpha_{\z} \leq 1 \right\}.
\label{eq:c2}
\end{eqnarray}
Intuitively, these bounds of capacity regions comes from
the fact that any algorithm produces the `service rate' from $\cI(G)$ (or $\X$) each
time and hence the time average of the service rate induced by
any algorithm must belong to its convex hull. Therefore, if arrival rates
$\lamb$ can be `served well' by any algorithm then it must belong to $\Conv(\cI(G))$ (or $\Conv(\X)$).

Motivated by this, we call an arrival rate vector $\lamb$ admissible
if $\lamb \in \Lamb$, and say that an arrival rate vector $\lamb$ is strictly admissible if
$\lamb \in {\Lamb}^o$, where $\Lamb^o$ is the interior of $\Lamb$ formally
defined as
$$ \Lamb^o = \left\{ \lamb \in \mathbb{R}^n_{+} :
\lamb < \lamb^* \text{ componentwise, for some }\lamb^* \in \Lamb \right\}. $$
Equivalently, we may say that the network is \emph{under-loaded}.
Now we are ready to define a performance metric for a scheduling
algorithm. Specifically, we desire the scheduling algorithm to be
throughput optimal as defined below.

\begin{definition}[throughput optimal] {\em A scheduling algorithm is
called \\ { throughput optimal}, or {stable}, or
providing {100\% throughput}, if for  any $\lamb \in \Lamb^o$ the
(appropriately defined) underlying network Markov process is
\emph{positive (Harris) recurrent}}.
\end{definition}



\subsection{The {\MW} Algorithm}

Here we describe a popular algorithm known as the maximum weight or
in short \MW~algorithm that was proposed by Tassiulas and Ephremides \cite{TE92}.
It is throughput optimal for a large class of network models. The
algorithm readily applies to the wireless network model. However,
it does not apply (exactly or any variant of it) in the case of
circuit switched network. Further, this algorithm requires solving
a hard combinatorial problem each time slot, e.g. maximum weight independent
set for wireless network, which is NP-hard in general. Therefore, it's far
from being practically useful. In a nutshell, the randomized
algorithm proposed in this paper will overcome these drawbacks
of the \MW~algorithm while retaining the throughput optimality
property. For completeness, next we provide a brief description
of the \MW~algorithm.

In the wireless network model, the \MW~algorithm chooses a
schedule $\sig(\tau) \in \cI(G)$ every time step $\tau \in \Zp$
as follows\footnote{Here and everywhere else, we use notation $\bu \cdot \bv = \sum_{i=1}^d u_i v_i$
for any $d$-dimensional vectors $\bu, \bv \in \R^d$. That is,
$\bQ(\tau)\cdot \rhO =\sum_i Q_i(\tau)\cdot\rho_i$.}:
$$ \sig(\tau) \in \arg\max_{\rhO \in \cI(G)} \bQ(\tau)\cdot \rhO. $$
In other words, the algorithm changes its decision once in unit
time utilizing the information $\bQ(\tau)$. The maximum weight
property allows one to establish positive recurrence by means of
Lyapunov drift criteria (see Lemma \ref{lem:two}) when the arrival
rate is admissible, i.e. $\lamb \in \Lamb^o_{w}$. However, as indicated
above picking such a schedule every time is computationally burdensome.
A natural generalization of this, called \MW-$f$ algorithm, that uses
weight $f(Q_i(\cdot))$ instead of $Q_i(\cdot)$ for an increasing non-negative
function $f$ also leads to throughput optimality (cf. see \cite{stolyar, SW06, SW07}).

For the buffered circuit switched network model, the \MW~algorithm is
not applicable. To understand this, consider the following. The
\MW~algorithm would require the network to schedule active flows as
$\z(\tau)\in\X$ where
$$ \z(\tau) \in \arg\max_{\z \in \X} \bQ(\tau)\cdot\z.$$
This will require the algorithm to possibly preempt some of
active flows without the completion of their service requirement. And
this is not allowed in this model.

\section{Main Result: Simple \& Efficient Randomized Algorithms}\label{sec:main}

As stated above, the \MW~algorithm is not practical for wireless network
and is not applicable to circuit switched network. However, it has the
desirable throughput optimality property. As the main result of this
paper, we provide a simple, randomized algorithm that is applicable to
both wireless and circuit switched network as well as it's throughput
optimal. The algorithm requires each node (or queue) to perform only
a few logical operations at each time step, it's distributed and effectively
it `simulates' the \MW-$f$ algorithm for an appropriate choice of $f$.
In that sense, it's a simple, randomized, distributed implementation
of the \MW~algorithm.

In what follows, we shall describe algorithms for wireless network
and buffered circuit switched network respectively. We will state
their throughput optimality property. While these algorithms seem
different, philosophically they are very similar -- also, witnessed
in the commonality in their proofs.

\subsection{Algorithm for Wireless Network}\label{ssec:algo1}

Let $t \in \R_+$ denote the time index and $\bW(t) = [W_i(t)] \in \R_+^n$
be the vector of weights at the $n$ queues. The $\bW(t)$ will be
a function of $\bQ(t)$ to be determined later. In a nutshell, the
algorithm described below will choose a schedule $\sig(t) \in \cI(G)$
so that the weight, $\bW(t)\cdot \sig(t)$, is as large as possible.

The algorithm is randomized and asynchronous.  Each node (or queue)
has an independent Exponential clock of rate $1$ (i.e. Poisson process
of rate $1$). Let the $k$th tick of the clock of node $i$ happen at time
$T_k^i$; $T^i_0 = 0$ for all $i$. By definition $T_{k+1}^i - T_k^i, k \geq 0,$
are i.i.d. mean $1$ Exponential random variables. Each node changes
its scheduling decision only at its clock ticks. That is, for
node $i$ the $\sigma_i(t)$ remains constant for $t \in (T^i_k, T^i_{k+1}]$.
Clearly, with probability $1$ no two clock ticks across nodes happen at the same time.

Initially, we assume that $\sigma_i(0) = 0$ for all $i$.
The node $i$ at the $k$th clock tick, $t = T^i_k$, {\em listens} to
the medium and does the following:

\begin{itemize}

\item[$\circ$] If any neighbor of $i$ is transmitting, i.e.
$\sigma_j(t) = 1$ for some $j\in \cN(i)=\{j':(i,j')\in E\}$,
then set $\sigma_i(t^+) = 0$.

\item[] ~~

\item[$\circ$] Else, set
$$ \sigma_i(t^+) = \begin{cases} 1 & \quad \text{with probability} \quad
\frac{\exp(W_i(t))}{1+\exp(W_i(t))} \\
 0 & \quad\text{otherwise.}
 \end{cases}
 $$
\end{itemize}

Here, we assume that if $\sigma_i(t) = 1$, then node $i$ will
always transmit data irrespective of the value of $Q_i(t)$
so that the neighbors of node $i$ can infer $\sigma_i(t)$
by {\em listening} to the medium.

\subsubsection{Throughout Optimality}

The above described algorithm for wireless network is throughput
optimal for an appropriate choice of weight $\bW(t)$.
Define weight $W_i(t)$ at node $i$ in the algorithm
for wireless network as
\begin{equation}
W_i(t) = \max\left\{f(Q_i(\lfloor t\rfloor)),\sqrt{f(Q_{\max}(\lfloor t\rfloor))}\right\},
\label{eq:weight1}
\end{equation}
where\footnote{Unless stated otherwise, here and everywhere else
the $\log(\cdot)$ is natural logarithm, i.e. base $e$.} $f(x) = \log \log (x+e)$
and $Q_{\max}(\cdot) = \max_{i} Q_i(\cdot)$. The non-local information
of $Q_{\max}(\lfloor t\rfloor))$ can be replaced by its approximate
estimation that can computed through a very simple distributed algorithm.
This does not alter the throughput optimality property of the algorithm.
A discussion is provided in Section \ref{sec:discuss}.  We state the
following property of the algorithm.
\begin{theorem}\label{thm:main1}
Suppose the algorithm of Section \ref{ssec:algo1}
uses the weight as per \eqref{eq:weight1}. Then, for any $\lamb \in \Lamb_w^o$,
the network Markov process is positive Harris recurrent.
\end{theorem}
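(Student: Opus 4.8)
The plan is to establish positive Harris recurrence of the Markov process $X(t)=(\bQ(t),\sig(t))$ via the Foster--Lyapunov drift criterion of Lemma~\ref{lem:two} (in its state-dependent return-time form), using the Lyapunov function $L(\bQ)=\sum_{i=1}^n F(Q_i)$ with $F(q)=\int_0^q f(y)\,dy$, $f(x)=\log\log(x+e)$, and exploiting a separation of time scales: the queues move at rate $O(1)$, while the scheduling chain of Section~\ref{ssec:algo1} mixes on a $\mathrm{polylog}(\|\bQ\|)$ time scale.

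\textbf{Scheduling chain.} For a frozen weight vector $\bw\in\Rp^n$ the update of Section~\ref{ssec:algo1} is exactly the single-site Glauber dynamics on $\cI(G)$, reversible with respect to the product-form Gibbs measure $\pi_{\bw}(\sig)\propto \exp(\bw\cdot\sig)\,\ind{\sig\in\cI(G)}$; from Section~\ref{sec:prelim} this chain has mixing time $T_{\mathrm{mix}}(\bw)\le h(n)\exp(cn\|\bw\|_\infty)$. I need two facts. \emph{(i) Concentration on heavy schedules:} since $e^{M(\bw)}\le Z_{\bw}\le 2^n e^{M(\bw)}$ with $M(\bw):=\max_{\rho\in\cI(G)}\bw\cdot\rho$, one gets $\Pr_{\pi_{\bw}}(\bw\cdot\sig\le M(\bw)-a)\le 2^ne^{-a}$, hence $\E_{\pi_{\bw}}[\bw\cdot\sig]\ge M(\bw)-c_1(n)$ with $c_1(n)=n\log2+1$. \emph{(ii) Sandwiched weights:} by \eqref{eq:weight1}, $\bW(t)\ge f(\bQ(\lfloor t\rfloor))$ componentwise and $0\le W_i(t)-f(Q_i(\lfloor t\rfloor))\le\sqrt{f(Q_{\max}(\lfloor t\rfloor))}$, so $\|\bW(t)\|_\infty=f(Q_{\max}(\lfloor t\rfloor))$ once $Q_{\max}$ is large. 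Combining (i)--(ii) with $\bW\ge f(\bQ)$ (which gives $M(\bW)\ge \max_{\rho\in\cI(G)}\sum_i f(Q_i)\rho_i$) yields, for $\bw=\bW(t)$,
\[
\sum_i f(Q_i)\,\E_{\pi_{\bW}}[\sigma_i]\;\ge\;\max_{\rho\in\cI(G)}\sum_i f(Q_i)\rho_i\;-\;c_1(n)\;-\;n\sqrt{f(Q_{\max})}.
\]

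\textbf{Drift.} Fix $t_0$ and condition on $X(t_0)$ with $Q_{\max}(t_0)$ large. Let $T_{\mathrm b}=\Theta\big(T_{\mathrm{mix}}(\bW(t_0))\,\log f(Q_{\max})\big)$ and $T=C(n,\delta)\,T_{\mathrm b}$ for a large constant $C$; both are $(\log Q_{\max})^{\Theta(n)}=\mathrm{polylog}(Q_{\max})$. Over $[t_0,t_0+T]$ every $Q_i$ changes by at most $O(T)$, and the point of $f=\log\log(\cdot+e)$ is that $f(Q_{\max}+O(T))-f(Q_{\max})\to0$ when $T$ is polylogarithmic; thus the weights $\bW(\cdot)$ (piecewise constant, changing only at integer times by the $\lfloor\cdot\rfloor$ in \eqref{eq:weight1}) move by $o(1)$ over the window, so the scheduling chain is ``slowly varying'' in the sense of Section~\ref{sec:prelim}, and after the burn-in the law of $\sig(s)$ stays within total variation $o(1/f(Q_{\max}))$ of $\pi_{\bW(t_0)}$ for all $s\in[t_0+T_{\mathrm b},t_0+T]$. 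Writing $\Delta A_i(s):=A_i(s)-A_i(s^-)\in\{0,1\}$,
\[
L(\bQ(t_0{+}T))-L(\bQ(t_0))\;=\;\sum_i\Big(\!\!\sum_{t_0<s\le t_0+T}\!\!f\big(Q_i(s^-)\big)\,\Delta A_i(s)\;-\;\int_{t_0}^{t_0+T}\!\!f\big(Q_i(s)\big)\,\sigma_i(s)\,\ind{Q_i(s)>0}\,ds\Big);
\]
take $\E[\,\cdot\mid X(t_0)]$, replace each $f(Q_i(s))$ by $f(Q_i(t_0))$ (slow variation; total error $o(T)$), use $\E[A_i(t_0,t_0+T)\mid X(t_0)]=\lambda_i T$, discard the service accrued during the burn-in, and for queues with $Q_i(t_0)>T$ (which cannot empty) replace the time-average of $\sigma_i$ over $[t_0+T_{\mathrm b},t_0+T]$ by $\E_{\pi_{\bW(t_0)}}[\sigma_i]$ (error $o(T)$ even after multiplying by $f(Q_{\max})$, since $T_{\mathrm b}=o(T)$ and the TV error is $o(1/f(Q_{\max}))$); the remaining ``small'' queues contribute at most $O\big(n\,f(T)\,T\big)=o\big(f(Q_{\max})\,T\big)$. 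Fixing $\delta>0$ with $(1+\delta)\lamb\in\Lamb_w=\Conv(\cI(G))$ (possible since $\lamb\in\Lamb_w^o$) gives $\sum_i f(Q_i)\lambda_i\le\tfrac{1}{1+\delta}\max_{\rho\in\cI(G)}\sum_i f(Q_i)\rho_i=:\tfrac{1}{1+\delta}M$, and $M\ge f(Q_{\max})$ since placing all mass on the heaviest queue is feasible. Plugging the displayed bound above,
\[
\E\big[L(\bQ(t_0{+}T))-L(\bQ(t_0))\mid X(t_0)\big]\;\le\;T\Big(-\tfrac{\delta}{1+\delta}\,M+c_1(n)+o\big(f(Q_{\max})\big)\Big)\;\le\;-\tfrac{T}{2}
\]
once $Q_{\max}(t_0)\ge Q^\star(n,\delta)$, because $\sqrt{f(Q_{\max})}$, $f(T)=O(\log\log\log Q_{\max})$ and $1/f(Q_{\max})$ are all $o(f(Q_{\max}))\le o(M)$. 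This is the negative drift required off the compact set $B=\{\bQ:Q_{\max}\le Q^\star\}$; on $B$ both $T$ and the per-arrival increments of $L$ are bounded, so $\sup_{x\in B}\E_x[L(X(t+T))]<\infty$. Together with the fact that bounded sets are petite for $X$ (from any state, with positive probability and within bounded time no arrival occurs, the schedule reaches $\bzero$, the queues drain below any level, and then any target schedule is reached, yielding the minorization), Lemma~\ref{lem:two} gives positive Harris recurrence.

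\textbf{Main obstacle.} The delicate step is the time-scale separation: one must show, \emph{uniformly in} $\|\bQ(t_0)\|$, that the scheduling chain with slowly drifting weights tracks the moving Gibbs measure $\pi_{\bW}$ so closely that the induced service-rate error, after multiplication by the weights $f(Q_i)=O(f(Q_{\max}))$, is still $o(f(Q_{\max}))$ and hence negligible against $-\tfrac{\delta}{1+\delta}M$. This is exactly what forces both ingredients of the design: the Glauber mixing bound must be polynomial in $e^{\|\bW\|_\infty}$, i.e. $\mathrm{polylog}(Q_{\max})$, and $f$ must grow slowly enough that a $\mathrm{polylog}(Q_{\max})$-length window perturbs the weights by $o(1)$ -- which is why $f=\log\log(\cdot+e)$ is chosen; the floor $\sqrt{f(Q_{\max})}$ in \eqref{eq:weight1} is what keeps $\bW$ sandwiched between $f(\bQ)$ and $(1+o(1))f(Q_{\max})\bone$, so that the concentration of $\pi_{\bW}$ on heavy schedules transfers to the $f$-weighted service rates entering the Lyapunov drift.
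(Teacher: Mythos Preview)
Your proposal is correct and follows essentially the same route as the paper: the same Lyapunov function $L=\sum_i F(Q_i)$ with $F'=f$, the same time-scale separation (Glauber mixing is $\mathrm{polylog}(Q_{\max})$ because $\|\bW\|_\infty=f(Q_{\max})$, while $f$ varies by $o(1)$ over a polylog window), the same near-MW service bound via Gibbs concentration (your tail bound is equivalent to the paper's variational Proposition~\ref{prop:goodpi}), and the same petite-set/minorization argument. The only cosmetic differences are that the paper takes the observation-to-burn-in ratio $b_2/b_1=\Theta(\log Q_{\max})$ rather than a large constant $C(n,\delta)$, and it proves the key tracking estimate (your ``law of $\sig(s)$ stays within TV $o(1/f(Q_{\max}))$ of $\pi_{\bW(t_0)}$'') explicitly as Lemma~\ref{lem:adiabetic1}; your displayed drift bound also silently absorbs the burn-in cost $T_bM$ into the $o(f(Q_{\max}))$ term, which is fine once $C$ is chosen large enough.
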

In this paper, Theorem \ref{thm:main1} (as well as Theorem \ref{thm:main2})
is established for the choice of $f(x) = \log\log (x + e)$. However, the proof technique of this paper extends naturally for any choice of $f : \Rp \to \Rp$ that satisfies the following conditions: $f(0) = 0$, $f$ is a monotonically strictly increasing function, $\lim_{x\to\infty} f(x) = \infty$ and
\[
\lim_{x\to\infty} \exp\Bigl(f(x)\Bigr)~f'\Bigl(f^{-1}(\delta f(x))\Bigr) = 0, \quad \text{for any} \quad \delta \in (0,1).
\]
Examples of such functions includes: $f(x) = \beps(x) \log (x+1)$, where
$\beps(0)=1$, $\beps(x)$ is monotonically decreasing function to $0$
as $x \to \infty$; $f(x) = \sqrt{\log (x + 1)}$; $f(x) = \log \log \log (x + e^e)$, etc.

\subsection{Algorithm for Buffered Circuit Switched Network}\label{ssec:algo2}

In a buffered circuit switched network, the scheduling algorithm
decided when each of the ingress node (or queue) should request
the network for availability of resources (links) along its route
and upon positive response from the network, it acquires the
resources. Our algorithm to make such a decision at each node
is described as follows:

\begin{itemize}
\item[$\circ$] Each ingress node of a route, say $R_i$, generates
request as per a time varying Poisson process whose rate at time
$t$ is equal to $\exp(W_i(t))$.

\item[$\circ$] If the request generated by an ingress node of route,
say $R_i$, is accepted, a flow from the head of its queue leaves
the queue and acquire the resources in the network. Else, do nothing.
\end{itemize}

In above, like the algorithm for wireless network we assume that if
the request of ingress node $i$ is accepted, a new flow will acquire
resources in the network along its route. This is irrespective of
whether queue is empty or not -- if queue is empty, a {\em dummy}
flow is generated. This is merely for technical reasons.

\subsubsection{Throughput Optimality}

We describe a specific choice of weight $\bW(t)$ for which the
algorithm for circuit switched network as described above is
throughput optimal. Specifically, for route $R_i$ its weight at time
$t$ is defined as
\begin{equation}
W_i(t)  =  \max\left\{f(Q_i(\lfloor t\rfloor)),\sqrt{f({Q}_{\max}(\lfloor t\rfloor))} \right\},
\label{eq:weight2}
\end{equation}
where $f(x) = \log \log (x+e)$. The remark about distributed estimation of
$Q_{\max}(\lfloor t\rfloor))$ after \eqref{eq:weight1} applies here as well.
We state the following property of the algorithm.
\begin{theorem}\label{thm:main2}
Suppose the algorithm of Section \ref{ssec:algo2}
uses the weight as per \eqref{eq:weight2}. Then, for any
$\lamb \in \Lamb_{cs}^o$, the network Markov process
is positive Harris recurrent.
\end{theorem}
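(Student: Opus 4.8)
The plan is to mirror the argument for Theorem \ref{thm:main1}, adapting it to the continuous-time, loss-network structure of the circuit switched model. The underlying Markov process is the triple $(\bQ(t), \bz(t), \bW(t))$ — or more precisely the process that tracks the queue sizes, the currently active flow configuration $\bz(t) \in \X$, and the residual service times. We want to show this process is positive Harris recurrent for every $\lamb \in \Lamb_{cs}^o$. The standard machinery (stated earlier, Lemma \ref{lem:two}-type drift criteria together with identification of a petit set) reduces this to exhibiting a Lyapunov function with negative drift outside a compact set, uniformly over the fast-variable state. I would take the Lyapunov function to be of the form $L(\bQ) = \sum_i F(Q_i)$ where $F' = f = \log\log(\cdot + e)$ (or, as in the companion proof, a function built from $f$ such as $\sum_i \int_0^{Q_i} f(u)\,du$), so that its drift involves the inner product $\bW(t) \cdot (\lamb - \text{service rate})$.

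The core of the proof is the time-scale separation. Fix a large time horizon and condition on the queue sizes being large, so that the weights $W_i(t)$ are essentially frozen over the horizon (this is exactly what the slowly-varying choice $f = \log\log(\cdot+e)$ buys us: $Q_i$ changes by $O(\text{horizon})$ while $f(Q_i)$ changes negligibly). With the weights frozen at values $\bw$, the schedule process $\bz(t)$ is precisely the stochastic loss network / reversible Markov chain on $\X$ whose stationary distribution is the product-form measure $\pi_{\bw}(\z) \propto \prod_i e^{w_i z_i}/z_i!$ referenced in the excerpt. One shows: (i) this stationary distribution puts almost all its mass on configurations $\z$ that (nearly) maximize $\bw \cdot \z$ over $\X$, hence on configurations serving the queues at a rate dominating $\lamb$ when the largest queues are the bottleneck; and (ii) the chain mixes fast enough — within the conditioning horizon — that the actual time-average of $\bz(t)$ is close to its stationary expectation. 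Combining (i) and (ii) gives that the realized service rate over the horizon is, with high probability, close to $\arg\max_{\z \in \X}\bw\cdot \z$, which by the max-weight property dominates $\lamb$ coordinate-wise on the heavy queues; plugging this into the drift computation yields the required strict negative drift.

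The main obstacle — and the place where the circuit-switched case genuinely differs from wireless — is controlling the mixing time of the loss-network chain on $\X$ as a function of the weights $\bw$, and then certifying that this mixing time is dominated by the allowed conditioning horizon, which in turn must be short relative to the time it takes $f(Q_i)$ to move appreciably. The loss-network dynamics are non-reversible in the "residual service time" coordinate and have request rates $e^{w_i}$ that are exponentially large in $w_i$, so one must bound mixing carefully (e.g. via a coupling or conductance/spectral-gap estimate, or by appealing to the structure that keeps $\bz(t)$ within the integral polytope $\X$ with bounded per-link capacities $C_e$). This is where the precise growth condition on $f$ stated after Theorem \ref{thm:main1} — namely $\lim_{x\to\infty}\exp(f(x))\,f'(f^{-1}(\delta f(x))) = 0$ — enters: it is exactly the inequality ensuring the mixing time (which scales like $e^{\Theta(\max_i w_i)}$) is negligible compared to the queue-drift time scale. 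The remaining steps — verifying the petit-set condition (using that the empty/near-empty configurations are reached from any bounded set in bounded expected time, since service times are unit-mean Exponential and capacities finite) and assembling the Foster–Lyapunov argument — are routine given the wireless-case template.
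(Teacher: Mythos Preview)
Your proposal is essentially the paper's own four-step argument: (i) the product-form stationary distribution $\pi$ of the loss-network chain nearly attains the max weight (paper's Lemma~\ref{LEM:GOODPI}); (ii) the actual schedule law $\mu(t)$ is close to $\pi(0)$ on a window $[b_1,b_2]$ with $b_1,b_2=\mathsf{polylog}(Q_{\max}(0))$ (paper's Lemma~\ref{lem:adiabetic1}, relying on the mixing bound of Lemma~\ref{lem:lossmixing}); (iii) Lyapunov drift; (iv) petite set. Two small corrections are worth making. First, because service times are unit-mean Exponential, the process $(\bQ(\tau),\bz(\tau))$ is already Markov---residual service times need not be carried in the state, and the loss-network chain on $\X$ is in fact \emph{reversible} with the product-form stationary law (this is exactly what the paper exploits via Lemma~\ref{lem:lossmixing}); so the mixing analysis is a straightforward conductance/Cheeger bound, not the harder non-reversible coupling you anticipate. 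Second, the paper takes the Lyapunov function to be $L(X)=\sum_i F(R_i)$ with $R_i=Q_i+z_i$ rather than $\sum_i F(Q_i)$; since $z_i\le C_{\max}$ the difference is $O(1)$, but working with $\bR$ makes the one-step increment exactly $A(\tau,\tau+1)-D(\tau,\tau+1)$ and cleans up the drift computation.
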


\section{Technical Preliminaries}\label{sec:prelim}


\subsection{Finite State Markov Chain}\label{ssec:fsmc}

Consider a discrete-time, time homogeneous Markov chain
over a finite state space $\Omega$. Let its probability
transition matrix be $P =[P_{ij}]\in \Rp^{|\Omega| \times |\Omega|}$.
If $P$ is irreducible and aperiodic, then the Markov chain
is known to have a unique stationary distribution $\pi = [\pi_i] \in \Rp^{|\Omega|}$
and it is ergodic, i.e.
$$\lim_{\tau \to\infty} P^\tau_{ji} \to \pi_i, \qquad \text{for any} \qquad i,j \in \Omega. $$
The adjoint of $P$, also known as the time-reversal of $P$, denoted by
$P^*$ is defined as follows:
\begin{eqnarray}
 \pi_iP^*_{ij} & = & \pi_j P_{ji}, \qquad \text{for any}\qquad i, j \in \Omega.\label{eq:db}
 \end{eqnarray}
By definition, $P^*$ has $\pi$ as its stationary distribution as well.
If $P = P^*$ then $P$ is called \emph{reversible} or {\em time reversible}.

Similar notions can be defined for a continuous time Markov proces over
$\Omega$. To this end, let $P(s,t) = [P_{ij}(s,t)] \in \Rp^{|\Omega| \times |\Omega|}$
denote its transition matrix over time interval $[s,t]$. The Markov
process is called time-homogeneous if $P(s,t)$ is stationary, i.e.
$P(s,t)=P(0,t-s)$ for all $0\leq s < t$ and is called reversible if
$P(s,t)$ is reversible for all $0\leq s < t$. Further, if $P(0,t)$
is irreducible and aperiodic for all $t>0$, then this time-homogeneous
reversible Markov process has a unique stationary distribution $\pi$
and it is ergodic, i.e.
$$\lim_{t \to\infty} P_{ji}(0,t) \to \pi_i, \qquad \text{for any} \qquad i,j \in \Omega. $$

\subsection{Mixing Time of Markov Chain}

Given an ergodic finite state Markov chain, the distribution at
time $\tau$ converge to the stationary distribution starting from
any initial condition as described above. We will need quantitative
bounds on the time it takes for them to reach ``close'' to their
stationary distribution. This time to reach stationarity is known
as the {\em mixing time} of the Markov chain. Here we introduce
necessary preliminaries related to this notion. We refer an interested
reader to survey papers \cite{LW, MT06}. We start with the
definition of distances between probability distributions.
\begin{definition}(Distance of measures) Given two probability
distributions $\nu$ and $\mu$ on a finite space $\Omega$, we define
the following two distances. The total variation distance, denoted
as $\norm{\nu - \mu}_{TV}$ is
$$ \norm{\nu - \mu}_{TV} = \frac12\sum_{i\in\Omega}\abs{\nu(i)-\mu(i)}. $$
The \emph{$\chi^2$ distance}, denoted as $\norm{\frac{\nu}{\mu} - 1}_{2,{\mu}}$
is
$$\norm{\frac{\nu}{\mu}-1}_{2,\mu}^2 = \norm{\nu - \mu}_{2,\frac{1}{\mu}}^2 =  \sum_{i\in\Omega}{\mu(i)\p{\frac{\nu(i)}{\mu(i)}-1}^2}. $$
More generally, for any two vectors $\bu,\bv  \in \R_+^{|\Omega|}$, we define
$$ \norm{\bv}_{2,\bu}^2 = \sum_{i \in \Omega} u_i v_i^2.$$
\end{definition}
We make note of the following relation between the two distances defined above:
using the Cauchy-Schwarz inequality, we have
\begin{equation}
\norm{\frac{\nu}{\mu}-1}_{2,\mu} \geq 2 \norm{\nu - \mu}_{TV}. \label{eq:chiTV}
\end{equation}
Next, we define a matrix norm that will be useful in determining
the rate of convergence or the mixing time of a finite-state Markov chain.
\begin{definition}[Matrix norm] 
Consider a $|\Omega| \times |\Omega|$ non-negative valued matrix
$A \in \R_+^{|\Omega| \times |\Omega|}$ and a given vector $\bu \in \R_+^{|\Omega|}$.
Then, the matrix norm of $A$ with respect to $\bu$ is defined as follows:
$$\|A\|_\bu = \sup_{\bv : \E_{\bu}[\bv]=0 } {\frac{{\|A \bv\|}_{2,\bu}}{\|\bv\|_{2,\bu}}}, $$
where $\E_{\bu}[\bv] = \sum_{i} u_i v_i$.
\end{definition}
 It can be easily checked that the above definition of matrix norm
 satisfies the following properties.
 \begin{itemize}
 \item[{\bf P1}.] For matrices $A, B \in \R_+^{|\Omega| \times |\Omega|}$ and $\pi \in \R_+^{|\Omega|}$
 $$ \|A + B \|_\pi \leq \|A  \|_\pi + \|B \|_\pi. $$
 \item[{\bf P2}.] For matrix $A \in \R_+^{|\Omega| \times |\Omega|}$, $\pi \in \R_+^{|\Omega|}$ and $c \in \R$,
 $$ \|c A  \|_\pi = |c| \|A  \|_\pi.$$
 \item[{\bf P3}.] Let $A$ and $B$ be transition matrices of reversible Markov chains, i.e.\ $A = A^*$
 and $B = B^*$. Let both of them have $\pi$ as their unique stationary distribution. Then,
 $$ \| A B  \|_\pi \leq \|A  \|_\pi \|B  \|_\pi.$$
 \item[{\bf P4}.] Let $A$ be the transition matrix of a reversible Markov chain, i.e. $A = A^*$. Then,
$$\|A\|\leq \lambda_{\max},$$
where $\lambda_{\max}=\max\{|\lambda| \neq 1:\mbox{$\lambda$ is an eigenvalue of $A$}\}$.
 \end{itemize}
For a probability matrix $P$, we will mostly be interested in
the matrix norm of $P$ with respect to its stationary distribution $\pi$,
i.e.\ $\|P\|_\pi$. Therefore, in this paper if we use a matrix norm for
a probability matrix without mentioning the reference measure, then it is
with respect to the stationary distribution. That is, in the above example
$\|P\|$ will  mean $\|P\|_\pi$.

With these definitions, it follows that for any distribution $\mu$ on $\Omega$
\begin{equation}\label{eq:mixing}
\norm{\frac{\mu P}{\pi}-1}_{2,\pi}\leq \|P^*\|\norm{\frac{\mu}{\pi}-1}_{2,\pi},
\end{equation}
since $\E_{\pi}\left[\frac{\mu}{\pi}-1\right]=0$, where
$\frac{\mu}{\pi} = [\mu(i)/\pi(i)]$. The Markov chain of our interest, Glauber
dynamics, is reversible i.e.\ $P = P^*$.
 Therefore, for a reversible Markov chain starting with initial distribution $\mu(0)$, the
 distribution $\mu(\tau)$ at time $\tau$ is such that
 \begin{eqnarray}
 \norm{\frac{\mu(\tau)}{\pi}-1}_{2,\pi}\leq \|P\|^{\tau}\norm{\frac{\mu(0)}{\pi}-1}_{2,\pi}.\label{eq:mixing2}
 \end{eqnarray}
 Now starting from any state $i$, i.e.\ probability distribution with unit mass on
 state $i$, the initial distance $\norm{\frac{\mu(0)}{\pi}-1}_{2,\pi}$ in the worst
 case is bounded above by {$\sqrt{1/\pi_{\min}}$} where {$\pi_{\min} = \min_i \pi_i$}. Therefore,
 for any $\delta > 0$ we have $\norm{\frac{\mu(\tau)}{\pi}-1}_{2,\pi} \leq \delta$
 for any $\tau$ such that\footnote{Throughout this paper, we shall
utilize the standard order-notations: for two functions
$g, f : \R_+ \to \R_+$, $g(x) = \omega(f(x))$ means
$\liminf_{x\to\infty} g(x)/f(x) = \infty$; $g(x) = \Omega(f(x))$ means
$\liminf_{x\to\infty} g(x)/f(x) > 0$; $g(x) = \Theta(f(x))$ means
$0 < \liminf_{x\to\infty} g(x)/f(x) \leq \limsup_{x\to\infty} g(x)/f(x) < \infty$; $g(x) = O(f(x))$ means  $\limsup_{x\to\infty} g(x)/f(x) < \infty$;
$g(x) = o(f(x))$ means  $\limsup_{x\to\infty} g(x)/f(x) =0$.}
 {$$ \tau \geq \frac{\log 1/\pi_{\min} + \log 1/\delta}{\log 1/\|P\|} ~=~\Theta\left(\frac{\log 1/\pi_{\min} + \log 1/\delta}{1-\|P\|}\right).$$}

 This suggests that the ``mixing time'', i.e.\ time to reach (close to) the stationary
 distribution of the Markov chain scales inversely with $1-\|P\|$. Therefore, we will
 define the ``mixing time'' of a Markov chain with transition matrix $P$ as $1/(1-\|P\|)$.

\subsection{Glauber Dynamics \& Algorithm for Wireless Network}\label{ssec:glauber}

We will describe the relation between the algorithm for wireless network
(cf. Section \ref{ssec:algo1}) and a specific irreducible, aperiodic,
reversible Markov chain on the space of independent sets $\cI(G)$ or
schedules for wireless network with graph $G=(V,E)$. It is also known as
the Glauber dynamics, which is used by the standard
Metropolis-Hastings \cite{MRRTT53, H70} sampling mechanismthat
is described next.

\subsubsection{Glauber Dyanmics \& Its Mixing Time} We shall start off
with the definition of the Glauber dynamics followed by a useful bound
on its mixing time.

\begin{definition}[Glauber dynamics]
Consider a graph $G = (V,E)$ of $n = |V|$ nodes
with node weights $\bW = [W_i] \in \Rp^n$. The Glauber
dynamics based on weight $\bW$, denoted by $GD(\bW)$,
is a Markov chain on the space of independent sets
of $G$, $\cI(G)$. The transitions of this Markov chain
are described next. Suppose the Markov chain is
currently in the state $\sig \in \cI(G)$. Then, the next
state, say $\sig'$ is decided as follows: pick a node $i \in V$
uniformly at random and
\begin{enumerate}
\item[$\circ$] set $\sigma_j' = \sigma_j$ for $j \neq i$,
\item[$\circ$] if $\sigma_k = 0$ for all $k \in \cN(i)$, then set
$$ \sigma_i' = \begin{cases}
1  &\mbox{with probability } \frac{\exp(W_i)}{1 + \exp(W_i)}\\
0 &\mbox{otherwise,}
\end{cases}$$
\item[$\circ$] else set $\sigma_i' = 0$.
\end{enumerate}
\end{definition}
It can be verified that the Glauber dynamics $GD(\bW)$ is reversible
with stationary distribution $\pi$ given by
\begin{eqnarray}
\pi_{\sig} & \propto & \exp(\bW\cdot\sig), \qquad \text{for any} \quad \sig \in \cI(G).\label{eq:glauber}
\end{eqnarray}

Now we describe bound on the mixing time of Glauber dynamics.
\begin{lemma}\label{lem:glaumixing}
Let $P$ be the transition matrix of the Glauber dynamics $GD(\bW)$ with $n$ nodes. Then,
\beq \|P\| &\leq&  1 - \frac{1}{n^2 2^{2n+3} \exp\left(2(n+1)W_{\max}\right)},\\
 \left\|e^{n(P-I)}\right\| & \leq & 1 - \frac{1}{n 2^{2n+4} \exp(2(n+1)W_{\max})}.
\eeq
\end{lemma}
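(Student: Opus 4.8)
The plan is to bound $\|P\|$ via the spectral gap by exhibiting a lower bound on the smallest "holding-type" probability of a single Glauber step, and then to transfer this to the continuized chain $e^{n(P-I)}$. For the first bound, I would invoke property \textbf{P4}, which says $\|P\| \le \lambda_{\max}$, the second-largest eigenvalue modulus of the reversible chain. The standard route to controlling $\lambda_{\max}$ is to write the spectral gap in terms of a conductance or, more simply, to use the crude but robust bound that for a lazy-ish reversible chain $1 - \lambda_{\max}$ is at least a constant times $\pi_{\min}$ times the minimum transition probability along any edge of the chain's graph — or even more directly, to lower-bound $1-\|P\|$ by the "$\Phi^2/2$" Cheeger-type estimate and then bound the conductance $\Phi$ from below. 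Concretely: the state space $\cI(G)$ has size at most $2^n$; any independent set is connected to the empty set by flipping off its at-most-$n$ active vertices one at a time, and each such single-vertex transition has probability at least $\tfrac1n \cdot \tfrac{1}{1+\exp(W_{\max})} \ge \tfrac1n \cdot \tfrac12 \exp(-W_{\max})$ (the "turn off" direction is even easier, probability $\ge \tfrac1n$). So the chain's underlying graph has diameter $\le n$ and minimum edge-probability $\ge \tfrac{1}{2n}\exp(-W_{\max})$. Meanwhile $\pi_{\min} \ge \exp(-n W_{\max})/Z$ with $Z = \sum_{\sig\in\cI(G)}\exp(\bW\cdot\sig) \le 2^n \exp(n W_{\max})$, so $\pi_{\min} \ge 2^{-n}\exp(-2nW_{\max})$. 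Feeding these into a canonical-paths or Cheeger bound (e.g. $1-\lambda_{\max} \ge \pi_{\min}^2 \cdot (\text{min edge prob})^2 / \text{poly}(n)$-type estimates, or the Diaconis–Stroock/Sinclair path bound with the $\le n$-length canonical paths through $\emptyset$) yields $1-\|P\| \ge \big(n^2 2^{2n+3}\exp(2(n+1)W_{\max})\big)^{-1}$ after collecting the powers of $2$, $n$, and $\exp(W_{\max})$; the exponent $2(n+1)$ rather than $2n$ absorbs the extra $\exp(-W_{\max})$ from the edge probabilities, and the factor $n^2 2^{2n+3}$ absorbs the path-length, state-space size, and the constant from Cheeger.

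For the second inequality, I would use the fact that $e^{n(P-I)} = e^{-n}\sum_{k\ge 0}\frac{n^k}{k!}P^k$ is itself a transition matrix of a reversible chain with the same stationary $\pi$ (it is a nonnegative combination of powers of $P$), and that eigenvalues transform as $\lambda \mapsto e^{n(\lambda-1)}$. Hence $\|e^{n(P-I)}\| \le e^{n(\lambda_{\max}-1)} = e^{-n(1-\|P\|)} \le 1 - \tfrac12 n(1-\|P\|)$ using $e^{-x} \le 1 - x/2$ for $x \in [0,1]$ (here $n(1-\|P\|) \le 1$ since $1-\|P\|$ is tiny). Substituting the first bound gives $\|e^{n(P-I)}\| \le 1 - \tfrac{n}{2}\cdot\big(n^2 2^{2n+3}\exp(2(n+1)W_{\max})\big)^{-1} = 1 - \big(2n\cdot 2^{2n+3}\exp(2(n+1)W_{\max})\big)^{-1} = 1 - \big(n\,2^{2n+4}\exp(2(n+1)W_{\max})\big)^{-1}$, exactly as claimed. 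One must also check the eigenvalue $\lambda = 1$ is excluded in the right way and that negative eigenvalues of $P$ (if any) only help, since $e^{n(\lambda-1)}$ with $\lambda \ge -1$ stays below $e^{n(\lambda_{\max}-1)}$ when $\lambda_{\max} \ge |\lambda|$; I would remark that $GD(\bW)$ is in fact aperiodic with a nonnegligible holding probability at every state, so its negative eigenvalues are bounded away from $-1$ and cause no trouble.

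The main obstacle is getting the explicit constants in the first inequality to come out as exactly $n^2 2^{2n+3}\exp(2(n+1)W_{\max})$ rather than merely "$\text{poly}(n)\,2^{O(n)}\exp(O(nW_{\max}))$". This requires choosing the cleanest available spectral-gap lower bound — I expect the canonical-paths bound of Diaconis–Stroock or Sinclair, routing every pair of states through the empty independent set, to be the most economical, since each canonical path has length at most $2n$ and each edge of the chain is used by at most $2^n$ such paths with congestion controlled by $\pi_{\min}^{-1}$ and the reciprocal edge probability. Carefully accounting for the factor-of-$2$ losses (the $\exp(-W_{\max})$ in each edge probability contributes the "$+1$" in the exponent $2(n+1)$; the two endpoints and the path length contribute the $n^2$; the $2^n$ congestion squared with the $2^{O(1)}$ Cheeger constant gives $2^{2n+3}$) is the delicate bookkeeping step, but it is purely mechanical once the right inequality is selected. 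Everything else — reversibility of $e^{n(P-I)}$, the eigenvalue map, the elementary estimate $e^{-x}\le 1-x/2$ — is routine.
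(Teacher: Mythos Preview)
Your overall strategy matches the paper's: invoke property \textbf{P4}, then Cheeger's inequality $1-\lambda_{\max}\ge\Phi^2/2$, then bound $\Phi$ from below; for the second inequality use $\|e^{n(P-I)}\|\le e^{n(\|P\|-1)}\le 1-\tfrac{n}{2}(1-\|P\|)$ via $e^{-x}\le 1-x/2$. The paper derives the bound $\|e^{n(P-I)}\|\le e^{n(\|P\|-1)}$ from the series expansion together with the matrix-norm properties \textbf{P1}--\textbf{P3} rather than via the eigenvalue map $\lambda\mapsto e^{n(\lambda-1)}$, but for a reversible $P$ this is purely cosmetic.

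Where you diverge is in the conductance step, and this is exactly the place you flag as ``the main obstacle''. You propose canonical paths through the empty set and anticipate delicate bookkeeping; the paper instead uses a one-line crude bound that makes the stated constants fall out immediately. Since $\pi(S)\pi(S^c)\le 1$ for every cut, one has
\[
\Phi \;\ge\; \min_{S} Q(S,S^c) \;\ge\; \min_{P(\sig,\sig')\neq 0}\pi(\sig)P(\sig,\sig') \;\ge\; \pi_{\min}\cdot \min_i \frac{1}{n}\,\frac{1}{1+\exp(W_i)}.
\]
Your $\pi_{\min}$ estimate is also looser than needed: the empty independent set has $\exp(\bW\cdot\bzero)=1$, so $\pi_{\min}\ge 1/Z\ge 2^{-n}\exp(-nW_{\max})$, not $2^{-n}\exp(-2nW_{\max})$. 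Combining these gives $\Phi\ge 1/\big(n\,2^{n+1}\exp((n+1)W_{\max})\big)$, and then $\Phi^2/2$ yields precisely $1/\big(n^2 2^{2n+3}\exp(2(n+1)W_{\max})\big)$ with no further work. Canonical paths would also succeed, but they are unnecessary here and would make the constants harder, not easier, to track.
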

\begin{proof}
By the property {\bf P4} of the matrix norm and Cheeger's inequality \cite{C, DFK91, JS, DS, sinclair},
it is well known that $\|P\|\leq \lambda_{\max} \le 1 - \frac{\Phi^2}{2}$ where  $\Phi$ is the
conductance of $P$, defined as
$$ \Phi ~=~ \min_{S \subset \cI(G): \pi(S)\le\frac12} \frac{Q(S,S^c)}{\pi(S)\pi(S^c)},$$
where $S^c = \cI(G) \backslash S$, $Q(S,S^c) = \sum_{\sig\in S, \sig'\in S^c}{\pi(\sig)P(\sig,\sig')}.$
Now we have
  \begin{eqnarray*}
    \Phi &\ge& \min_{S\subset \cI(G)}{Q(S,S^c)} \\ & \ge  & \min_{P(\sig,\sig')\ne0} \pi(\sig)P(\sig,\sig') \\
    &\ge& \pi_{\min}\cdot\min_i \frac1n \frac1{1+\exp(W_i)} \\ & \geq & \frac1{2^n\exp(nW_{\max})}\cdot\frac1n\frac1{1+\exp(W_{\max})}\\
    &\ge& \frac1{n2^{n+1}\exp\left((n+1)W_{\max}\right)}.
  \end{eqnarray*}
Therefore
$$\norm{P} \le  1 - \frac{1}{n^2 2^{2n+3}\exp\left(2(n+1)W_{\max}\right)}.$$

Now consider $e^{n(P(\tau)-I)}$. Using properties {\bf P1}, {\bf P2} and
 {\bf P3} of matrix norm, we have:
 \begin{eqnarray}
 \left\|e^{n(P-I)}\right\| & = & \left\| e^{-n} \sum_{k=0}^\infty \frac{n^k P^k}{k!} \right\| \nonumber \\
             & \leq & e^{-n} \sum_{k=0}^\infty \frac{n^k \left\|P\right\|^k}{k!} \nonumber \\
             & = & e^{n(\|P\| - 1)} \nonumber \\
             & \leq & 1 - \frac{n(1-\|P\|)}{2}.\label{eq:dx1}
 \end{eqnarray}
 In the last inequality, we have used the fact that $\|P\| < 1$ and
 $e^{-x} \leq 1 - x/2$ for all $x \in [0,1]$. Hence, from the bound of $\|P\|$, we obtain
 \begin{eqnarray}
 \left\|e^{n(P-I)}\right\| & \leq & 1 - \frac{1}{ n 2^{2n+4} \exp(2(n+1)W_{\max})}.
 \end{eqnarray}
 This completes the proof of Lemma \ref{lem:glaumixing}.
 \end{proof}

\subsubsection{Relation to Algorithm}

Now we relate our algorithm for wireless network scheduling described
in Section \ref{ssec:algo1} with an appropriate continuous time version
of the Glauber dynamics with time-varying weights. Recall that
$\bQ(t)$ and $\sig(t)$ denote the queue-size vector and schedule
at time $t$. The algorithm changes its scheduling decision, $\sig(t)$, when a node's
exponential clock of rate $1$ ticks. Due to memoryless property of
exponential distribution and independence of clocks of all nodes,
this is equivalent to having a global exponential clock of rate $n$
and upon clock tick one of the $n$ nodes gets chosen. This node
decides its transition as explained in Section \ref{ssec:algo1}.
Thus, the effective dynamics of the algorithm upon a global clock
tick is such that the schedule $\sig(t)$ evolves exactly as per
the Glauber dynamics $GD(\bW(t))$. Here recall that $\bW(t)$ is
determined based on $\bQ(\lfloor t \rfloor)$. With abuse of notation,
let the transition matrix of this Glauber dynamics be denoted by
$GD(\bW(t))$.

Now consider any $\tau \in \Zp$. Let $\bQ(\tau), \sig(\tau)$ be the
states at time $\tau$. Then,
{\begin{eqnarray*}
\E\left[\bdel_{\sig(\tau+1)} \, \Big| \, \bQ(\tau), \sig(\tau)\right]
& = & \sum_{k=0}^\infty \bdel_{\sig(\tau)} \Pr(\zeta = k) GD(\bW(\tau))^k,
\end{eqnarray*}
where we have used notation $\bdel_{\sig}$ for the distribution
with singleton support $\{\sig\}$ and $\zeta$ is a Poisson
random variable of mean $n$.
In above, the expectation is taken with respect to the distribution of $\sig(\tau+1)$ given $\bQ(\tau), \sig(\tau)$.} 
{Therefore, it follows that
\begin{eqnarray}\label{eq:fz1}
\E\left[\bdel_{\sig(\tau+1)} \, \Big| \, \bQ(\tau), \sig(\tau)\right]
& = & \bdel_{\sig(\tau)} e^{n(GD(\bW(\tau))-I)}  \nonumber \\
& = & \bdel_{\sig(\tau)} P(\tau)
\end{eqnarray}
where $P(\tau) \stackrel{\triangle}{=}  e^{n(GD(\bW(\tau))-I)}$.
In general, for any $\delta\in[0,1]$
\begin{eqnarray}\label{eq:fz1-1}
\E\left[\bdel_{\sig(\tau+\delta)} \, \Big| \, \bQ(\tau), \sig(\tau)\right]
& = & \bdel_{\sig(\tau)} P^{\delta}(\tau),
\end{eqnarray}
where $P^{\delta}(\tau)  \stackrel{\triangle}{=}  e^{\delta n(GD(\bW(\tau))-I)}$.}

\subsection{Loss Network \& Algorithm for Circuit Switched Network}\label{ssec:lossnet}

For the buffered circuit switched network, the Markov chain of interest
is related to the classical stochastic loss network model. This model
has been popularly utilized to study the performance of
various systems including the telephone networks, human resource
allocation, etc. (cf. see \cite{Kelly}). The stochastic loss
network model is very similar to the model of the buffered
circuit switched network with the only difference that it
does not have any buffers at the ingress nodes.

\subsubsection{Loss Network \& Its Mixing Time}

A loss network is described by a network
graph $G = (V,E)$ with capacitated links $[C_e]_{e\in E}$,
$n$ routes $\{R_i:R_i\subset E, 1\leq i\leq n\}$ and without
any buffer or queues at the ingress of each route.
For each route $R_i$, there is a dedicated exogenous,
independent Poisson arrival process with rate $\phi_i$.
Let $z_i(t)$ be number of active flows on route $i$ at
time $t$, with notation $\z(t) = [z_i(t)]$. Clearly,
$\z(t) \in \X$ due to network capacity constraints. At
time $t$ when a new exogenous flow arrives on route $R_i$,
if it can be accepted by the network, i.e. $\z(t) + e_i \in \X$,
then it is accepted with $z_i(t) \to z_i(t)+1$. Or else, it
is dropped (and hence lost forever). Each flow holds unit
amount of capacity on all links along its route for time that
is distributed as Exponential distribution with mean $1$,
independent of everything else. Upon the completion of holding time,
the flow departs and frees unit capacity on all links of its own
route.

Therefore, effectively this loss network model can be described
as a finite state Markov process with state space $\X$. Given
state $\z =[z_i] \in \X$, the possible transitions and corresponding
rates are given as
\begin{equation}
z_i ~~\to~~ \begin{cases} z_i + 1,\quad\mbox{with rate}\quad  \phi_i\quad\mbox{if}\quad \z+e_i \in \X, \\
                           z_i - 1,\quad\mbox{with rate}\quad  x_i.
             \end{cases} \label{eq:0}
\end{equation}
It can be verified that this Markov process is irreducible,
aperiodic, and time-reversible. Therefore, it is positive
recurrent (due to the finite state space) and has a unique
stationary distribution. Its stationary distribution $\pi$ is
known (cf. \cite{Kelly}) to have the following product-form:
for any $\z \in \X$,
\begin{eqnarray}
\pi_{\z} & \propto & \prod_{i=1}^n\frac{\phi_i^{z_i}}{z_i!}. \label{eq:lossnet}
\end{eqnarray}
We will be interested in the discrete-time (or {\em embedded})
version of this Markov processes, which can be defined as follows.
\begin{definition}[Loss Network]
A loss network Markov chain with capacitated graph $G=(V,E)$, capacities
$C_e, e\in E$ and $n$ routes $R_i, 1\leq i\leq n$, denoted by $\LN(\bphi)$
is a Markov chain on $\X$. The transition probabilities of this Markov
chain are described next.  Given a current state $\z \in \X$, the next
state $\z^*\in \X$ is decided by first picking a route $R_i$ uniformly
at random and performing the following:
\begin{enumerate}
\item[$\circ$] $z^*_j=z_j$ for $j\neq i$ and $z^*_i$ is decided by
$$ z^*_i = \begin{cases}
z_i+1  &\mbox{with probability } \frac{\phi_i}{\cR}\cdot\bone_{\{\z+e_i\in \X\}}\\
z_i-1 &\mbox{with probability } \frac{z_i}{\cR}\\
z_i&\mbox{otherwise.}
\end{cases},$$
\end{enumerate}
where $ \cR = \sum_i \phi_i + C_{\max}.$
\end{definition}
$\LN(\bphi)$ has the same stationary distribution as in \eqref{eq:lossnet},
and it is also irreducible, aperiodic, and reversible. Next, we state a
bound on the mixing time of the loss network Markov chain $\LN(\bphi)$
as follows.
\begin{lemma}\label{lem:lossmixing}
Let $P$ be the transition matrix of $\LN(\bphi)$ with $n$ routes. If $\bphi=\exp(\bW)$ with
\footnote{We use the following notation: given a function $g: \R \to \R$
and a $d$-dimensional vector $\bu \in \R^d$, let $g(\bu) = [g(u_i)] \in \R^d$.}
$W_i\geq 0$ for all $i$, then,
\beq
\norm{P} &\le& 1 - \frac{1}{8 n^4 C_{\max}^{2nC_{\max}+2n+2}~ \exp\left(2(nC_{\max}+1)W_{\max}\right)},\\
\norm{e^{n\cR (P - I)}}&\le&
1 - \frac{1}{16 n^3 C_{\max}^{2nC_{\max}+2n+2}~ \exp\left(2(nC_{\max}+1)W_{\max}\right)}.
\eeq
\end{lemma}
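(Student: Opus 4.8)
The plan is to mirror the structure of the proof of Lemma \ref{lem:glaumixing}, now carried out for the loss network Markov chain $\LN(\bphi)$ instead of Glauber dynamics. First I would invoke property {\bf P4} of the matrix norm together with Cheeger's inequality, so that $\norm{P} \le \lambda_{\max} \le 1 - \Phi^2/2$, where $\Phi$ is the conductance of $\LN(\bphi)$ with respect to its stationary distribution $\pi$ given by \eqref{eq:lossnet}. Thus everything reduces to a lower bound on $\Phi = \min_{S:\pi(S)\le 1/2} Q(S,S^c)/(\pi(S)\pi(S^c))$. As in the Glauber case, I would drop the denominator (it is at most $1$) and bound $\Phi \ge \min_{S} Q(S,S^c) \ge \min_{P(\z,\z')\ne 0}\pi(\z)P(\z,\z')$, since any cut $(S,S^c)$ contains at least one edge of the transition graph (the chain is irreducible). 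So the task becomes: lower bound $\pi_{\min}$ and lower bound the smallest nonzero transition probability.

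The two ingredients are then routine estimates on $\X$. For $\pi_{\min}$: since $z_i \le C_{\max}$ for every coordinate on every feasible state, $\prod_i \phi_i^{z_i}/z_i! \ge \prod_i \min\{1,\phi_i^{C_{\max}}\}/C_{\max}! \ge \exp(-nC_{\max}W_{\max})/(C_{\max}!)^n$ using $\phi_i = \exp(W_i)$, $W_i \ge 0$; and the normalizing constant is at most $|\X| \le (C_{\max}+1)^n$. Bounding $C_{\max}! \le C_{\max}^{C_{\max}}$ and folding the $(C_{\max}+1)^n$ factor into a power of $C_{\max}$ gives $\pi_{\min} \ge 1/\big(C_{\max}^{nC_{\max}+n}\exp(nC_{\max}W_{\max})\big)$ up to the constants that appear in the statement. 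For the minimal nonzero transition probability: each step picks a route with probability $1/n$ and then moves with probability at least $\min\{\phi_i/\cR, z_i/\cR\} \ge 1/(n\cR)$ where $\cR = \sum_i \phi_i + C_{\max} \le n\exp(W_{\max}) + C_{\max} \le 2n C_{\max}\exp(W_{\max})$ (crudely); hence the smallest nonzero transition probability is at least $1/(2n^2 C_{\max}\exp(W_{\max}))$. Multiplying the two bounds, squaring, and halving yields the first inequality; collecting exponents carefully should reproduce $8n^4 C_{\max}^{2nC_{\max}+2n+2}\exp(2(nC_{\max}+1)W_{\max})$ in the denominator.

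For the second inequality I would repeat verbatim the argument at the end of the proof of Lemma \ref{lem:glaumixing}: expand $e^{n\cR(P-I)} = e^{-n\cR}\sum_k (n\cR)^k P^k/k!$, apply {\bf P1}, {\bf P2}, {\bf P3} to get $\norm{e^{n\cR(P-I)}} \le e^{n\cR(\norm P - 1)}$, then use $\norm P < 1$ and $e^{-x}\le 1-x/2$ on $[0,1]$ — valid here since $n\cR(1-\norm P)$ can be taken $\le 1$ after absorbing $\cR$ into the constants — to obtain $\norm{e^{n\cR(P-I)}} \le 1 - \tfrac{n\cR(1-\norm P)}{2}$, and finally substitute the first bound and bound $\cR$ from below by a constant (e.g. $\cR \ge C_{\max} \ge 1$, or more sharply $\cR \ge n$). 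The factor-of-two loss relative to the first bound and the shift of the $n$-power is exactly what produces $16n^3$ in place of $8n^4$.

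The main obstacle I anticipate is purely bookkeeping: making the crude bounds on $\pi_{\min}$, on $\cR$, and on the minimal transition probability line up so that the exponents of $C_{\max}$ and of $\exp(W_{\max})$ come out exactly as stated, and checking that the $e^{-x}\le 1-x/2$ step is applied in a regime where its hypothesis $x\in[0,1]$ genuinely holds (which may require noting that $\norm P$ is already so close to $1$ that $n\cR(1-\norm P)\le 1$ automatically, or else replacing the constant by a slightly larger one). There is no genuine mathematical difficulty beyond what is already present in Lemma \ref{lem:glaumixing}; the loss-network-specific part is only the combinatorial size bound $|\X|\le (C_{\max}+1)^n$ and the factorials $z_i!$, both of which are elementary.
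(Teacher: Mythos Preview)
Your plan is essentially identical to the paper's proof: Cheeger's inequality via {\bf P4}, then $\Phi \ge \pi_{\min}\cdot \min_{P(\z,\z')\ne 0}P(\z,\z')$, then the same exponential-series argument from Lemma~\ref{lem:glaumixing} for the second bound. One bookkeeping slip to fix: the claim ``the normalizing constant is at most $|\X|$'' is false when $\phi_i>1$; the $\exp(nC_{\max}W_{\max})$ factor belongs in the upper bound on $Z$ (namely $Z \le |\X|\,\phi_{\max}^{nC_{\max}}$), while the numerator is simply $\prod_i \phi_i^{z_i}/z_i! \ge 1/(C_{\max}!)^n$ since $\phi_i\ge 1$ --- this is exactly how the paper arranges it, and it yields the same $\pi_{\min}$ bound you wrote down.
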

\begin{proof}
Similarly as the proof of Lemma \ref{lem:glaumixing}, a simple lower bound for the conductance $\Phi$
of $P$ is given by
\begin{eqnarray}
    \Phi &\geq & \pi_{\min} \cdot \min_{P_{\z,\z'}\ne0} P_{\z,\z'}. \label{ed0}
  \end{eqnarray}
To obtain the lower bound of $\pi_{\min}$, recall the equation \eqref{eq:lossnet},
$$\pi_{\z}  ~=~\frac1Z \prod_{i=1}^n\frac{\phi_i^{z_i}}{z_i!},$$
where $ Z = \sum_{\z \in \cX} \prod_{i=1}^n\frac{\phi_i^{z_i}}{z_i !}$, and consider the following:
\begin{eqnarray*}
Z ~ \leq ~  |\cX| \phi_{\max}^{nC_{\max}} 
& \leq &  C_{\max}^n \exp(nC_{\max}W_{\max}),
\end{eqnarray*}
and
\begin{eqnarray*}
\prod_{i=1}^n\frac{\phi_i^{z_i}}{z_i !} ~ \geq ~ \frac1{\left(C_{\max}!\right)^n} 
& \geq &\frac1{C_{\max}^{n C_{\max}}}.
\end{eqnarray*}
By combining the above inequalities, we obtain
\beq
\pi_{\min}~\geq~ \frac1{C_{\max}^{n C_{\max}+n}\exp(nC_{\max}W_{\max})}.\label{ed1}
\eeq
On the other hand, one can bound $\min_{P_{\z,\z'}\ne0} P_{\z,\z'}$ as follows:
\beq
P_{\z,\z'}~\geq~ \frac1n \cdot \frac1{\cR}~\geq~\frac1n \cdot \frac1{n\phi_{\max}+C_{\max}}
~\geq~\frac1{2n^2C_{\max}\exp(W_{\max})},\label{ed2}
\eeq
where we use the fact that $x + y \leq 2xy$ if $x, y \geq 1$.
Now, by combining \eqref{ed1} and \eqref{ed2}, we have
$$\Phi \geq \frac1{2n^2C_{\max}^{n C_{\max}+n+1}\exp\left((nC_{\max}+1)W_{\max}\right)}.$$
Therefore, using the property {\bf P4} of the matrix norm and Cheeger's inequality,
we obtain the desired conclusion as
$$\norm{P} ~\le~ \lambda_{\max} ~\le~ 1-\frac{\Phi^2}{2}
  ~\le~ \frac1{8n^4C_{\max}^{2n C_{\max}+2n+2}\exp\left(2(nC_{\max}+1)W_{\max}\right)}. $$
Furthermore, using this bound and similar arguments in the proof of Lemma \ref{lem:glaumixing}, we have
$$\norm{e^{n\cR (P - I)}}~\le~
1 - \frac{1}{16 n^3 C_{\max}^{2nC_{\max}+2n+2}~ \exp\left(2(nC_{\max}+1)W_{\max}\right)}.$$

~~
\end{proof}


\subsubsection{Relation to Algorithm}

The scheduling algorithm for buffered circuit switched network
described in Section \ref{ssec:algo2} effectively simulates a
stochastic loss network with time-varying arrival rates $\bphi(t)$
where $\bphi_i(t) = \exp(W_i(t))$. That is, the relation of
the algorithm in Section \ref{ssec:algo2} with loss network is
similar to the relation of the algorithm in Section \ref{ssec:algo1} with
Glauber dynamics that we explained in the previous section.
To this end, for a given $\tau \in \Zp$, let $\bQ(\tau)$ and
$\z(\tau)$ be queue-size vector and active flows at time $\tau$.
With abuse of notation, let
$LN(\exp(\bW(\tau)))$ be the transition matrix of the
corresponding Loss Network with $\bW(\tau)$ dependent
on $\bQ(\tau)$. Then, for any $\delta\in[0,1]$
{\begin{equation}\label{eq:fz2}
\E\left[\bdel_{\z(\tau+\delta)} \, \Big| \, \bQ(\tau), \z(\tau)\right]
 =  \bdel_{\z(\tau)} e^{n\delta \cR(\tau)(LN(\exp(\bW(\tau)))-I)}, 
\end{equation}}
where $\cR(\tau)=\sum_i \exp(W_i(\tau))+C_{\max}$.

\subsubsection{Positive Harris Recurrence \& Its Implication}\label{sssec:harris}

For completeness, we define the well known notion of positive
Harris recurrence (e.g.\ see \cite{dai95,dainotes}). We also state its useful implications to explain
its desirability. In this paper, we will be concerned with
discrete-time, time-homogeneous Markov process or chain
evolving over a complete, separable metric space $\sX$. Let
$\cB_\sX$ denote the Borel $\sigma$-algebra on $\sX$. We assume
that the space $\sX$ be endowed with a norm\footnote{One may
assume it to be induced by the metric of $\sX$, denoted by $d$.
For example, for any  $\bx \in \sX$, $|\bx| = d(\bzero,\bx)$
with respect to a fixed $\bzero \in \sX$.}, denoted by $|\cdot|$.
Let $X(\tau)$ denote the state of Markov chain at time
$\tau \in \Zp$.

Consider any $A \in \cB_\sX$. Define stopping time
$T_A = \inf\{\tau \geq 1 : X(\tau) \in A\}$. Then the
set $A$ is called Harris recurrent if
$$ {\Pr}_{\bx}(T_A < \infty) = 1 \qquad \mbox{for any } \bx \in \sX, $$
where $\Pr_{\bx}(\cdot) \equiv \Pr( \cdot | X(0) = \bx)$.
A Markov chain is called Harris recurrent if there exists
a $\sigma$-finite measure $\mu$ on $(\sX, \cB_\sX)$ such
that whenever $\mu(A) > 0$ for $A\in \cB_\sX$,
$A$ is Harris recurrent. It is well known that if $X$
is Harris recurrent then an essentially unique invariant
measure exists (e.g.\ see Getoor \cite{Getoor}). 
If the invariant measure is finite, then it may be normalized
to obtain a unique invariant probability measure (or stationary
probability distribution); in this
case $X$ is called positive Harris recurrent.

Now we describe a useful implication of positive Harris
recurrence. Let $\pi$ be the unique invariant (or stationary)
probability distribution of the positive Harris recurrent
Markov chain $X$. Then the following ergodic property is
satisfied: for any $x \in \sX$ and non-negative measurable
function $f: \sX \to \R_+$,
$$ \lim_{T\to\infty} \frac{1}{T} \sum_{\tau = 0}^{T-1} f(X(\tau))
\to \E_\pi[f], ~~{\Pr}_{\bx}\mbox{-almost surely}.$$
Here $\E_\pi[f] = \int f(z) \pi(z).$ Note that $\E_\pi[f]$ may
not be finite.

\subsubsection{Criteria for Positive Harris Recurrence}

Here we introduce a well known criteria for establishing
the positive Harris recurrence based on existence of
a Lyapunov function and an appropriate petit set.

We will need some definitions to begin with. Given a probability
distribution (also called sampling distribution) $a$ on $\N$,
the $a$-sampled transition matrix of the Markov chain, denoted
by $K_a$ is defined as
$$ K_a(\bx, B) = \sum_{\tau\geq 0} a(\tau)P^\tau(\bx, B), ~~\mbox{for any}~~ \bx\in \sX, ~B \in \cB_\sX.$$
Now, we define a notion of a \emph{petite} set. A non-empty set
$A \in \cB_\sX$ is called $\mu_a$-\emph{petite}
if $\mu_a$ is a non-trivial measure on $(\sX,\cB_\sX)$ and $a$ is a probability
distribution on $\N$ such that for any $\bx \in A$,
$$ K_a(\bx, \cdot) \geq \mu_a(\cdot).$$
A set is called a \emph{petite} set if it is $\mu_a$-petite for some
such non-trivial measure $\mu_a$. A known sufficient condition
to establish positive Harris recurrence of a Markov chain is to
establish positive Harris recurrence of closed petite sets
{as stated in the following lemma}. We refer
an interested reader to the book by Meyn and Tweedie \cite{MT-book}
or the recent survey by Foss and Konstantopoulos \cite{Foss-Fluid}
for details.
\begin{lemma}\label{lem:one}
  Let $B$ be a closed petite set. Suppose $B$ is Harris recurrent,
  i.e.\ $\Pr_\bx(T_B < \infty) = 1$ for any $\bx \in \sX$. Further, let
$$ \sup_{\bx \in B} \E_\bx\left[T_B\right] < \infty.$$
Then the Markov chain is positive Harris recurrent. Here $\E_{\bx}$ is
defined with respect to $\Pr_{\bx}$.
\end{lemma}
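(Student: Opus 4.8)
The plan is to establish positive Harris recurrence in two stages: first show that the chain is Harris recurrent, and then upgrade this to \emph{positive} Harris recurrence by exhibiting a finite invariant measure, with the finiteness driven entirely by the hypothesis $\sup_{\bx\in B}\E_\bx[T_B]<\infty$.

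For the first stage, fix the sampling distribution $a$ on $\N$ and the non-trivial measure $\mu_a$ witnessing that $B$ is petite, so that $K_a(\bx,\cdot)\ge\mu_a(\cdot)$ for every $\bx\in B$. By hypothesis $\Pr_\bx(T_B<\infty)=1$ for all $\bx$, and applying the strong Markov property at the successive return times to $B$ shows that, from any initial state, the chain visits $B$ infinitely often almost surely. Now take any $A\in\cB_\sX$ with $\mu_a(A)>0$. Each time the chain sits in $B$, an independent $a$-distributed number of steps later it lands in $A$ with probability at least $\mu_a(A)>0$, uniformly over the past; since there are infinitely many visits to $B$, L\'evy's extension of the Borel--Cantelli lemma forces $\{X(\tau)\in A\}$ to occur for arbitrarily large $\tau$, hence $\Pr_\bx(T_A<\infty)=1$ for every $\bx$. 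Thus $\mu_a$ is an irreducibility measure for which every positive-measure set is Harris recurrent, so the chain is Harris recurrent and, as recalled earlier in the paper, admits an essentially unique invariant measure $\pi$ (up to scaling).

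For the second stage, the goal is $\pi(\sX)<\infty$. The clean route is to manufacture a regeneration structure out of the petite set $B$: pass to the $a$-sampled kernel $K_a$ (equivalently, perform a Nummelin-type splitting using the minorization $K_a\ge\mu_a$ on $B$), for which $B$ behaves as a genuine small set. One then defines the excursion measure
\[
\pi(\cdot)\;=\;\E_{\mu_a}\Bigl[\sum_{\tau=0}^{\sigma_B-1}\mathbf{1}_{\{X(\tau)\in\cdot\}}\Bigr],
\]
where $\sigma_B\ge1$ is the first return time to $B$, checks the standard computation that $\pi$ is invariant and non-trivial, and reads off its total mass as $\pi(\sX)=\E_{\mu_a}[\sigma_B]$. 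Because, started from a point of $B$, the return time to $B$ is controlled by $\sup_{\bx\in B}\E_\bx[T_B]<\infty$ (up to a constant depending only on the mean of $a$, coming from the detour through the sampled kernel), this total mass is finite; normalizing $\pi$ then yields a stationary probability distribution, so the chain is positive Harris recurrent. Alternatively one may simply invoke the packaged regenerative/Kac-type decomposition for petite sets in Meyn--Tweedie, of which this lemma is a direct consequence.

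The main obstacle is the passage from ``petite'' to a workable regeneration structure: a petite set does not in general carry a one-step minorization, so the construction of the invariant measure must go through the $a$-sampled chain (or Nummelin splitting), and one has to track how hitting and return times and their expectations transform under this sampling so that the bound $\sup_{\bx\in B}\E_\bx[T_B]<\infty$ still yields finiteness of the expected excursion length. A secondary point requiring care is verifying that the constructed measure is genuinely $\sigma$-finite and nonzero, and that it agrees up to a scalar with the invariant measure produced in the first stage, before the normalization step is legitimate.
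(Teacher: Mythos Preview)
The paper does not actually prove this lemma: it is stated as a known sufficient condition for positive Harris recurrence, with the reader referred to Meyn--Tweedie \cite{MT-book} and the survey of Foss--Konstantopoulos \cite{Foss-Fluid} for details. So there is no ``paper's own proof'' to compare against; the lemma is invoked as a black box.

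Your proposal is a reasonable high-level outline of the standard argument found in those references (Harris recurrence from the petite minorization plus infinitely many visits to $B$, then finiteness of the invariant measure via a Kac/regenerative decomposition controlled by $\sup_{\bx\in B}\E_\bx[T_B]$). As you yourself note, it remains a sketch rather than a proof: the passage from a petite-set minorization on the $a$-sampled kernel back to control of return times for the original chain, and the verification that the excursion measure is invariant and agrees with the one produced by Harris recurrence, are precisely the technical content of the cited references and are not carried out here. Given that the paper treats the lemma as a citation, your level of detail is already more than the paper provides; if you want a self-contained proof you should fill in the Nummelin splitting and the sampled-chain bookkeeping explicitly, or else match the paper and simply cite Meyn--Tweedie.
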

Lemma \ref{lem:one} suggests that to establish the positive
Harris recurrence of the network Markov chain, it is sufficient to find a
closed petite set that satisfies the conditions of Lemma \ref{lem:one}.
To establish positive recurrence of a closed petit set, we shall utilize
the {\em drift criteria} based on an appropriate Lyapunov function stated
in the following Lemma (cf. \cite[Theorem 1]{Foss-Fluid}).
\begin{lemma}\label{lem:two}
Let $L : \sX \to \Rp$ be a function such that
$L(\bx) \to \infty$ as $|\bx| \to \infty$. For any $\kappa > 0$,
let $B_{\kappa} = \{ \bx : L(\bx) \leq \kappa\}$. And let
there exist functions $h, g : \sX \to \Zp$ such that for any $\bx \in \sX$,
$$\E\left[L(X(g(\bx))) - L(X(0)) | X(0) = \bx \right] \leq -h(\bx),$$
that satisfy the following conditions:
\begin{itemize}
\item[(a)] $\inf_{\bx \in \sX} h(\bx) > -\infty$.
\item[(b)] $\lim\inf_{L(\bx) \to \infty} h(\bx) > 0$.
\item[(c)] $\sup_{L(\bx) \leq \gamma} g(\bx) < \infty$
for all $\gamma > 0$.
\item[(d)] $\lim\sup_{L(\bx)\to\infty} g(\bx)/h(\bx) < \infty$.
\end{itemize}
Then, there exists constant $\kappa_0 > 0$ so that for all
$\kappa_0 < \kappa$, the following holds:
\begin{eqnarray}
\E_\bx\left[ T_{B_\kappa} \right] & < &  \infty, \qquad \mbox{for any $\bx \in \sX$} \label{eq:d1a} \\
\sup_{\bx \in B_\kappa} \E_\bx\left[ T_{B_\kappa} \right] & < & \infty. \label{eq:d2a}
\end{eqnarray}
That is, $B_\kappa$ is positive recurrent.
\end{lemma}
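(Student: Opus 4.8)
The plan is to prove Lemma~\ref{lem:two} by the classical \emph{state-dependent (multi-step) drift} method: turn the hypothesis $\E[L(X(g(\bx)))-L(X(0))\mid X(0)=\bx]\le-h(\bx)$ into a genuine nonnegative supermartingale for the chain \emph{sampled} at the drift horizons, and, crucially, augment that supermartingale with the elapsed real time, so that a bound on the \emph{number of sampled steps} needed to reach $B_\kappa$ translates into a bound on the \emph{real} hitting time $\E_\bx[T_{B_\kappa}]$.

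First I would choose the threshold $\kappa_0$ large enough that conditions (b) and (d) are effective on $\{L>\kappa_0\}$: by (b) there is $\delta>0$ with $h(\bx)\ge\delta$ whenever $L(\bx)>\kappa_0$, and by (d) there is $c<\infty$ with $g(\bx)\le c\,h(\bx)$ whenever $L(\bx)>\kappa_0$; since the $g$-step drift is then strictly negative, this also forces $g(\bx)\ge1$ on $\{L>\kappa_0\}$ (a drift over zero steps is zero). Fix any $\kappa>\kappa_0$. Define sampling times $\tau_0=0$, $\tau_{k+1}=\tau_k+g(X(\tau_k))$, the sampled chain $Y_k=X(\tau_k)$, and $\mathcal G_k=\mathcal F_{\tau_k}$; since $\tau_{k+1}$ is $\mathcal G_k$-measurable, the strong Markov property and the hypothesis give $\E[L(Y_{k+1})\mid\mathcal G_k]\le L(Y_k)-h(Y_k)$. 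Let $\sigma=\inf\{k\ge0:Y_k\in B_\kappa\}$. On $\{k<\sigma\}$ we have $L(Y_k)>\kappa>\kappa_0$, hence $h(Y_k)\ge g(Y_k)/c$, and therefore
\[
\E\bigl[\,L(Y_{k+1})+\tfrac1c\,\tau_{k+1}\ \bigm|\ \mathcal G_k\,\bigr]\ \le\ L(Y_k)-h(Y_k)+\tfrac1c\bigl(\tau_k+g(Y_k)\bigr)\ \le\ L(Y_k)+\tfrac1c\,\tau_k .
\]
Consequently $W_k:=L(Y_{k\wedge\sigma})+\tfrac1c\,\tau_{k\wedge\sigma}$ is a nonnegative supermartingale, so $\tfrac1c\,\E_\bx[\tau_{k\wedge\sigma}]\le\E_\bx[W_k]\le W_0=L(\bx)$ for every $k$. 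Since $g\ge1$ off $B_\kappa$ gives $\tau_k\ge k$ for $k\le\sigma$, the a.s.\ finiteness of $\lim_k\tau_{k\wedge\sigma}$ (obtained by monotone convergence from the uniform bound $\E_\bx[\tau_{k\wedge\sigma}]\le c\,L(\bx)$) rules out $\sigma=\infty$, and monotone convergence then yields $\E_\bx[\tau_\sigma]\le c\,L(\bx)<\infty$ for all $\bx$.

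To read off \eqref{eq:d1a}--\eqref{eq:d2a}, note first that for $\bx\notin B_\kappa$ we have $\sigma\ge1$, $\tau_\sigma\ge\tau_1=g(\bx)\ge1$, and $Y_\sigma\in B_\kappa$, so $T_{B_\kappa}\le\tau_\sigma$ and $\E_\bx[T_{B_\kappa}]\le c\,L(\bx)<\infty$; in particular $\Pr_\bx(T_{B_\kappa}<\infty)=1$. For $\bx\in B_\kappa$, take one sampled step: by (c), $\bar g:=\sup_{L(\by)\le\kappa}g(\by)<\infty$, so $\tau_1\le\bar g$, and since $h\ge0$ the drift hypothesis gives $\E_\bx[L(Y_1)]\le L(\bx)\le\kappa$. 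Conditioning on $Y_1$ and applying the bound of the previous paragraph from $Y_1$ on $\{Y_1\notin B_\kappa\}$ (and $T_{B_\kappa}\le\bar g$ on $\{Y_1\in B_\kappa\}$) gives $\E_\bx[T_{B_\kappa}]\le\bar g+c\,\E_\bx\bigl[L(Y_1)\bone_{\{Y_1\notin B_\kappa\}}\bigr]\le\bar g+c\kappa$, uniformly over $\bx\in B_\kappa$. Together these are exactly the positive recurrence of $B_\kappa$ claimed in \eqref{eq:d1a}--\eqref{eq:d2a}, and they also yield the Harris recurrence of $B_\kappa$ that is needed before invoking Lemma~\ref{lem:one}.

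I expect the genuine obstacle to be precisely the passage from ``finitely many sampled steps on average'' to ``finite expected real time'': this is where condition (d) is indispensable, since $g$ is controlled only on sublevel sets (condition (c)) and could otherwise be arbitrarily large during the excursion away from $B_\kappa$; the device of pairing the Lyapunov decrement $h(Y_k)$ with the time increment $g(Y_k)$ through $g\le c\,h$ is exactly what makes the time-augmented process $W_k$ a supermartingale. Everything else is routine bookkeeping: $\mathcal G_k$-measurability of $\tau_{k+1}$ and the strong-Markov identity for $\E[L(Y_{k+1})\mid\mathcal G_k]$, the standard integrability required for the supermartingale inequality and the monotone-convergence limits, and the handling of trajectories starting inside $B_\kappa$ (where one reads $g$ as positive-integer valued; states with $g(\bx)=0$ necessarily have $h(\bx)=0$, so they are harmless).
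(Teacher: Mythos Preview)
The paper does not supply its own proof of this lemma; it is quoted (cf.\ \cite[Theorem~1]{Foss-Fluid}) as a known state-dependent drift criterion. Your argument is precisely the standard one behind that reference: sample the chain at the variable horizons $\tau_{k+1}=\tau_k+g(Y_k)$, use condition~(d) to pay for each time increment $g(Y_k)$ with the Lyapunov decrement $h(Y_k)$ so that $L(Y_{k\wedge\sigma})+\tfrac1c\,\tau_{k\wedge\sigma}$ is a nonnegative supermartingale, and then read off the real-time return bound from $\E_\bx[\tau_\sigma]\le c\,L(\bx)$. So there is nothing in the paper to compare against, and your route is the correct one.

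One small slip to fix: in the paragraph handling $\bx\in B_\kappa$ you write ``since $h\ge0$'' to conclude $\E_\bx[L(Y_1)]\le\kappa$, but hypothesis~(a) only guarantees $\inf_\bx h(\bx)=-M>-\infty$, not $h\ge0$. The repair is immediate: $\E_\bx[L(Y_1)]\le L(\bx)+M\le\kappa+M$, and the uniform return-time bound over $B_\kappa$ becomes $\bar g+c(\kappa+M)$ rather than $\bar g+c\kappa$. The $g=0$ edge case you flag at the end is, as you note, an artifact of the lemma's phrasing rather than of your proof; in the Foss--Konstantopoulos formulation the horizon is $\ge1$, and in the paper's actual application (Sections~\ref{sssec:step3wireless} and~\ref{sssec:step3switch}) one takes $g(\bx)=b_2(Q_{\max}(0))\ge1$ anyway.
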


\section{Proofs of Theorems \ref{thm:main1} \& \ref{thm:main2}}\label{sec:mainproof}

This section provides proofs of Theorems \ref{thm:main1} and \ref{thm:main2}.
We shall start with necessary formalism followed by a summary of the
entire proof. This summary will utilize a series of Lemmas whose proofs
will follow.

\subsection{Network Markov Process} We describe discrete time network Markov processes
under both algorithms that we shall utilize throughout. Let $\tau\in \Zp$
be the time index. Let $\bQ(\tau) = [Q_i(\tau)]$  be the queue-size
vector at time $\tau$, $\x(\tau)$ be the schedule at time $\tau$
with $\x(\tau)=\bsigma(\tau)\in\cI(G)$ for the wireless network and
$\x(\tau)=\z(\tau)\in\X$ for the circuit switched network.
It can be checked that the tuple $X(\tau) = (\bQ(\tau), \x(\tau))$
is the Markov state of the network for both setups. Here
$X(\tau) \in \sX$ where $\sX = \R_+^n  \times {\cI(G)}$ or
$\sX = \Zp^n  \times {\X}$. Clearly, $\sX$ is a Polish space
endowed with the natural product topology. Let $\cB_{\sX}$
be the Borel $\sigma$-algebra of $\sX$ with respect
to this product topology. For any $\bx = (\bQ, \x) \in \sX$,
we define norm of $\bx$ denoted by $|\bx|$ as
$$ |\bx| = |\bQ| + |\x|, $$
where $|\bQ|$ denotes the standard $\ell_1$ norm
while $|\x|$ is defined as its index in $\{0,\dots, {|\Omega|-1}\}$, which
is assigned arbitrarily. Since $|\x|$ is always bounded,
$|\bx| \to \infty$ if and only if $|\bQ| \to \infty$. Theorems
\ref{thm:main1} and \ref{thm:main2} wish to establish that
the Markov process $X(\tau)$ is positive Harris recurrent.

\subsection{Proof Plan}

To establish the positive Harris recurrence of $X(\tau)$, we will
utilize the Lyapunov drift criteria to establish the positive recurrence
property of an appropriate petit set (cf. Lemma \ref{lem:one}). To
establish the existence of such a Lyapunov function, we shall study
properties of our randomized scheduling algorithms. Specifically,
we shall show that in a nutshell our schedule algorithms are
{\em simulating} the maximum weight scheduling algorithm with
respect to an appropriate weight, function of the queue-size.
This will lead to the desired Lyapunov function and a drift
criteria. The detailed proof of positive Harris recurrence
that follows this intuition is stated in four steps. We
briefly give an overview of these four steps.

To this end, recall that the randomized algorithms for
wireless or circuit switched network are effectively
asynchronous, continuous versions of the time-varying
$GD(\bW(t))$ or $\LN(\exp(\bW(t)))$ respectively.
Let $\pi(t)$ be the stationary distribution of the
Markov chain $GD(\bW(t))$ or $\LN(\exp(\bW(t)))$;
$\mu(t)$ be the distribution of the schedule, either
$\sig(t)$ or $\z(t)$, under our algorithm at time $t$.
In the first step, roughly speaking we argue that
the weight of schedule sampled as per the stationary distribution
$\pi(t)$ is close to the weight of maximum weight
schedule for both networks (with an appropriately
defined weight). In the second step, roughly speaking
we argue that indeed the distribution $\mu(t)$ is
close enough to that of $\pi(t)$ for all time $t$.
In the third step, using these two properties we
establish the Lyapunov drift criteria for appropriately
defined Lyapunov function (cf. Lemma \ref{lem:two}).
In the fourth and final step, we show that this implies
positive recurrence of an appropriate closed petit set.
Therefore, due to Lemma \ref{lem:one} this will imply
the positive Harris recurrence property of the network Markov
process.

\subsection{Formal Proof}

To this end, we are interested in establishing Lyapunov drift
criteria (cf. Lemma \ref{lem:two}). For this, consider Markov
process starting at time $0$ in state $X(0) = (\bQ(0), \x(0))$
and as per hypothesis of both Theorems, let
$\lamb \in (1-\beps)\Conv(\Omega)$ with some $\beps > 0$ and
$\Omega = \cI(G)$ (or $\X$).
Given this, we will go through four steps to prove positive
Harris recurrence.

\vspace{.1in}
\subsubsection{Step One}
Let $\pi(0)$ be the stationary distribution of $GD(\bW(0))$
or $LN(\exp(\bW(0)))$.  The following Lemma states that the average weight of schedule as per
$\pi(0)$ is essentially as good as that of the maximum weight
schedule with respect to weight $f(\bQ(0))$.

\begin{lemma}\label{LEM:GOODPI}
Let $\x$ be distributed over $\Omega$ as per $\pi(0)$ given $\bQ(0)$.
Then,
\begin{equation}\label{eq:lemgoodpi}
\E_{\pi(0)}[f(\bQ(0)) \cdot \x ] ~\geq~ \left(1- \frac{\varepsilon}4\right) \left(\max_{\y\in \Omega} f(\bQ(0)) \cdot\y \right) - O(1).
\end{equation}
\end{lemma}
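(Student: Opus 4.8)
The plan is to exploit the product-form of $\pi(0)$ together with the fact that the weights $W_i(0)$ are derived from the slowly-growing function $f$. Write $\bW = \bW(0)$ and $\bQ = \bQ(0)$, and recall that in both models $\pi(0)$ has the Gibbs form $\pi_{\x} \propto \exp(\bW\cdot\x)$ (for the wireless network) or the loss-network product-form $\pi_{\z}\propto \prod_i \phi_i^{z_i}/z_i!$ with $\phi_i = \exp(W_i)$ (for the circuit-switched network). The first step is to relate $\E_{\pi(0)}[\bW\cdot\x]$ to $\max_{\y\in\Omega}\bW\cdot\y$. For a Gibbs distribution at ``inverse temperature one'' over a finite set $\Omega$, a standard computation gives
\[
\max_{\y\in\Omega}\bW\cdot\y ~-~ \E_{\pi(0)}[\bW\cdot\x] ~\leq~ \log|\Omega| ~\leq~ H(\pi(0)),
\]
since $\E_{\pi(0)}[\bW\cdot\x] = \max_{\y}\bW\cdot\y - \big(\log Z - \min\text{-gap terms}\big)$; more precisely one writes $\log Z = \log\sum_{\y}\exp(\bW\cdot\y) \geq \max_{\y}\bW\cdot\y$ and $\log Z \leq \max_{\y}\bW\cdot\y + \log|\Omega|$, and $\E_{\pi(0)}[\bW\cdot\x] = \log Z - H(\pi(0))$ where $H$ is the entropy, bounded by $\log|\Omega|$. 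Either way one gets
\[
\E_{\pi(0)}[\bW\cdot\x] ~\geq~ \max_{\y\in\Omega}\bW\cdot\y ~-~ \log|\Omega| ~=~ \max_{\y\in\Omega}\bW\cdot\y ~-~ O(1),
\]
with an entirely analogous bound in the loss-network case where $\log|\X|$ and an additional $O(1)$ coming from the $\log(z_i!)$ terms (bounded since $z_i \leq C_{\max}$) play the role of the entropy correction.

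The second step is to pass from the weight $\bW$ used inside the algorithm to the target weight $f(\bQ)$. By the definitions \eqref{eq:weight1}--\eqref{eq:weight2}, $W_i(0) = \max\{f(Q_i(0)), \sqrt{f(Q_{\max}(0))}\}$, so $W_i(0) \geq f(Q_i(0))$ always, and $W_i(0) = f(Q_i(0))$ whenever $f(Q_i(0)) \geq \sqrt{f(Q_{\max}(0))}$; for the remaining coordinates $W_i(0) - f(Q_i(0)) \leq \sqrt{f(Q_{\max}(0))}$. Since $\x$ has at most $n$ (or $nC_{\max}$) nonzero bounded coordinates, $\big|\,(\bW - f(\bQ))\cdot\x\,\big| = O\big(\sqrt{f(Q_{\max}(0))}\,\big)$ for every $\x\in\Omega$. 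Hence
\[
\E_{\pi(0)}[f(\bQ)\cdot\x] ~\geq~ \E_{\pi(0)}[\bW\cdot\x] - O\big(\sqrt{f(Q_{\max})}\big) ~\geq~ \max_{\y}\bW\cdot\y - O\big(\sqrt{f(Q_{\max})}\big) ~\geq~ \max_{\y} f(\bQ)\cdot\y - O\big(\sqrt{f(Q_{\max})}\big).
\]
To convert the additive $O(\sqrt{f(Q_{\max})})$ error into the multiplicative $(1-\varepsilon/4)$ factor claimed, observe that the maximum-weight schedule has value $\max_{\y} f(\bQ)\cdot\y \geq \max_i f(Q_i) = f(Q_{\max})$ (a single-coordinate schedule is feasible, or at worst a constant-factor-of-one is lost by feasibility of $\{e_i\}$ in $\cI(G)$ or $\X$). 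Therefore the error term is $O\big(\sqrt{f(Q_{\max})}\big) = o\big(f(Q_{\max})\big) \leq \tfrac{\varepsilon}{4}\max_{\y} f(\bQ)\cdot\y$ once $f(Q_{\max})$ exceeds a constant depending on $\varepsilon, n$; when $f(Q_{\max})$ is below that constant, $\max_{\y} f(\bQ)\cdot\y = O(1)$ and the inequality \eqref{eq:lemgoodpi} holds trivially with the additive $O(1)$ slack. Splitting into these two cases yields the stated bound.

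The main obstacle I anticipate is the bookkeeping in the second step: one must carefully verify (i) that the gap $W_i(0) - f(Q_i(0))$ really is controlled uniformly by $\sqrt{f(Q_{\max})}$ on exactly the coordinates that can be ``on'' in a feasible schedule, (ii) that $\sqrt{f(Q_{\max})} = o(f(Q_{\max}))$ genuinely absorbs into the multiplicative factor — this is where the sublinearity/concavity of $f$ (here $f = \log\log(\cdot+e)$, and more generally the growth conditions listed after Theorem \ref{thm:main1}) is used — and (iii) handling the loss-network entropy correction, where the product-form involves $z_i!$ rather than a pure Gibbs weight, so the identity $\E_{\pi}[\bW\cdot\x] = \log Z - H$ must be replaced by $\E_{\pi}[\bW\cdot\z] = \log Z - H(\pi) + \E_{\pi}[\sum_i \log z_i!]$, with the last term an $O(1)$ (indeed $O(nC_{\max}\log C_{\max})$) contribution since $z_i \leq C_{\max}$. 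None of these is deep, but the constants must be tracked to land the clean $(1-\varepsilon/4)$ and $O(1)$ in \eqref{eq:lemgoodpi}.
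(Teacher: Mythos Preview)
Your proposal is correct and follows essentially the same route as the paper: first the Gibbs/entropy bound $\E_{\pi}[\bW\cdot\x]\geq \max_{\y}\bW\cdot\y-\log|\Omega|$ (the paper packages this as Proposition~\ref{prop:goodpi}, absorbing the $\sum_i\log z_i!$ into the function $T$ so that both models are handled at once), then the passage from $\bW$ to $f(\bQ)$ via $0\le W_i-f(Q_i)\le\sqrt{f(Q_{\max})}$ and the observation $\max_{\y}f(\bQ)\cdot\y\ge f(Q_{\max})$, with the small-$Q_{\max}$ case swallowed by the $O(1)$. One tiny slip: in your displayed chain you wrote $\log|\Omega|\le H(\pi(0))$, but the inequality goes the other way, $\max-\E\le H(\pi(0))\le\log|\Omega|$; this does not affect your conclusion.
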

The proof of Lemma \ref{LEM:GOODPI} is based on the variational
characterization of distribution in the exponential form. Specifically,
we state the following proposition which is a direct adaptation of
the known results in literature (cf. \cite{GBook}).
\begin{proposition}\label{prop:goodpi}
Let $T: \Omega \to \R$ and let $\cM(\Omega)$ be space of all
distributions on $\Omega$. Define $F : \cM(\Omega) \to \R$ as
    $$F(\mu) = \E_{\mu}(T(\bx)) + H_{ER}(\mu),$$
    where $H_{ER}(\mu)$ is the standard discrete entropy of $\mu$.
    Then, $F$ is uniquely maximized by the distribution $\nu$, where
    $$ \nu_\bx = \frac{1}{Z} \exp\left(T(\bx)\right),~~\mbox{for any}~~\bx \in \Omega,$$
where $Z$ is the normalization constant (or partition function).
    Further, with respect to $\nu$, we have
    $$ \E_{\nu}[T(\bx)] \geq \left[\max_{\bx \in \cX} T(\bx)\right] - \log |\Omega|. $$
    \end{proposition}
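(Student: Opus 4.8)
The plan is to derive both claims from the Gibbs variational principle, whose engine is the non-negativity of relative entropy. Throughout write $H_{ER}(\mu) = -\sum_{\bx\in\Omega}\mu_\bx\log\mu_\bx$, with the convention $0\log 0 = 0$ so that $H_{ER}$ (and hence $F$) is defined and continuous on the whole compact simplex $\cM(\Omega)$, and set $Z = \sum_{\bx\in\Omega}\exp(T(\bx))$, $\nu_\bx = \exp(T(\bx))/Z$.

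First I would rewrite $F$ in a form that exposes its maximizer. For any $\mu\in\cM(\Omega)$,
$$ F(\mu) = \sum_{\bx}\mu_\bx\,T(\bx) - \sum_{\bx}\mu_\bx\log\mu_\bx = \sum_{\bx}\mu_\bx\log\frac{\exp(T(\bx))}{\mu_\bx} = \log Z - \sum_{\bx}\mu_\bx\log\frac{\mu_\bx}{\nu_\bx} = \log Z - D(\mu\,\|\,\nu), $$
where $D(\mu\|\nu) = \sum_{\bx}\mu_\bx\log(\mu_\bx/\nu_\bx)$ is the Kullback--Leibler divergence; since every $\nu_\bx > 0$ this quantity is finite for all $\mu$, so the identity holds on the entire simplex. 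By Gibbs' (equivalently Jensen's) inequality, $D(\mu\|\nu)\geq 0$ with equality if and only if $\mu=\nu$. Hence $F(\mu)\leq \log Z = F(\nu)$ for all $\mu$, with equality only at $\mu=\nu$, which is exactly the asserted unique maximality of $\nu$.

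Next I would read off the lower bound on $\E_\nu[T(\bx)]$. Evaluating the displayed identity at $\mu=\nu$ gives $F(\nu)=\log Z$, i.e. $\E_\nu[T(\bx)] = \log Z - H_{ER}(\nu)$. Two elementary estimates then finish the argument: first $\log Z = \log\sum_{\bx}\exp(T(\bx)) \geq \log\exp\big(\max_{\bx\in\Omega}T(\bx)\big) = \max_{\bx\in\Omega}T(\bx)$; second, the discrete entropy of any distribution supported on a set of cardinality $|\Omega|$ obeys $H_{ER}(\nu)\leq\log|\Omega|$ (maximum entropy is attained by the uniform distribution, again by Jensen). Combining, $\E_\nu[T(\bx)] = \log Z - H_{ER}(\nu) \geq \max_{\bx\in\Omega}T(\bx) - \log|\Omega|$, as claimed.

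I do not expect a genuine obstacle here: the only points that need care are the convention $0\log 0 = 0$ (so that $F$ is well defined on the closed simplex) and the strict form of Gibbs' inequality, which is precisely what upgrades ``maximized'' to ``uniquely maximized'' in the statement. Everything else is the bookkeeping displayed above, and the proposition is indeed a direct transcription of the cited source.
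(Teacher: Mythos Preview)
Your proof is correct and essentially the same as the paper's: both rewrite $F(\mu)=\log Z - D(\mu\|\nu)$ (the paper writes this as $\log Z + \sum_\bx \mu_\bx\log(\nu_\bx/\mu_\bx)$ and applies Jensen, which is the same inequality) and then use $H_{ER}(\nu)\le\log|\Omega|$ for the second claim. The only cosmetic difference is that the paper obtains the inequality $\log Z\ge\max_\bx T(\bx)$ by evaluating $F$ at the Dirac mass on $\arg\max T$ and invoking $F(\nu)\ge F(\delta_{\bx^*})$, whereas you read it off directly from the definition of $Z$.
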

  \begin{proof}
Observe that the definition of distribution $\nu$ implies that for any $\bx \in \Omega$,
$$T(\bx) = \log Z + \log \nu_{\bx}.$$ Using this, for any distribution
$\mu$ on $\Omega$, we obtain
    \begin{equation*}
      \begin{split}
      F(\mu) &= \sum_{\bx}\mu_{\bx}T(\bx) - \sum_{\bx}\mu_{\bx}\log \mu_{\bx}\\
      &= \sum_{\bx}\mu_{\bx}(\log Z + \log \nu_{\bx}) - \sum_{\bx}{\mu_{\bx}\log\mu_{\bx}}\\
      &= \sum_{\bx}{\mu_{\bx}\log Z} + \sum_{\bx}{\mu_{\bx}\log{\frac{\nu_{\bx}}{\mu_{\bx}}}}\\
      &= \log Z + \sum_{\bx}{\mu_{\bx}\log{\frac{\nu_{\bx}}{\mu_{\bx}}}}\\
      &\le \log Z + \log\biggl(\sum_{\bx}{\mu_{\bx}\frac{\nu_{\bx}}{\mu_{\bx}}}\biggr) \\
      &= \log Z
    \end{split}
    \end{equation*}
    with equality if and only if $\mu=\nu$. To complete other claim of proposition,
    consider  $\bx^* \in \arg\max{T(\bx)}$. {Let
$\mu$ be Dirac distribution $\mu_{\bx} = \ind{\bx=\bx^*}$.} Then, for this distribution
$$ F(\mu) = T(\bx^*).$$
But, $F(\nu) \geq F(\mu)$. Also, the maximal entropy of any distribution on
$\Omega$ is $\log |\Omega|$. Therefore,
\begin{eqnarray}
T(\bx^*) & \leq & F(\nu) \nonumber \\
         & = & \E_{\nu}[T(\bx)] + H_{ER}(\nu) \nonumber \\
         & \leq &  \E_{\nu}[T(\bx)] + \log |\Omega|. \label{eq:ed10}
\end{eqnarray}
Re-arrangement of terms in \eqref{eq:ed10} will imply the second claim
of Proposition \ref{prop:goodpi}. This completes the proof of Proposition
\ref{prop:goodpi}.
\end{proof}

\vspace{.1in}
{\em Proof of Lemma \ref{LEM:GOODPI}.} The proof is based on known observations in the context of classical Loss networks literature (cf. see \cite{Kelly}). In what follows, for simplicity
we use $\pi = \pi(0)$ for a given $\bQ = \bQ(0)$. From \eqref{eq:glauber}
and \eqref{eq:lossnet}, it follows that for both network models, the
stationary distribution $\pi$ has the following form: for any $\x \in \Omega$,
$$ \pi_{\x} \propto \prod_{i} \frac{\exp\left(W_i x_i\right)}{x_i !}
            ~=~\exp\left(\sum_i W_i x_i - \log (x_i!)\right). $$
To apply Proposition \ref{prop:goodpi}, this suggest the choice of
function $T: \cX \to \R$ as
$$ T(\x) = \sum_i W_i x_i - \log (x_i!), ~~\mbox{for any}~~\x \in \Omega.$$
Observe that for any $\x \in \Omega$, $x_i$ takes one of the finitely
many values in wireless or circuit switched network for all $i$. Therefore,
it easily follows that
$$0 ~\leq~ \sum_{i} \log (x_i!) ~\leq~ O(1), $$
where the constant may depend on $n$ and the problem parameter (e.g. $C_{\max}$
in circuit switched network).  Therefore, for any $\x \in \Omega$,
\begin{eqnarray}
T(\x) & \leq & \sum_i W_i x_i \nonumber \\
       & \leq & T(\x) + O(1). \label{eq:ed11}
\end{eqnarray}
Define $\hat{\x} = \arg\max_{\x \in \Omega} \sum_i W_i x_i$.
From \eqref{eq:ed11} and Proposition \ref{prop:goodpi},
it follows that
\begin{eqnarray}
\E_{\pi}\left[\sum_i W_i x_i\right] & \geq & \E_{\pi}\left[T(\x)\right] \nonumber \\
& \geq & \max_{\x \in \Omega} T(\x)-\log|\Omega| \nonumber \\
                     & \geq & T(\hat{\x})-\log|\Omega| \nonumber \\
                     & = & \left(\sum_i W_i \hat{x}_i\right) - O(1) - \log|\Omega|\nonumber \\
                     & = & \left(\max_{\x \in \Omega} \bW \cdot \x \right) - O(1). \label{eq:ed12}
\end{eqnarray}
From the definition of weight in both algorithms (\eqref{eq:weight1} and
\eqref{eq:weight2}) for a given $\bQ$, weight $\bW = [W_i]$ is defined as
\begin{eqnarray*}
W_i & = & \max\left\{f(Q_i),\sqrt{ f({Q}_{\max})} \right\}.
\end{eqnarray*}
Define $\eta \stackrel{\Delta}{=}
\frac{\varepsilon}{4\max_{\x\in \Omega} \|x\|_1}$. To establish the
proof of Lemma \ref{LEM:GOODPI}, we will consider $Q_{\max}$
such that it is large enough satisfying
$$\eta f(Q_{\max}) ~\geq~ \sqrt{ f({Q}_{\max})}.$$
For smaller $Q_{\max}$ we do not need to argue as in that
case \eqref{eq:lemgoodpi} (due to $O(1)$ term) is straightforward.
Therefore, in the remainder we assume $Q_{\max}$ large enough.
For this large enough $Q_{\max}$, it follows that for all $i$,
\begin{eqnarray}
0 & \leq & W_i - f(Q_i) ~\leq~ \sqrt{ f({Q}_{\max})}~\leq~\eta f(Q_{\max}) \label{eq:ed13}
\end{eqnarray}
Using \eqref{eq:ed13}, for any $\x \in \Omega$,
\begin{eqnarray}
0 & \leq & \bW \cdot \x - f(\bQ) \cdot \x ~=~ (\bW - f(\bQ)) \cdot \x \nonumber \\
    & \leq & \|\x\|_1 \|\bW - f(\bQ)\|_\infty\nonumber \\
    & \leq  & \|\x\|_1 \times \eta f(Q_{\max}) \nonumber \\
    & \stackrel{(a)}{\leq} & \frac{\beps}4  f(Q_{\max}) \nonumber \\
    & \stackrel{(b)}{\leq} & \frac{\beps}4 \left(\max_{\y \in \Omega} f(\bQ) \cdot \y\right),\label{eq:L10}
\end{eqnarray}
where (a) is from our choice of $\eta=\frac{\varepsilon}{4\max_{\x\in \Omega} \|x\|_1}$.
For (b), we use the fact that the singleton set $\{i\}$, i.e.
independent set $\{i\}$ for wireless network and a single active
on route $i$ for circuit switched network, is a valid schedule. And,
for $i = \arg\max_{j} Q_j$, it has weight $f(Q_{\max})$. Therefore,
the weight of the maximum weighted schedule among all possible
schedules in $\Omega$ is at least $f(Q_{\max})$.  Finally, using
\eqref{eq:ed12} and \eqref{eq:L10} we obtain
\begin{eqnarray*}
\E_{\pi}\left[f(\bQ)\cdot \x\right]
&\geq & \E_{\pi}\left[\bW\cdot\x\right] - \frac{\varepsilon}4 \left(\max_{\y\in \Omega} f(\bQ)\cdot \y \right) \\
&\geq & \left(\max_{\y\in \Omega} \bW\cdot\y \right)
          - O(1) - \frac{\varepsilon}4 \left(\max_{\y\in \Omega} f(\bQ)\cdot \y \right) \\
&\geq & \left(\max_{\y\in \Omega} f(\bQ)\cdot \y \right)
          - O(1) - \frac{\varepsilon}4 \left(\max_{\y\in \Omega} f(\bQ)\cdot \y \right) \\
& = &  \left(1- \frac{\varepsilon}4\right) \left(\max_{\y\in \Omega} f(\bQ)\cdot \y \right)
          - O(1).
\end{eqnarray*}
This completes the proof of Lemma \ref{LEM:GOODPI}.

\vspace{.1in}
\subsubsection{Step Two}

Let $\mu(t)$ be the distribution of schedule $\x(t)$ over
$\Omega$ at time $t$, given initial state $X(0) = (\bQ(0), \x(0))$.
We wish to show that for any initial condition $\x(0) \in \Omega$,
for $t$ large (but not too large) enough, $\mu(t)$ is close to
$\pi(0)$ if $Q_{\max}(0)$ is large enough. Formal statement
is as follows.
\begin{lemma}\label{lem:adiabetic1}
For a large enough $Q_{\max}(0)$,
\beq
\norm{{\mu}(t)-\pi(0)}_{TV} < \varepsilon/4,\label{eq:adiabetic1}
\eeq
for  $t\in I=[b_1(Q_{\max}(0)), b_2(Q_{\max}(0))]$, where
$b_1,b_2$ are integer-valued functions on $\Rp$ such that
$$b_1,b_2={\sf polylog}\left(Q_{\max}(0)\right)\qquad\mbox{and}\qquad
b_2/b_1=\Theta\left(\log\left(Q_{\max}(0)\right)\right).$$
In above the constants may depend on $\varepsilon, C_{\max}$ and $n$.
\end{lemma}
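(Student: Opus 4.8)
The plan is a \emph{time-scale separation} argument: over an interval of polylogarithmic length the weights $\bW(\tau)$ barely change, so the scheduling chain is essentially time-homogeneous; and because the weights are only of size $\log\log(Q_{\max}(0)+e)$, that essentially frozen chain mixes in polylogarithmic time. Write $q = Q_{\max}(0)$, let $P'$ be the ``frozen'' transition matrix obtained by using the fixed weight $\bW(0)$, i.e.\ $P' = e^{n(GD(\bW(0))-I)}$ in the wireless model and $P' = e^{n\cR(0)(\LN(\exp(\bW(0)))-I)}$ in the circuit-switched model, and let $\mu'(t) = \bdel_{\x(0)}(P')^t$ be the schedule distribution produced if $\bW(0)$ were used at every step. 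I will bound
\[
\norm{\mu(t) - \pi(0)}_{TV} ~\le~ \norm{\mu(t) - \mu'(t)}_{TV} ~+~ \norm{\mu'(t) - \pi(0)}_{TV}
\]
and control the two terms separately, choosing the window $I = [b_1(q), b_2(q)]$ so that each term falls below $\varepsilon/8$ once $q$ is large.

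\emph{Second term (time-homogeneous mixing).} By \eqref{eq:weight1}--\eqref{eq:weight2}, $W_{\max}(0) \le f(q) + O(1) = \log\log(q+e) + O(1)$, so $\exp\bigl(2(n+1)W_{\max}(0)\bigr) = {\sf polylog}(q)$; hence Lemma~\ref{lem:glaumixing} (resp.\ Lemma~\ref{lem:lossmixing}) gives $1 - \norm{P'} \ge 1/{\sf polylog}(q)$, and the same bounds give $\pi_{\min}(0) \ge 1/{\sf polylog}(q)$, so the initial distance $\norm{\frac{\bdel_{\x(0)}}{\pi(0)} - 1}_{2,\pi(0)} \le \sqrt{1/\pi_{\min}(0)} = {\sf polylog}(q)$. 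Feeding these into \eqref{eq:mixing2} and then \eqref{eq:chiTV} yields $\norm{\mu'(t) - \pi(0)}_{TV} \le \varepsilon/8$ for all $t \ge b_1(q)$, where $b_1(q) = \bigl\lceil \bigl(\log(1/\pi_{\min}(0)) + \log(8/\varepsilon)\bigr)/\log(1/\norm{P'})\bigr\rceil = {\sf polylog}(q)$. I then set $b_2(q) = b_1(q)\lceil \log q \rceil$, so that $b_1,b_2 = {\sf polylog}(q)$ and $b_2/b_1 = \Theta(\log q)$, exactly as the statement demands.

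\emph{First term (closeness to the frozen chain).} Work on the event $\mc{E}$ that $\max_i \sup_{\tau \le b_2(q)} |Q_i(\tau) - Q_i(0)| \le {\sf polylog}(q)$; this holds with probability $1 - o(1)$, since in the wireless model $|Q_i(\tau) - Q_i(0)| \le \tau$ deterministically, while in the circuit-switched model the numbers of arrivals, accepted requests, and departures over a ${\sf polylog}(q)$ window are each ${\sf polylog}(q)$ with probability $1 - q^{-\omega(1)}$ by a Chernoff bound. On $\mc{E}$, using that $f = \log\log(\cdot+e)$ is extremely slowly varying \emph{together with} the floor $\sqrt{f(Q_{\max})}$ in the weight definition (which pins the weight of every ``small'' queue to the essentially constant value $\sqrt{f(q)}$), one checks that $\max_{\tau\le b_2(q)}\norm{\bW(\tau) - \bW(0)}_\infty \to 0$ as $q\to\infty$, and moreover fast enough that $b_2(q)\cdot{\sf polylog}(q)\cdot\max_{\tau\le b_2(q)}\norm{\bW(\tau) - \bW(0)}_\infty \to 0$. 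Since the Glauber (resp.\ loss-network) one-step kernel is $O(1)$-Lipschitz in $\bW$ and matrix exponentiation inflates this by at most a factor ${\sf polylog}(q)$ (coming from the $n$, resp.\ $n\cR(0)$, in the exponent), a one-step maximal coupling of the true and frozen chains fails with probability $O\bigl({\sf polylog}(q)\,\norm{\bW(\tau)-\bW(0)}_\infty\bigr)$ per step; summing over $t \le b_2(q)$ steps and adding $\Pr(\mc{E}^c)$ gives $\norm{\mu(t) - \mu'(t)}_{TV} = o(1) < \varepsilon/8$ for $q$ large. Combining the two terms yields \eqref{eq:adiabetic1} for all $t \in I$.

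\emph{Main obstacle.} The delicate step is the first term: showing the weights are effectively frozen over a ${\sf polylog}(q)$ window. Naively $f(0 + b_2(q)) - f(0) = \log\log(b_2(q)+e)$ does \emph{not} vanish, so one cannot argue purely via smallness of the increments of $f$; the dichotomy to verify is that (i) for queues with $Q_i(0)$ below a threshold of order $\exp\bigl(\exp\bigl(\tfrac12\sqrt{\log\log(q+e)}\bigr)\bigr)$ the weight equals the floor $\sqrt{f(Q_{\max})}$ throughout $I$ --- this is where the floor term earns its keep, and where one needs $\sqrt{\log\log q} \gg \log\log\log q$ --- while (ii) for queues above that threshold the increment of $f$ is crushed by $f'(Q_i(0)) = \bigl((Q_i(0)+e)\log(Q_i(0)+e)\bigr)^{-1}$ being far below $1/b_2(q)$. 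Checking that this covers all queues and that the resulting bound on $\norm{\bW(\tau)-\bW(0)}_\infty$ beats $1/\bigl({\sf polylog}(q)\,b_2(q)\bigr)$ is the only genuine computation; the rest is bookkeeping with the already-established mixing bounds.
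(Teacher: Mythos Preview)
Your proposal is correct and follows essentially the same time-scale separation argument as the paper: both (a) show the frozen chain $P(0)$ mixes in ${\sf polylog}(q)$ time because $W_{\max}(0)=O(\log\log q)$ makes the mixing-time bound in Lemmas~\ref{lem:glaumixing}--\ref{lem:lossmixing} polylogarithmic, and (b) show the weights change by $O(1/{\sf superpolylog}(q))$ over a ${\sf polylog}(q)$ window, using exactly the dichotomy you isolate---the floor $\sqrt{f(Q_{\max})}$ pins small queues, while $f'$ crushes large ones.

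The bookkeeping differs in two minor ways. First, the paper uses the telescoping identity $\mu(\tau{+}1)=\mu(0)P(0)^{\tau+1}+\sum_{s=0}^{\tau}e(\tau{-}s)P(0)^s$ with $e(\tau)=\E[\tilde\mu(\tau)(P(\tau)-P(0))]$ and bounds $\|e(\tau)\|_1$ via Lemma~\ref{lem:last}, whereas you use a maximal-coupling/triangle-inequality decomposition; these are interchangeable devices for comparing an inhomogeneous chain to its frozen version. Second, the paper bounds $\E[|W_i(\tau)-W_i(0)|]$ directly in expectation (its Cases (i)/(ii) split at the ``natural'' threshold $f(Q_i(0))\gtrless\sqrt{f(Q_{\max}(0))}$ and then carefully handle the event that the $\max$ in the weight definition switches), while you work pathwise on a high-probability event and place the threshold at $\tfrac12\sqrt{f(q)}$ to create a buffer. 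Your version is slightly cleaner for the boundary case, since on $\mc{E}$ both arguments of the $\max$ move by $o(1)$ once $Q_i(0)$ exceeds your threshold, and the elementary bound $|\max\{a,b\}-\max\{c,d\}|\le\max\{|a-c|,|b-d|\}$ finishes it; the paper's expectation route avoids introducing $\mc E$ but pays with a longer case analysis.
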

The notation ${\sf polylog}(z)$ represents a positive real-valued function of $z$ that scales no
faster than a finite degree polynomial of $\log z$.

\vspace{.1in}
\noindent{\em Proof of Lemma \ref{lem:adiabetic1}.} We shall prove
this Lemma for the wireless network. The proof of buffered circuit
switch network follows in an identical manner. Hence we shall skip
it. Therefore, we shall assume $\Omega = \cI(G)$ and $\x(t) = \sig(t)$.

First, we establish the desired claim for integral times. The argument
for non integral times will follow easily as argued near the end of
this proof. For $t = \tau \in \Zp$, we have
{\begin{eqnarray*}
\mu(\tau+1)&=&\E\left[\bdel_{\sig(\tau+1)}\right]\\
&=&\E\left[\bdel_{\sig(\tau)}\cdot P(\tau)\right],
\end{eqnarray*}
where recall that $P(\tau) {=} e^{n(GW(\bW(\tau))-I)}$ and
the last equality follows from \eqref{eq:fz1}. Again recall that
the expectation is with respect to the joint distribution
of $\{\bQ(\tau), \sig(\tau)\}$. Hence, it follows that
\begin{eqnarray*}
\mu(\tau+1)&=&\E\left[\bdel_{\sig(\tau)}\cdot P(\tau)\right]\\
&=&\E\left[\E\left[\bdel_{\sig(\tau)}\cdot P(\tau)\,\Big|\,\bQ(\tau)\right]\right]\\
&\stackrel{(a)}{=}&\E\left[\E\left[\bdel_{\sig(\tau)}\,\Big|\,\bQ(\tau)\right]\cdot P(\tau)\right]\\
&=&\E\left[\tilde{\mu}(\tau)\cdot P(\tau)\right],
\end{eqnarray*}
where
\begin{eqnarray*}
\tilde{\mu}(\tau)=\tilde{\mu}(\bQ(\tau))\stackrel{\Delta}{=}\E\left[\bdel_{\sig(\tau)}\,\Big|\,\bQ(\tau)\right].
\end{eqnarray*}
In above the expectation is taken with respect to the conditional marginal
distribution of $\sig(\tau)$ given $\bQ(\tau)$; (a) follows since
$P(\tau)$ is a function of $\bQ(\tau)$.}  Next, we establish the relation
between $\mu(\tau)$ and $\mu(\tau+1)$.
{
\begin{eqnarray*}
\mu(\tau+1)
&=&\E\left[\tilde{\mu}(\tau)\cdot P(\tau)\right]\\
&=&\E\left[\tilde{\mu}(\tau)\cdot P(0)\right]
+\E\left[\tilde{\mu}(\tau)\cdot (P(\tau)-P(0))\right]\\
&=&\E\left[\tilde{\mu}(\tau)\right]\cdot P(0)
+e(\tau)\\
&=&\mu(\tau)\cdot P(0)
+e(\tau),
\end{eqnarray*}
where $e(\tau)  \stackrel{\Delta}{=}\E\left[\tilde{\mu}(\tau)\cdot (P(\tau)-P(0))\right]$.
Here the expectation is with respect to $\bQ(\tau)$.}
Similarly,
\begin{eqnarray}
\mu(\tau+1)
&=&\mu(\tau)\cdot P(0)
+e(\tau)\nonumber\\
&=&\left(\mu(\tau-1)\cdot P(0)
+e(\tau-1)\right)\cdot P(0)
+e(\tau)\nonumber\\
&=&\mu(\tau-1)\cdot P(0)^2
+e(\tau-1)\cdot P(0)
+e(\tau). \nonumber
\end{eqnarray}
Therefore, recursively we obtain
\begin{eqnarray}
\mu(\tau+1) &=&\mu(0)\cdot P(0)^{\tau+1}
+\sum_{s=0}^{\tau}
e(\tau-s)\cdot P(0)^s.\label{eq:relmu}
\end{eqnarray}
We will choose $b_1$ (which will depend on $Q_{\max}(0)$)
such that for $\tau \geq b_1$,
\begin{eqnarray}
\left\|\mu(0)\cdot P(0)^{\tau}-\pi(0)\right\|_{TV}&\leq& \varepsilon/8.\label{eq:defc1}
\end{eqnarray}
That is, $b_1$ is the {\em mixing time} of $P(0)$. Using inequalities
\eqref{eq:mixing2}, \eqref{eq:chiTV} and Lemma \ref{lem:glaumixing}, it
follows that
$$ b_1 \equiv b_1(Q_{\max}(0)) ~=~ {\sf polylog}\left(Q_{\max}(0)\right).$$
In above, constants may depend on $n$ and $\varepsilon$. Therefore,
from \eqref{eq:relmu} and \eqref{eq:defc1}, it suffices to show that
\begin{eqnarray}
\left\|\sum_{s=0}^{\tau-1} e(\tau-1-s)\cdot P(0)^s\right\|_1\leq \varepsilon/8,\label{eq:errsum}
\end{eqnarray}
for $\tau \in I=[b_1,b_2]$ with an appropriate choice of $b_2 = b_2(Q_{\max}(0))$.
To this end, we choose
$$ b_2 \equiv b_2(Q_{\max}(0)) ~=~ \lceil b_1 \log(Q_{\max}(0)) \rceil.$$
Thus, $b_2(Q_{\max}(0)) ~=~ {\sf polylog}\left(Q_{\max}(0)\right)$ as well.
With this choice of $b_2$, we obtain the following bound on $e(\tau)$ to conclude
\eqref{eq:errsum}.
{
\begin{eqnarray}
\|e(\tau)\|_1&=&\|\E\left[\tilde{\mu}(\tau)\cdot (P(\tau)-P(0))\right]\|_1\nonumber\\
&\leq&\E\left[\|\tilde{\mu}(\tau)\cdot (P(\tau)-P(0))\|_1\right]\nonumber\\
&\stackrel{(a)}{\leq}&O\left(\E\left[\|P(\tau)-P(0)\|_{\infty}\right]\right)\nonumber\\
&\stackrel{(b)}{=}&O\left(\E\left[\left\|GW(\bW(\tau))-GW(\bW(0))\right\|_{\infty}\right]\right)\nonumber\\
&\stackrel{(c)}{=}&O\left(\E\left[\max_i \left|\frac1{1+\exp(W_i(\tau))}-\frac1{1+\exp(W_i(0))}\right|\right]\right)\nonumber\\
&\stackrel{(d)}{=}&O\left(\E\left[\max_i\left|W_i(\tau)-W_i(0)\right|\right]\right)\nonumber\\
&\stackrel{(e)}{=}&O\left(\max_i\,\E\left[\left|W_i(\tau)-W_i(0)\right|\right]\right)\label{eq:boundet}.
\end{eqnarray}
}
In above, (a) follows from the standard norm inequality and the
fact that $\|\tilde{\mu}(\tau)\|_1 = 1$, (b) follows from Lemma \ref{lem:last}
in Appendix,
(c) follows directly from the definition of transition matrix $GD(\bW)$, (d)
follows from 1-Lipschitz\footnote{A function $f:\R\rightarrow\R$ is $k$-Lipschitz if $|f(s)-f(t)|\leq k|s-t|$ for all $s,t\in\R$.} property of function $1/(1+e^x)$ and (e) follows from
the fact that vector $\bW(\tau)$ being $O(1)$ dimensional\footnote{We note here that the $O(\cdot)$ notation means existences of constants that do not depend scaling quantities such as time $\tau$ and $\bQ(0)$; however it may depend on the fixed system parameters such as number of queues. The use of this terminology is to retain the clarity of exposition.}.

Next, we will show that for
all $i$ and $\tau\leq b_2$,
{
\begin{eqnarray}
\E\left[\left|W_i(\tau)-W_i(0)\right|\right]
&=&O\left(\frac1{{\sf superpolylog}\left(Q_{\max}(0)\right)}\right),
\label{eq:boundet2}
\end{eqnarray}
}
the notation ${\sf superpolylog}(z)$ represents a positive
real-valued function of $z$ that scales
faster than any finite degree polynomial of $\log z$.
This is enough to conclude \eqref{eq:errsum} (hence complete the
proof of Lemma \ref{lem:adiabetic1}) since
\begin{eqnarray*}
\left\|\sum_{s=0}^{\tau-1}e(\tau-1-s)\cdot P(0)^s\right\|_1&\leq&
\sum_{s=0}^{\tau-1}\left\|e(\tau-1-s)\cdot P(0)^s\right\|_1\\
&=&\sum_{s=0}^{\tau-1}O\left(\left\|e(\tau-1-s)\right\|_1\right)\\
&\stackrel{(a)}{=}&O\left(\frac{\tau}{{\sf superpolylog}\left(Q_{\max}(0)\right)}\right)\\
&\stackrel{(b)}{\leq}&\frac{\varepsilon}4,
\end{eqnarray*}
where we use \eqref{eq:boundet} and \eqref{eq:boundet2} to obtain (a),
(b) holds for large enough $Q_{\max}(0)$ and $\tau\leq b_2={\sf polylog}\left(Q_{\max}(0)\right)$.

\vspace{.1in}
Now to complete the proof, we only need to establish \eqref{eq:boundet2}.
This is the step that utilizes `slowly varying' property of function
$f(x) = \log\log (x+e)$. First, we provide an intuitive sketch of the
argument. Somewhat involved details will be follow. To explain the
intuition behind \eqref{eq:boundet2}, let us consider a simpler
situation where $i$ is such that $Q_i(0) = Q_{\max}(0)$ and
$f(Q_i(\tau)) > \sqrt{f(Q_{\max}(\tau))}$ for a given
$\tau \in [0,b_2]$. That is, let $W_i(\tau) = f(Q_i(\tau))$. Now,
consider following sequence of inequalities:
\begin{eqnarray}
|W_i(\tau) - W_i(0)| & = & |f(Q_i(\tau) - f(Q_i(0))| \nonumber \\
 & \stackrel{(a)}{\leq} & f'(\zeta) |Q_i(\tau) - Q_i(0)|, \qquad \text{for some $\zeta$ around $Q_i(0)$} \nonumber \\
 & \stackrel{(b)}{\leq}  & f'(\min\{Q_i(\tau), Q_i(0)\}) O(\tau) \nonumber \\
 & \stackrel{(c)}{\leq} & f'(Q_i(0)-O(\tau)) O(\tau) \nonumber \\
 & \stackrel{(d)}{=} & O\left(\frac{\tau}{Q_i(0)}\right). \label{eq:fz3}
\end{eqnarray}
In above, (a) follows from the mean value theorem; (b) follows from
monotonicity of $f'$ and Lipschitz property
of $Q_i(\cdot)$ (as a function of $\tau$)
-- which holds deterministically for wireless network
and probabilistically for circuit switched network; (c) uses the
same Lipschitz property; and (d) uses the fact that $\tau \leq b_2$ and
$b_2 = {\sf polylog}(Q_{\max}(0))$, $Q_{\max}(0) = Q_i(0)$.
Therefore, effectively the bound of \eqref{eq:fz3} is
$O(1/{\sf superpolylog}(Q_{\max}(0))$.

The above explains the gist of the argument that is to follow. However,
to make it precise, we will need to provide lots more details. Toward this,
we consider the following two cases: (i) $f(Q_i(0))\geq \sqrt{f(Q_{\max}(0))}$,
and (ii) $f(Q_i(0))< \sqrt{f(Q_{\max}(0))}$. In what follows, we provide
detailed arguments for (i). The arguments for case (ii) are similar in
spirit and will be provided later in the proof.

\vspace{.1in}
\noindent{\em Case (i):} Consider an $i$ such that $f(Q_i(0))\geq \sqrt{f(Q_{\max}(0))}$.
Then,
{
\begin{eqnarray}
& & \E\left[\left|W_i(\tau)-W_i(0)\right|\right]\nonumber \\
& & ~ = \E\left[\left|W_i(\tau)-f(Q_i(0))\right|\right]\nonumber\\
& & ~ =\E\left[\left|f(Q_i(\tau))-f(Q_i(0))\right|\cdot
\bind_{\left\{f(Q_i(\tau))\geq \sqrt{f(Q_{\max}(\tau))}\right\}}\right]\nonumber\\
&& \qquad + ~\E\left[\left|\sqrt{f(Q_{\max}(\tau))}-f(Q_i(0))\right|
\cdot \bind_{\left\{f(Q_i(\tau))< \sqrt{f(Q_{\max}(\tau))}\right\}}\right],\label{eq:js1}
\end{eqnarray}}
where each equality follows from \eqref{eq:weight1}.
The first term in \eqref{eq:js1} can be bounded as follows
{
\begin{eqnarray}
&&\E\left[\left|f(Q_i(\tau))-f(Q_i(0))\right|\cdot
\bind_{\{f(Q_i(\tau))\geq \sqrt{f(Q_{\max}(\tau))}\}}\right]\nonumber\\
& & \leq\E\left[\left|f(Q_i(\tau))-f(Q_i(0))\right|\right]\nonumber\\
& & \stackrel{(o)}{\leq} \E\left[f^{\prime}\left(\min\{Q_i(\tau),Q_i(0)\}\right)|Q_i(\tau)-Q_i(0)|\right]\nonumber\\
& & \stackrel{(a)}{\leq} \sqrt{\E\left[f^{\prime}\left(\min\{Q_i(\tau),Q_i(0)\}\right)^2\right]}\cdot
\sqrt{\E\left[(Q_i(\tau)-Q_i(0))^2\right]}\nonumber\\
& & \stackrel{(b)}{\leq} \sqrt{f^{\prime}\left(\frac{Q_i(0)}2\right)^2+\Theta\left(\frac{\tau}{Q_i(0)}\right)} \cdot O(\tau)\nonumber\\
& & \stackrel{(c)}{\leq} \sqrt{f^{\prime}\left(\frac12f^{-1}\left(\sqrt{f(Q_{\max}(0))}\right)\right)^2
+\Theta\left(\frac{\tau}{f^{-1}\left(\sqrt{f(Q_{\max}(0))}\right)}\right)} \cdot O(\tau)\nonumber\\
& & \stackrel{(d)}{=} O\left(\frac1{{\sf superpolylog}\left(Q_{\max}(0)\right)}\right).\label{eq:js2}
\end{eqnarray}}
In above, (o) follows from concavity of $f$. For (a) we use the standard
Cauchy-Schwarz inequality $\E[XY]\leq \sqrt{\E[X^2]}\sqrt{\E[Y^2]}$.
For (b), note that given $Q_i(0)$, $\E[[Q_i(0) - Q_i(\tau)]^2] = O(\tau^2)$ for
both network models -- for wireless network, it is deterministically true due to
Lipschitz property of $\bQ(\cdot)$; for circuit switched network, it is due to
the fact that the arrival as well as (the overall) departure processes
are bounded rate Poisson processes.  Given this, using Markov's inequality
it follows that $$\Pr\left(\min\{Q_i(\tau), Q_i(0)\} \leq \frac{Q_i(0)}2\right) ~=~ O\left(\frac{\tau}{Q_i(0)}\right).$$
Finally, using the fact that $\sup_{y \in \Rp} f'(y) = O(1)$, we obtain (b).
Now (c) follows from the condition of $Q_i(0)$ that
$f(Q_i(0))\geq \sqrt{f(Q_{\max}(0))}$. And, (d) is implied by
$\tau\leq b_2={\sf polylog}(Q_{\max}(0))$, $f(x)=\log\log(x+e)$.

Next, we bound the second term in \eqref{eq:js1}. We will use notation
$$A(\tau) = \left\{f(Q_i(\tau))< \sqrt{f(Q_{\max}(\tau))} \,\&\,
\sqrt{f(Q_{\max}(\tau))}\geq f(Q_i(0))\right\},  $$
$$ B(\tau) = \left\{f(Q_i(\tau))< \sqrt{f(Q_{\max}(\tau))} \,\&\,
\sqrt{f(Q_{\max}(\tau))} <  f(Q_i(0))\right\}.$$
Then,
{
\begin{eqnarray}
&& \E\left[\left|\sqrt{f(Q_{\max}(\tau))}-f(Q_i(0))\right|
\cdot \bind_{\left\{f(Q_i(\tau))< \sqrt{f(Q_{\max}(\tau))}\right\}}\right]\nonumber\\
& & \quad =\E\left[\left(\sqrt{f(Q_{\max}(\tau))}-f(Q_i(0))\right)
\cdot \bind_{A(\tau)}\right] \nonumber \\ & & \quad\qquad + \quad \E\left[\left(f(Q_i(0))-\sqrt{f(Q_{\max}(\tau))}\right)
\cdot \bind_{B(\tau)}\right] \nonumber \\
& & \quad \stackrel{(a)}{\leq} \E\left[\left(\sqrt{f(Q_{\max}(\tau))}-\sqrt{f(Q_{\max}(0))}\right)
\cdot \bind_{A(\tau)}\right] \nonumber\\
& & \quad\qquad + \quad  \E\left[\left(f(Q_i(0))-f(Q_i(\tau))\right)
\cdot \bind_{B(\tau)}\right]\nonumber\\
& & \quad \stackrel{(b)}{\leq} \E\left[|f(Q_{\max}(\tau))-f(Q_{\max}(0))|\right]
+\E\left[|f(Q_i(0))-f(Q_i(\tau))|\right]\nonumber\\
& & \quad =  O\left(\frac1{{\sf superpolylog}\left(Q_{\max}(0)\right)}\right).
\label{eq:js3}
\end{eqnarray}}
In above, (a) follows because we are considering case (i) with $f(Q_i(0)) \geq
\sqrt{f(Q_{\max}(0))}$ and definition of event $B(\tau)$; (b) follows from
1-Lipschitz property of $\sqrt{\cdot}$ function and appropriate removal of
indicator random variables. For the final conclusion, we observe that
the arguments used to establish \eqref{eq:js2} imply the
$O(1/{\sf superpolylog}(Q_{\max}(0)))$ bound on both the terms
in very similar manner: for the term corresponding
to $|f(Q_{\max}(\tau))-f(Q_{\max}(0))|$, one has to adapt arguments
of \eqref{eq:js2} by essentially replacing the index $i$ by $\max$.
This concludes the proof of \eqref{eq:boundet2}  for case
(i) of $f(Q_i(0)) \geq \sqrt{f(Q_{\max}(0))}$.

\vspace{.1in}
\noindent{\em Case (ii):} Now consider $i$ such that
$f(Q_i(0)) < \sqrt{f(Q_{\max}(0))}$.  Then,
{
\begin{eqnarray}
& & \E\left[\left|W_i(\tau)-W_i(0)\right|\right] 
~ = \E\left[\left|W_i(\tau)-\sqrt{f(Q_{\max}(0))}\right|\right]\nonumber\\
& & ~ =\E\left[\left|f(Q_i(\tau))-\sqrt{f(Q_{\max}(0))}\right|\cdot
\bind_{\left\{f(Q_i(\tau))\geq \sqrt{f(Q_{\max}(\tau))}\right\}}\right]\nonumber\\
&& \qquad + ~\E\left[\left|\sqrt{f(Q_{\max}(\tau))}-\sqrt{f(Q_{\max}(0))}\right|
\cdot \bind_{\left\{f(Q_i(\tau))< \sqrt{f(Q_{\max}(\tau))}\right\}}\right].\label{eq:js1a}
\end{eqnarray}}
First observe that by 1-Lipschitz property of $\sqrt{\cdot}$ function, the
second term can be bounded as (similar to \eqref{eq:js3})
{
\begin{eqnarray}
& & \E\left[\left|\sqrt{f(Q_{\max}(\tau))}-\sqrt{f(Q_{\max}(0))}\right|
\cdot \bind_{\left\{f(Q_i(\tau))< \sqrt{f(Q_{\max}(\tau))}\right\}}\right] \nonumber \\
& & \quad \leq  \E\left[\left|{f(Q_{\max}(\tau))}-{f(Q_{\max}(0))}\right|\right] \nonumber \\
& & \quad =  O\left(\frac1{{\sf superpolylog}\left(Q_{\max}(0)\right)}\right).
\end{eqnarray}}
Therefore, we are left with proving the first term of \eqref{eq:js1a}. We
will follow similar line of arguments as those used for \eqref{eq:js3}.
Define
$$A'(\tau) = \left\{f(Q_i(\tau)) \geq \sqrt{f(Q_{\max}(\tau))} \,\&\,
\sqrt{f(Q_{\max}(0))}\geq f(Q_i(\tau))\right\},  $$
$$ B'(\tau) = \left\{f(Q_i(\tau)) \geq \sqrt{f(Q_{\max}(\tau))} \,\&\,
\sqrt{f(Q_{\max}(0))} <  f(Q_i(\tau))\right\}.$$
Then,
{
\begin{eqnarray}
&& \E\left[\left|f(Q_i(\tau))-\sqrt{f(Q_{\max}(0))}\right|
\cdot \bind_{\left\{f(Q_i(\tau))\geq \sqrt{f(Q_{\max}(\tau))}\right\}}\right]\nonumber\\
& & \quad =\E\left[\left(\sqrt{f(Q_{\max}(0))}-f(Q_i(\tau))\right)
\cdot \bind_{A(\tau)}\right] \nonumber \\ & & \quad\qquad + \quad \E\left[\left(f(Q_i(\tau))-\sqrt{f(Q_{\max}(0))}\right)
\cdot \bind_{B(\tau)}\right] \nonumber \\
& & \quad \stackrel{(a)}{\leq} \E\left[\left(\sqrt{f(Q_{\max}(0))}-\sqrt{f(Q_{\max}(\tau))}\right)
\cdot \bind_{A(\tau)}\right] \nonumber\\
& & \quad\qquad + \quad  \E\left[\left(f(Q_i(\tau))-\sqrt{f(Q_{\max}(0))}\right)
\cdot \bind_{B(\tau)}\right]\nonumber\\
& & \quad \stackrel{(b)}{\leq} O\left(\frac1{{\sf superpolylog}\left(Q_{\max}(0)\right)}\right) \nonumber \\
& & \qquad \qquad + \quad \E\left[\left(f(Q_i(\tau))-\sqrt{f(Q_{\max}(0))}\right)
\cdot \bind_{B(\tau)}\right]. \nonumber \\
\label{eq:js3a}
\end{eqnarray}}
In above, (a) follows because we are considering case (i) with $f(Q_i(\tau)) \geq
\sqrt{f(Q_{\max}(\tau))}$ and definition of event $B(\tau)$; (b) follows from
1-Lipschitz property of $\sqrt{\cdot}$ function and appropriate removal of
indicator random variables as follows:
{
\begin{eqnarray}
& & \E\left[\left(\sqrt{f(Q_{\max}(0))}-\sqrt{f(Q_{\max}(\tau))}\right)\cdot \bind_{A(\tau)}\right] \nonumber \\
& & \qquad \leq   \E\left[|f(Q_{\max}(\tau))-f(Q_{\max}(0))|\right] \nonumber \\
& & \qquad =  O\left(\frac1{{\sf superpolylog}\left(Q_{\max}(0)\right)}\right).
\end{eqnarray}}
Finally, to complete the proof of case (ii) using \eqref{eq:js1a}, we wish
to establish
{
\begin{eqnarray}
\E\left[\left(f(Q_i(\tau))-\sqrt{f(Q_{\max}(0))}\right)
\cdot \bind_{B(\tau)}\right] & = &  O\left(\frac1{{\sf superpolylog}\left(Q_{\max}(0)\right)}\right).\nonumber  \\
\label{eq:js3b}
\end{eqnarray}}
Now suppose $x \in \Rp$ be such that $f(x) = \sqrt{f(Q_{\max}(0)}$. Then,
{
\begin{eqnarray}
& & \E\left[\left(f(Q_i(\tau))-\sqrt{f(Q_{\max}(0))}\right)\cdot \bind_{B(\tau)}\right] \nonumber \\
& & \quad =  \E\left[\left(f(Q_i(\tau))- f(x)\right)\cdot \bind_{B(\tau)}\right] \nonumber \\
& & \quad \stackrel{(a)}{\leq} \E\left[f'(x) (Q_i(\tau) - x) \cdot \bind_{B(\tau)}\right] \nonumber \\
& & \quad = f'(x)~ \E\left[ (Q_i(\tau) - x) \cdot \bind_{B(\tau)}\right] \nonumber \\
& & \quad \stackrel{(b)}{\leq} f'(x)~ \E\left[ (Q_i(\tau) - Q_i(0)) \cdot \bind_{B(\tau)}\right] \nonumber \\
& & \quad \leq f'(x)~ \E\left[ |Q_i(\tau) - Q_i(0)|\right] \nonumber \\
& & \quad \stackrel{(c)}{=} f'(x)~ O\left({\tau}\right) \nonumber \\
& & \quad \stackrel{(d)}{=}  O\left(\frac1{{\sf superpolylog}\left(Q_{\max}(0)\right)}\right).\label{eq:js3c}
\end{eqnarray}}
In above, (a) follows from concavity of $f$; (b) from $Q_i(0) \leq x$ and
$Q_i(\tau) \geq x$ implied by case (ii) and $B'(\tau)$ respectively; (c) follows
from arguments used earlier that for any $i$, $\E[(Q_i(\tau)-Q_i(0))^2] = O(\tau^2)$;
(d) follows from $\tau \leq b_2 = {\sf polylog}\left(Q_{\max}(0)\right)$ and
$$f'(x) = O\left(\frac1{{\sf superpolylog}\left(Q_{\max}(0)\right)}\right).$$
This complete the proof of \eqref{eq:boundet2} for both cases and
the proof of Lemma \ref{lem:adiabetic1} for integral time steps. A final
remark validity of this result about non-integral times is in order.

Consider $t \in I$ and $t \notin \Zp$. Let $\tau = \lfloor t \rfloor$ and $t = \tau + \delta$
for $\delta \in (0,1)$. Then, it follows that (using formal definition
$P^\delta$ as in \eqref{eq:fz1-1})
{
\begin{eqnarray}
\mu(t)~=~ \mu(\tau+\delta) & = & \mu(\tau) P^{\delta}(0) + \E\left[\tilde{\mu}(\tau) (P^\delta(\tau)-P^\delta(0))\right] \nonumber \\
& = & \mu(0)P(0)^\tau P^\delta(0) + e(\tau+\delta). \label{eq:fz5}
\end{eqnarray}}
Now it can be checked that $P^\delta(0)$ is a probability matrix and
has $\pi(0)$ as its stationary distribution for any $\delta > 0$; and
we have argued that for $\tau$ large enough $\mu(0)P(0)^\tau$ is close
to $\pi(0)$. Therefore, $\mu(0)P(0)^\tau P^\delta(0)$ is also equally
close to $\pi(0)$. For $e(\tau+\delta)$, it can be easily argued that
the bound obtained in \eqref{eq:boundet} for $e(\tau+1)$ will dominate the bound
for $e(\tau+\delta)$. Therefore, the statement of Lemma holds for
any non-integral $t$ as well. This complete the proof of
Lemma \ref{lem:adiabetic1}.

\vspace{.1in}
\subsubsection{Step Three: Wireless Network}\label{sssec:step3wireless}

In this section, we prove Lemma \ref{lem:two} for the wireless network model.
For Markov process $X(t) = (\bQ(t), \sig(t))$, we consider Lyapunov function
$$L(X(t)) = \sum_{i} F(Q_i(t)), $$
where $F(x) = \int_{0}^x f(y)~ dy$ and recall that $f(x) =
\log \log (x+e)$.
For this Lyapunov function, it suffices to find appropriate functions
$h$ and $g$ as per Lemma \ref{lem:two} for a large enough $Q_{\max}(0)$.
Therefore, we assume that $Q_{\max}(0)$ is large enough so that it
satisfies the conditions of Lemma \ref{lem:adiabetic1}.  To this end,
from Lemma \ref{lem:adiabetic1}, we have that for $t \in I$,
\begin{eqnarray*}
    \abs{\E_{\pi(0)}[f(\bQ(0))\cdot\sig] - \E_{{\mu}(t)}[f(\bQ(0))\cdot\sig]}
        &\leq& \frac{\varepsilon}4 \left(\max_{\rhO\in \cI(G)}{f(\bQ(0))\cdot\rhO}\right).
  \end{eqnarray*}
Thus from Lemma \ref{LEM:GOODPI}, it follows that
\begin{equation}
    \E_{{\mu}(t)}[f(\bQ(0))\cdot\sig] 
    ~\ge~ \left(1-\frac{\beps}2\right)\left(\max_{\rhO \in \cI(G)} f(\bQ(0))\cdot\rhO  \right) - O(1).\label{eq:L47}
\end{equation}
Now we can bound the difference between $L(X(\tau+1))$ and $L(X(\tau))$ as follows.
\begin{eqnarray}
& &   L(X(\tau+1)) - L(X(\tau))  =  (F(\bQ(\tau+1)) - F(\bQ(\tau))) \cdot \bone \nonumber\\
    & & \quad \leq~ f(\bQ(\tau+1)) \cdot (\bQ(\tau+1)-\bQ(\tau)), \nonumber\\
   &  & \quad {\leq}~ f(\bQ(\tau)) \cdot (\bQ(\tau+1)-\bQ(\tau))+n,\nonumber
    \end{eqnarray}
where the first inequality is from the convexity of $F$ and
the last inequality follows from the fact that $f(Q)$ is $1$-Lipschitz. Therefore,
\begin{eqnarray}
& &   L(X(\tau+1)) - L(X(\tau))  =  (F(\bQ(\tau+1)) - F(\bQ(\tau))) \cdot \bone \nonumber\\
    & & \quad \leq  f(\bQ(\tau)) \cdot \left(A(\tau, \tau+1) - \int^{\tau+1}_\tau \sig(y)\bone_{\{Q_i(y) > 0\}}\, dy\right)+n \nonumber\\
    & & \quad \stackrel{(a)}{\leq}  f(\bQ(\tau)) \cdot A(\tau, \tau+1)-\int^{\tau+1}_\tau f(\bQ(y)) \cdot \sig(y) \bone_{\{Q_i(y) > 0\}}  \,dy+2n\nonumber\\
   & & \quad =  f(\bQ(\tau)) \cdot A(\tau, \tau+1)-\int^{\tau+1}_\tau f(\bQ(y)) \cdot \sig(y) \,dy+2n,\label{eq:L7}
    \end{eqnarray}
where again (a) follows from the fact that $f(Q)$ is $1$-Lipschitz. Given
initial state $X(0) = \bx$, taking the expectation of \eqref{eq:L7} for $\tau, \tau +1 \in I$,
\begin{eqnarray*}
 & &  \E_\bx[L(X(\tau+1)) - L(X(\tau))] \nonumber \\
 & & \qquad \leq  \E_\bx[f(\bQ(\tau)) \cdot A(\tau, \tau+1)]
      -\int^{\tau+1}_\tau \E_\bx[f(\bQ(y)) \cdot \sig(y)] \,dy ~+~ 2n \\
& & \qquad =  \E_\bx[f(\bQ(\tau)) \cdot \lamb] -\int^{\tau+1}_\tau
\E_{\bx}[f(\bQ(y)) \cdot \sig(y)] \,dy ~+ ~2n,
 \end{eqnarray*}
  where the last equality follows from the independence between $\bQ(\tau)$ and
  $A(\tau,\tau+1)$ (recall, Bernoulli arrival process). Therefore,
\begin{eqnarray*}
&& \E_\bx[L(X(\tau+1))- L(X(\tau))]  \nonumber \\
& \leq&   \E_\bx[f(\bQ(\tau)) \cdot \lamb]
   -\int^{\tau+1}_\tau \E_{\bx}[f(\bQ(0)) \cdot \sig(y)] \,dy \\
   & &\qquad\qquad   -\int^{\tau+1}_\tau \E_{\bx}[\left(f(\bQ(y))-f(\bQ(0))\right) \cdot \sig(y)] \,dy+2n\\
      &\stackrel{(a)}{\leq}&   f(\bQ(0)+\tau\cdot \bone) \cdot \lamb
      -\int^{\tau+1}_\tau \E_{\bx}[f(\bQ(0)) \cdot \sig(y)] \,dy\\
      & &\qquad-\int^{\tau+1}_\tau \left(f(\bQ(0)-y\cdot \bone)-f(\bQ(0))\right) \cdot \bone \,dy+ 2n \\
      &\stackrel{(b)}{\leq}&   f(\bQ(0)) \cdot \lamb +f(\tau \cdot \bone) \cdot \lamb
      -\left(1-\frac{\beps}2\right)\left(\max_{\rhO\in\cI(G)}f(\bQ(0)) \cdot \rhO\right) \qquad \qquad \qquad \qquad\\
      && \qquad+\int^{\tau+1}_\tau f(y\cdot \bone) \cdot \bone \,dy+O(1)\\
            &{\leq}&   f(\bQ(0)) \cdot \lamb
      -\left(1-\frac{\beps}2\right)\left(\max_{\rhO\in\cI(G)}f(\bQ(0)) \cdot \rhO\right)+ 2n f(\tau+1)+O(1).
      \end{eqnarray*}
In above, (a) uses Lipschitz property of $\bQ(\cdot)$ (as a function of
$\tau$); (b) follows from \eqref{eq:L47} and the inequality that for
$f(x) = \log\log (x+e)$, $f(x)+f(y)+ \log 2 \geq f(x+y)$ for all
$x,y\in \R_+$. The $O(1)$ term is constant, dependent on $n$, and captures
the constant from \eqref{eq:L47}.

\noindent Now since $\lamb \in (1-\beps)\Conv(\cI(G))$, we obtain
\begin{eqnarray*}
 & & \E_\bx[L(X(\tau+1))- L(X(\tau))] \\
      & & \qquad \leq   -\frac{\beps}2\left(\max_{\rhO\in\cI(G)}f(\bQ(0)) \cdot \rhO\right)+2n\,f(\tau+1)+O(1)\\
      & & \qquad \leq    -\frac{\beps}2f(Q_{\max}(0))+2n\,f(\tau+1)+O(1).
  \end{eqnarray*}
Therefore, summing $\tau$ from  $b_1 = b_1(Q_{\max}(0))$ to $b_2= b_2(Q_{\max}(0))$, we have
\begin{eqnarray}
 & &     \E_\bx\left[L(X(b_2)) - L(X(b_1))\right ] \nonumber \\
    & & \qquad \le - \frac{\beps}{2}(b_2-b_1)f(Q_{\max}(0))+2n\sum_{\tau=b_1}^{b_2-1}f(\tau+1) +O(b_2-b_1)\nonumber\\
    & & \qquad \le - \frac{\beps}{2}(b_2-b_1)f(Q_{\max}(0))+2n(b_2-b_1)f(b_2) +O(b_2-b_1).
   \end{eqnarray}
Thus, we obtain
\begin{eqnarray}
&&     \E_\bx\left[L(X(b_2)) - L(X(0))\right ]\nonumber\\
     &=&     \E_\bx\left[L(X(b_1)) - L(X(0))\right ]+
          \E_\bx\left[L(X(b_2)) - L(X(b_1))\right ]\nonumber\\
    &\stackrel{(a)}{\le}& \E_\bx\left[f(\bQ(b_1))\cdot (\bQ(b_1)-\bQ(0))\right]
    -\frac{\beps}{2}(b_2-b_1)f(Q_{\max}(0)) \nonumber\\
    & & \qquad \qquad +2n\sum_{\tau=b_1}^{b_2-1}f(\tau+1)      +O(b_2-b_1)\nonumber\\
    &\stackrel{(b)}{\le}& nb_1\,f(Q_{\max}(0)+b_1))
    - \frac{\beps}{2}(b_2-b_1)f(Q_{\max}(0))\nonumber\\
    & & \qquad \qquad +2n(b_2-b_1)f(b_2) +O(b_2-b_1),\label{eq:corenegative}
   \end{eqnarray}
   where (a) follows from the convexity of $L$ and (b) is due to the $1$-Lipschitz property of $\bQ$.
Now if we choose $g(\bx) = b_2$ and
$$h(\bx)=-n b_1\,f(Q_{\max}(0)+b_1))+\frac{\beps}{2}(b_2-b_1)f(Q_{\max}(0))-2n(b_2-b_1)f(b_2) -O(b_2-b_1),$$
the desired inequality follows:
\begin{eqnarray*}
\E_{\bx}\left[L(X(g(\bx))) - L(X(0))\right] & \leq & - h(\bx).
\end{eqnarray*}
The desired conditions of Lemma \ref{lem:two} can be checked as follows.
First observe that with respect to $Q_{\max}(0)$, the function $h$
scales as $b_2(Q_{\max}(0)) f (Q_{\max}(0))$ due to
$b_2/b_1=\Theta\left(\log Q_{\max}(0)\right)$ as per Lemma \ref{lem:adiabetic1}.
Further, $h$ is a function that is lower bounded and its value goes to $\infty$
as $Q_{\max}(0)$ goes to $\infty$. Therefore, $h/g$ scales as $f (Q_{\max}(0))$.
These propeties will imply the verification conditions of Lemma \ref{lem:two}.

\vspace{.1in}
\subsubsection{Step Three: Buffered Circuit Switched Network}\label{sssec:step3switch}

In this section, we prove Lemma \ref{lem:two} for the circuit switched network model.
Similar to wireless network, we are interested in large enough $Q_{\max}(0)$ that
satisfies condition of Lemma \ref{lem:adiabetic1}. Given the state $X(t) = (\bQ(t), \z(t))$
of the Markov process, we shall consider the following Lyapunov function :
$$ L(X(t)) = \sum_i F(R_i(t)).$$
Here $\bR(t) = [R_i(t)]$ with $R_i(t) = Q_i(t)+ z_i(t)$ and as before
$F(x) = \int_{0}^x f(y) ~dy$.  Now we proceed towards finding appropriate
functions $h$ and $g$ as desired in Lemma \ref{lem:two}. For any $\tau \in \Zp$,
\begin{eqnarray}
& &  L(X(\tau+1)) - L(X(\tau)) \nonumber \\
  & & \quad =  \left(F(\bR(\tau+1)) - F(\bR(\tau))\right) \cdot \bone \nonumber\\
    & & \quad \leq f(\bR(\tau+1)) \cdot (\bR(\tau+1)-\bR(\tau)),  \nonumber\\
    & & \quad=   f(\bR(\tau)+A(\tau, \tau+1) - D(\tau, \tau+1)) \cdot \left(A(\tau, \tau+1) - D(\tau, \tau+1)\right) \nonumber\\
    & & \quad \leq   f(\bR(\tau)) \cdot \left(A(\tau, \tau+1) - D(\tau, \tau+1)\right)+\|A(\tau, \tau+1) - D(\tau, \tau+1)\|_2^2. \qquad \qquad \qquad \nonumber
  \end{eqnarray}
Given initial state $X(0) = \bx$, taking expectation for $\tau, \tau+1 \in I$, we have
\begin{eqnarray}\label{eq:ad3}
  && \E_{\bx}[L(X(\tau+1)) - L(X(\tau))]\nonumber\\
    & & \quad \leq  \E_{\bx}\left[f(\bR(\tau)) \cdot A(\tau, \tau+1)\right]- \E_{\bx}\left[f(\bR(\tau)) \cdot D(\tau, \tau+1)\right] \nonumber \\
 & & \qquad \qquad     +\E_{\bx}\left[\|A(\tau, \tau+1)-D(\tau, \tau+1)\|_2^2\right] \nonumber\\
& & \quad {=} \E_{\bx}\left[f(\bR(\tau))\cdot\lamb\right]-
\E_{\bx}\left[f(\bR(\tau)) \cdot D(\tau, \tau+1)\right]+O(1).
\end{eqnarray}
The last equality follows from the fact that arrival process is Poisson with
rate vector $\lamb$ and $\bR(\tau)$ is independent of $A(\tau, \tau+1)$.
In addition, the overall departure process for any $i$, $D_i(\cdot)$, is governed by a
Poisson process of rate at most $C_{\max}$. Therefore, the second moment
of the difference of arrival and departure processes in unit time
is $O(1)$. Now,
\begin{eqnarray}\label{eq:ad3-1}
\E_{\bx}\left[f(\bR(\tau))\cdot\lamb\right] &=& f(\bR(0))\cdot\lamb +\E_{\bx}\left[(f(\bR(\tau))-f(\bR(0)))\cdot\lamb\right].
\end{eqnarray}
And,
\begin{eqnarray}\label{eq:ad3-2}
& & \E_{\bx}\left[f(\bR(\tau)) \cdot D(\tau, \tau+1)\right] \nonumber \\
& & \quad = \E_{\bx}\left[f(\bR(0)) \cdot D(\tau, \tau+1)\right] + \E_{\bx}\left[(f(\bR(\tau))-f(\bR(0))) \cdot D(\tau,\tau+1) \right].
\end{eqnarray}
The first term on the right hand side in \eqref{eq:ad3-1} can be bounded as
\begin{eqnarray}
f(\bR(0))\cdot\lamb
&\leq&(1-\varepsilon)\left(\max_{\y\in\X}f(\bR(0))\cdot\y\right)\nonumber\\
&\leq&-\frac{3\beps}4\left(\max_{\y\in\X}f(\bR(0))\cdot\y\right)
+\E_{\pi(0)}\left[f(\bR(0)) \cdot \z\right]+O(1),\label{eq:ad4}
\end{eqnarray}
where the first inequality is due to $\lamb\in(1-\varepsilon)\Conv(\X)$ and
the second inequality follows from Lemma \ref{LEM:GOODPI}
with the fact that $|f_i(\bR(\tau))-f_i(\bQ(\tau))|<f(C_{\max})=O(1)$ for all $i$.
On the other hand, the first term in the right hand side of \eqref{eq:ad3-2} can be bounded
below as
\begin{eqnarray}\label{eq:ad5}
\E_{\bx}\left[f(\bR(0)) \cdot D(\tau, \tau+1)\right]
&=&f(\bR(0)) \cdot\E_{\bx}\left[D(\tau, \tau+1)\right]\nonumber\\
&\geq&
f(\bR(0)) \cdot\int^{\tau+1}_{\tau}\E_{\bx}\left[\z(s)\right]~ds\nonumber\\
&=&\int^{\tau+1}_{\tau}\E_{\mu(s)}\left[f(\bR(0)) \cdot \z\right]~ds.
\end{eqnarray}
In above, we have used the fact that $D_i(\cdot)$ is a Poisson process
with rate given by $z_i(\cdot)$. Further, the second term in the right hand side
of\eqref{eq:ad3} can be bounded as follows.
\begin{eqnarray}\label{eq:ad5-1}
\E_{\bx}\left[\|f(\bR(\tau))-f(\bR(0))\|_1 \right]
&\leq&\E_{\bx}\left[f\left(|\bR(\tau))-\bR(0)|\right) \right] + O(1) \nonumber\\
&\leq&f\left(\E_{\bx}\left[|\bR(\tau))-\bR(0)|\right]\right) +O(1)\nonumber\\
&\leq& n f(C_{\max}\tau) + O(1)\nonumber \\
& = & O(f(\tau)),
\end{eqnarray}
The first inequality follows from $f(x+y) \leq f(x) + f(y) + 2$ for any $x, y \in \Rp$.
This is because $\log (x+y+e) \leq \log (x+e) + \log (y+e)$ for any $x, y \geq \Rp$,
$\log a+b \leq 2 + \log a + \log b$ for any $a, b \geq 1$ and $f(x) = \log \log (x+e)$.
The second inequality follows by applying Jensen's inequality for concave function $f$.
Combining \eqref{eq:ad3}-\eqref{eq:ad5-1}, we obtain
\begin{eqnarray*}
&&\E_{\bx}[L(X(\tau+1)) - L(X(\tau))]\\
&\leq& -\frac{3\beps}4\left(\max_{\y\in\X}f(\bR(0))\cdot\y\right)
+\E_{\pi(0)}\left[f(\bR(0)) \cdot \z\right] \\
& & \qquad -\int^{\tau+1}_{\tau}\E_{\mu(s)}\left[f(\bR(0)) \cdot \z\right]~ds+O(f(\tau))\\
      &\leq&-\frac{3\beps}4\left(\max_{\y\in\X}f(\bR(0))\cdot\y\right) \nonumber \\
& & \qquad      + ~ \int^{\tau+1}_{\tau}\left(\max_{\y\in\X}f(\bR(0))\cdot\y\right)
     \|\mu(s)-\pi(0)\|_{TV}~ds+O(f(\tau))\\
           &\stackrel{(a)}{\leq}&-\frac{\beps}2\left(\max_{\y\in\X}f(\bR(0))\cdot\y\right)+
           O(f(\tau))\\
           &\leq&-\frac{\beps}2f(Q_{\max}(0))+
           O(f(\tau)),
  \end{eqnarray*}
  where (a) follows from Lemma \ref{lem:adiabetic1}. Summing
 this for $\tau\in I=[b_1,b_2-1]$,
\begin{equation}\label{eq:ad8}
  \E_{\bx}[L(X(b_2)) - L(X(b_1))]
    ~\leq~-\frac{\beps}{2}f(Q_{\max}(0))(b_2-b_1)+O((b_2-b_1)f(b_2)).
\end{equation}
Therefore, we have
\begin{eqnarray*}
  &&\E_{\bx}[L(X(b_2)) - L(X(0))]\\
  &=&\E_{\bx}[L(X(b_1)) - L(X(0))]+\E_{\bx}[L(X(b_2)) - L(X(b_1))]\\
    &\stackrel{(a)}{\leq}&\E_{\bx}[f(\bR(b_1)) \cdot (\bR(b_1)-\bR(0))]+\E_{\bx}[L(X(b_2)) - L(X(b_1))]\\
    &=&\sum_i\E_{\bx}[f(R_i(b_1)) \cdot (R_i(b_1)-R_i(0))]+\E_{\bx}[L(X(b_2)) - L(X(b_1))]\\
    &\stackrel{(b)}{\leq}&\sum_i\sqrt{\E_{\bx}[f(R_i(b_1))^2]} \sqrt{\E_{\bx}[(R_i(b_1)-R_i(0))^2]}+\E_{\bx}[L(X(b_2)) - L(X(b_1))]\\
    &\stackrel{(c)}{\leq}&\sum_i\sqrt{f(\E_{\bx}[R_i(b_1)])^2 + O(1)} \cdot O(b_1)+\E_{\bx}[L(X(b_2)) - L(X(b_1))]\\
                &\stackrel{(d)}{=}&n\,f(Q_{\max}(0)+O(b_1))\cdot O(b_1)-\frac{\beps}{2}f(Q_{\max}(0))(b_2-b_1) \\ & & \qquad + ~O((b_2-b_1)f(b_2))\\
                &\stackrel{\triangle}{=}&-h(\bx).
\end{eqnarray*}
Here (a) follows from convexity of $L$; (b) from Cauchy-Schwarz, (c) is due to
the bounded second moment  $\E_{\bx}[|R_i(b_1)-R_i(0)|]=O(b_1)$ as argued earlier
in the proof and observing that there exists a concave function
$g$ such that $f^2 = g + O(1)$ over $\Rp$, subsequently  Jensen's inequality
can be applied; (d) follows from \eqref{eq:ad8}.
Finally, choose $g(\bx) = b_2$.

With these choices of $h$ and $g$, the desired conditions of
Lemma \ref{lem:two} can be checked as follows. First observe
that with respect to $Q_{\max}(0)$, the function $h$
scales as $b_2(Q_{\max}(0)) f (Q_{\max}(0))$ due to
$b_2/b_1=\Theta\left(\log Q_{\max}(0)\right)$ as per Lemma \ref{lem:adiabetic1}.
Further, $h$ is a function that is lower bounded and its value goes to $\infty$
as $Q_{\max}(0)$ goes to $\infty$. Therefore, $h/g$ scales as $f (Q_{\max}(0))$.
These properties will imply the verification conditions of Lemma \ref{lem:two}.

\subsubsection{Step Four}

For completing the proof of the positive Harris recurrence of both algorithms,
it only remains to show that for $\kappa>0$,
the set $B_\kappa = \{ \bx \in \sX : L(\bx) \leq \kappa \}$ is a closed petit. This
is because other conditions of Lemma \ref{lem:one} follow from Lemma \ref{lem:two}.
And the Step Three exhibited choice of Lyapunov function $L$ and
desired `drift' functions $h, g$.

To this end, first note that $B_{\kappa}$ is closed by definition.
To establish that it is a petit set, we need to find a non-trivial measure $\mu$
on $(\sX, \cB_\sX)$ and sampling distribution $a$ on $\Zp$ so that for any $\bx \in B_\kappa$,
$$ K_a(\bx, \cdot) \geq \mu(\cdot).$$
To construct such a measure $\mu$, we shall use the following Lemma.
\begin{lemma}\label{lem:reachzero}
Let the network Markov chain $X(\cdot)$ start with the state $\bx \in B_\kappa$ at time $0$ i.e. $X(0)=\bx$.
Then, there exists $T_\kappa \geq 1$ and $\gamma_\kappa > 0$ such that
$$ \sum_{\tau=1}^{T_\kappa}  {\Pr}_{\bx}(X(\tau) = \bzero) \geq \gamma_\kappa, ~~\forall \bx \in B_\kappa.$$
Here $\bzero = (\bzero, \bzero) \in \sX$ denote the state where all components of $\bQ$ are
$0$ and the schedule is the empty independent set.
\end{lemma}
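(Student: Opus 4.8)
The plan is to exploit that $B_\kappa$ is a \emph{bounded} subset of $\sX$: since $L(\bx)=\sum_i F(Q_i)$ (wireless) or $\sum_i F(Q_i+z_i)$ (circuit switched) with $F$ strictly increasing and diverging, the constraint $L(\bx)\le\kappa$ forces $Q_i\le M_\kappa:=F^{-1}(\kappa)$ for every $i$ (and $z_i\le M_\kappa$ in the circuit-switched case). Consequently, for all $\bx\in B_\kappa$ the weights obey $0\le W_i\le f(M_\kappa)$, so every Glauber coin probability $\exp(W_i)/(1+\exp(W_i))$ lies in $[(1+e^{f(M_\kappa)})^{-1},1)$ and every request rate $\exp(W_i)$ lies in $[1,e^{f(M_\kappa)}]$; moreover these bounds persist on any time interval during which no exogenous arrival occurs, since then queues (hence $Q_{\max}$, hence the weights) only decrease. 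This uniform non-degeneracy of all clock rates, coin probabilities and request rates over $B_\kappa$ is the crux of the argument.

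Given this, I would fix an integer horizon $T_\kappa$ (depending only on $\kappa$, $n$, the graph/capacities and $\lamb$) and construct a single event $\mathcal E_\bx$ of probability at least some $\gamma_\kappa>0$, uniformly over $\bx\in B_\kappa$, on which $X(\tau)=\bzero$ for some integer $\tau\le T_\kappa$. The event is the intersection of: (i) \emph{no exogenous arrival} at any queue/route throughout $[0,T_\kappa]$, which has probability at least $(1-\max_i\lambda_i)^{\,nT_\kappa}>0$ in the wireless (Bernoulli) case and $\exp(-(\textstyle\sum_i\lambda_i)T_\kappa)>0$ in the circuit-switched (Poisson) case, and is independent of all scheduling clocks; and (ii) a prescribed scheduling trajectory that empties every queue and ends in the empty schedule well before $T_\kappa$, followed by no further clock tick until time $T_\kappa$. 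For the wireless model, trajectory (ii) is obtained by first bounding the number of clock ticks in $[0,T_\kappa]$ by a constant $N_\kappa$ (a Poisson-tail event of positive probability) and then prescribing the identities of the ticking nodes and their coin outcomes so that the active set runs through a fixed family of independent sets covering $V$, each held on for at least $M_\kappa$ units of continuous unit-rate service — enough to drain the corresponding queues since each starts with at most $M_\kappa$ work and gets no arrivals — and is finally vacated; each prescribed node identity contributes a factor $1/n$ and each prescribed coin outcome a factor at least $(1+e^{f(M_\kappa)})^{-1}$, so the whole prescription has probability bounded below uniformly. For the circuit-switched model, trajectory (ii) instead lets all currently active flows finish their unit-mean exponential holding times without a new request, then processes routes one at a time, repeatedly generating and accepting a request and waiting for that flow's holding time to elapse before the next request (thereby draining the ingress queue), and finally leaves the network idle; every step here is an exponential-clock event with rate in a fixed bounded interval, so again the joint probability is bounded below uniformly over $B_\kappa$. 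By independence of the arrival processes and the scheduling clocks, $\Pr_\bx(\mathcal E_\bx)\ge\gamma_\kappa$ for all $\bx\in B_\kappa$.

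On $\mathcal E_\bx$ the drained, empty configuration $\bzero$ is reached by some non-integer time $t^\ast<T_\kappa$ and, no ticks or arrivals occurring in $[t^\ast,T_\kappa]$, the chain sits at $\bzero$ at every integer time in $[\lceil t^\ast\rceil,T_\kappa]$; in particular $X(\tau)=\bzero$ for some $\tau\in\{1,\dots,T_\kappa\}$. Therefore, by subadditivity of probability, $\sum_{\tau=1}^{T_\kappa}\Pr_\bx(X(\tau)=\bzero)\ge\Pr_\bx\big(\exists\,\tau\le T_\kappa:X(\tau)=\bzero\big)\ge\gamma_\kappa$, which is the claim.

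The step I expect to be the main obstacle is the explicit construction in (ii): producing a genuinely valid positive-probability scheduling path that empties \emph{all} queues while respecting the interference constraints (wireless) or the coupled queue-and-active-flow dynamics (circuit switched), and — more importantly — checking that its probability is bounded below \emph{uniformly} in $\bx\in B_\kappa$. The uniformity is precisely what boundedness of $B_\kappa$, and hence of the weights, delivers; the remaining bookkeeping (choosing the independent-set cover, ordering the routes, controlling the random number of clock ticks, and stitching the phases together within the horizon $T_\kappa$) is routine but needs to be carried out carefully.
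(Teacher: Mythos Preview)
Your proposal is correct and takes essentially the same approach as the paper: bound the queues (hence the weights) uniformly on $B_\kappa$, condition on no exogenous arrivals over a fixed horizon, and use the resulting uniform lower bounds on all transition probabilities to exhibit a positive-probability scheduling trajectory that drains every queue and then reaches the empty schedule. The paper's version (written out only for the wireless model) phrases the draining step via irreducibility of the Glauber dynamics on $\cI(G)$ and a diameter bound rather than your explicit independent-set cover, but this is a cosmetic difference.
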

\begin{proof}
We establish this for wireless network. The proof for circuit switched network
is identical and we skip the details.  Consider any $\bx \in B_\kappa$. Then
by definition $L(\bx) \leq \kappa + 1$ for given $\kappa > 0$. Hence by definition of $L(\cdot)$ it can be easily checked that each queue is bounded above by $\kappa$. Consider some large enough (soon to be determined)
$T_\kappa$. By the property of Bernoulli
(or Poisson for circuit switched network) arrival process, there is a positive
probability $\theta^0_\kappa > 0$ of no arrivals happening to the system during
time interval of length $T_\kappa$. Assuming that no arrival happens,
we will show that in large enough time $t^1_\kappa$,
with  probability $\theta^1_\kappa > 0$  each queue receives at least $\kappa$ amount of service; and after that in additional time $t^2$ with positive probability $\theta^2 > 0$ the empty set schedule is reached. This will imply that by defining  $T_\kappa \stackrel{\triangle}{=} t^1_\kappa + t^2$
the state $\bzero \in \sX$ is reached  with probability at least
$$\gamma_\kappa  \stackrel{\triangle}{=} \theta^0_\kappa \theta^1_\kappa \theta^2 > 0.$$
And this will immediately imply the desired result of Lemma \ref{lem:reachzero}.
To this end, we need to show existence of $t^1_\kappa, \theta^1_\kappa$ and $t^2, \theta^2$ with
properties stated above to complete the proof of Lemma \ref{lem:reachzero}.

First, existence of $t^1_\kappa, \theta_\kappa^1$. For this, note that the Markov chain corresponding
to the scheduling algorithm has time varying transition probabilities and is irreducible over the space
of all independent sets, $\cI(G)$. If there are no new arrivals and initial $\bx \in B_\kappa$, then clearly
queue-sizes are uniformly bounded by $\kappa$. Therefore, the transition
probabilities of all feasible transitions for this time varying Markov chain is uniformly lower bounded
by a strictly positive constant (dependent on $\kappa, n$). It can be easily checked that the transition
probability induced graph on $\cI(G)$ has diameter at most $2n$ and Markov chain transits as per Exponential clock of overall rate $n$. Therefore, it follows that starting from any initial scheduling configuration,
there exists finite time $\hat{t}_\kappa$ such that a schedule is reached so that any given queue $i$ is scheduled for at least unit amount of time with probability at least
$\hat{\theta}_\kappa > 0$. Here, both $\hat{t}_\kappa, \hat{\theta}_\kappa$ depend on $n, \kappa$. Therefore, it follows that in time $t^1_\kappa \stackrel{\triangle}{=} \kappa n \hat{t}_\kappa$
all queues become empty with probability at least
$\theta^1_\kappa \stackrel{\triangle}{=} \left(\hat{\theta}_\kappa\right)^{n\kappa}$. Next, to
establish existence of $t^2, \theta^2$ as desired, observe that once the
system reaches empty queues, it follows that in the absence of new arrivals the empty schedule $\bzero$
is reached after some finite time $t^2$ with probability $\theta^2 > 0$ by similar properties
of the Markov chain on $\cI(G)$ when all queues are $0$. Here $t^2$ and $\theta^2$ are
dependent on $n$ only. This completes the proof of Lemma \ref{lem:reachzero}.
\end{proof}

In what follows, Lemma \ref{lem:reachzero} will be used to complete the proof that
$B_\kappa$ is a closed petit.
To this end, consider Geometric($1/2$) as the sampling
distribution $a$, i.e.
$$ a(\ell) = 2^{-\ell}, ~~\ell \geq 1.$$
Let $\bdelta_\bzero$ be the Dirac distribution on element $\bzero \in \sX$. Then, define $\mu$ as
$$ \mu = 2^{-T_\kappa} \gamma_k \bdelta_\bzero, ~~\mbox{that is}~~\mu(\cdot) = 2^{-T_\kappa} \gamma_k \bdelta_\bzero(\cdot).$$
Clearly, $\mu$ is non-trivial measure on $(\sX, \cB_\sX)$.
With these definitions of $a$ and $\mu$, Lemma \ref{lem:reachzero} immediately implies
that for any $\bx \in B_\kappa$,
$$ K_a(\bx, \cdot) \geq \mu(\cdot).$$
This establishes that set $B_\kappa$ is a closed petit set.

\section{Discussion}\label{sec:discuss}


This paper introduced a new randomized scheduling algorithm
for two constrained queueing network models: wireless network
and buffered circuit switched network. The algorithm is
simple, distributed, myopic and throughput optimal. The main
reason behind the throughput optimality property of the
algorithm is two folds: (1) The relation of algorithm dynamics
to the Markovian dynamics over the space of schedules that have
a certain product-form stationary distribution, and (2) choice
of slowly increasing weight function $\log\log(\cdot+e)$ that
allows for an effective time scale separation between algorithm
dynamics and the queueing dynamics.  We chose wireless network
and buffered circuit switched network model to explain the
effectiveness of our algorithm because (a) they are becoming
of great interest \cite{mesh,optical} and (b) they represent
two different, general class of network models: synchronized
packet network model and asynchronous flow network model.


Now we turn to discuss the distributed implementation of
our algorithm. As described in Section \ref{ssec:algo1}, given
the weight information at each wireless node (or ingress of
a route), the algorithm completely distributed. The weight,
as defined in \eqref{eq:weight1} (or \eqref{eq:weight2}),
depends on the local queue-size as well as the $Q_{\max}$
information. As is, $Q_{\max}$ is global information. To
keep the exposition simpler, we have used the precise
$Q_{\max}$ information to establish the throughput
property. However, as remarked earlier in the Section
\ref{ssec:algo1} (soon after \eqref{eq:weight1}), the $Q_{\max}$ can
be replaced by its appropriate distributed estimation
without altering the throughput optimality property. Such
a distributed estimation can be obtained through an extremely
simple Markovian like algorithm that require each node
to perform broadcast of exactly one number in unit
time. A detailed description of such an algorithm can be
found in Section 3.3 of \cite{RSS09}.

On the other hand, consider the algorithm that does not
use $Q_{\max}$ information. That is, instead of
\eqref{eq:weight1} or \eqref{eq:weight2}, let weight be
$$ W_i(t) = f(Q_i(\lfloor t\rfloor)).$$
We conjecture that this algorithm is throughput
optimal.

\section*{Acknowledgements}
We would like to acknowledge the support of the NSF projects CNS 0546590,
TF 0728554 and DARPA ITMANET project.

\bibliographystyle{plain}

\bibliography{biblio}

\begin{thebibliography}{10}

\bibitem{C}
J.~Cheeger.
\newblock A lower bound for the smallest eigenvalue of the laplacian.
\newblock {\em Problems in analysis (Papers dedicated to S. Bochner, 1969)},
  pages 195--199, 1970.

\bibitem{dai95}
J.~G. Dai.
\newblock On positive harris recurrence of multiclass queueing networks.
\newblock {\em Annals of Applied Probability}, 5:49--77, 1995.

\bibitem{dainotes}
J.~G. Dai.
\newblock Stability of fluid and stochastic processing networks.
\newblock {\em Lecture notes, Center for Mathematical Physics and Stochastics,
  Denmark}, 1999.

\bibitem{dailin}
J.~G. Dai and W.~Lin.
\newblock Asymptotic optimality of maximum pressure policies in stochastic
  processing networks.
\newblock {\em Annals of Applied Probability}, 18(6):2239--2299, 2008.

\bibitem{DB}
J.~G. Dai and B.~Prabhakar.
\newblock The throughput of switches with and without speed-up.
\newblock In {\em Proceedings of IEEE Infocom}, pages 556--564, 2000.

\bibitem{DS}
P.~Diaconis and D.~Stroock.
\newblock Geometric bounds on eigenvalues of markov chains.
\newblock {\em Annals of applied probability}, pages 36--61, 1991.

\bibitem{DW04}
A.~Dimakis and J.~Walrand.
\newblock {Sufficient conditions for stability of longest-queue-first
  scheduling: second-order properties using fluid limits}.
\newblock {\em Advances in Applied Probability}, 38(2):505, 2006.

\bibitem{DFK91}
M.~Dyer, A.~Frieze, and R.~Kannan.
\newblock A random polynomial-time algorithm for approximating the volume of
  convex bodies.
\newblock {\em J. ACM}, 38(1):1--17, 1991.

\bibitem{Foss-Fluid}
S.~Foss and T.~Konstantopoulos.
\newblock An overview of some stochastic stability methods.
\newblock {\em Journal of Operations Research, Society of Japan}, 47(4), 2004.

\bibitem{GBook}
H.~O. Georgii.
\newblock {\em Gibbs measures and phase transitions}.
\newblock Walter de Gruyter, 1988.

\bibitem{Getoor}
R.~K. Getoor.
\newblock Transience and recurrence of markov processes.
\newblock {\em In AzŽma, J. and Yor, M., editors, SŽminaire de ProbabilitŽs
  XIV}, page 397Ð409, 1979.

\bibitem{G-P-S}
P.~Giaccone, B.~Prabhakar, and D.~Shah.
\newblock Randomized scheduling algorithms for high-aggregate bandwidth
  switches.
\newblock {\em IEEE Journal on Selected Areas in Communications
  High-performance electronic switches/routers for high-speed internet},
  21(4):546--559, 2003.

\bibitem{H70}
W.K. Hastings.
\newblock Monte carlo sampling methods using markov chains and their
  applications.
\newblock {\em Biometrika}, 57:97--109, 1970.

\bibitem{JS}
M.~Jerrum and A.~Sinclair.
\newblock Polynomial-time approximation algorithms for the ising model.
\newblock {\em SIAM Journal on Computing}, 22:1087--1116, 1993.

\bibitem{LSSW}
L.~Jiang, D.~Shah, J.~Shin, and J.~Walrand.
\newblock Distributed random access algorithm: Scheduling and congesion
  control.
\newblock {\em Available at: arxiv.org/abs/0907.122}.

\bibitem{Kelly}
F.P. Kelly.
\newblock {Loss networks}.
\newblock {\em The annals of applied probability}, 1(3):319--378, 1991.

\bibitem{LW}
L.~Lovasz and P.~Winkler.
\newblock Mixing times.
\newblock {\em Microsurveys in Discrete Probability (ed. D. Aldous and J.
  Propp), DIMACS Series in Discrete Math. and theor. Comp. Sci.}, pages
  85--133, 1998.

\bibitem{Glauber}
F.~Martinelli.
\newblock Lectures on glauber dynamics for discrete spin models.
\newblock {\em Lectures on probability theory and statistics (Saint-Flour
  1997), Lecture notes in Mathematics}, 1717.

\bibitem{islip}
N.~McKeown.
\newblock i{SLIP}: a scheduling algorithm for input-queued switches.
\newblock {\em IEEE Transaction on Networking}, 7(2):188--201, 1999.

\bibitem{MAW}
N.~McKeown, V.~Anantharam, and J.~Walrand.
\newblock Achieving 100\% throughput in an input-queued switch.
\newblock In {\em Proceedings of IEEE Infocom}, pages 296--302, 1996.

\bibitem{MRRTT53}
N.~Metropolis, A.W. Rosenbluth, M.N. Rosenbluth, A.H. Teller, and E.~Teller.
\newblock Equations of state calculations by fast computing machines.
\newblock {\em Journal of Chemical Physics}, 21:1087--1092, 1953.

\bibitem{MT-book}
S.~P. Meyn and R.~L. Tweedie.
\newblock {\em Markov Chains and Stochastic Stability}.
\newblock Springer-Verlag, London, 1993.

\bibitem{MSZ06}
E.~Modiano, D.~Shah, and G.~Zussman.
\newblock Maximizing throughput in wireless network via gossiping.
\newblock In {\em ACM SIGMETRICS/Performance}, 2006.

\bibitem{MT06}
R.~Montenegro and P.~Tetali.
\newblock Mathematical aspects of mixing times in markov chains.
\newblock {\em Found. Trends Theor. Comput. Sci.}, 1(3):237--354, 2006.

\bibitem{RSS09}
S.~Rajagopalan, D.~Shah, and J.~Shin.
\newblock {Network adiabatic theorem: an efficient randomized protocol for
  contention resolution}.
\newblock In {\em Proceedings of the eleventh international joint conference on
  Measurement and modeling of computer systems}, pages 133--144. ACM New York,
  NY, USA, 2009.

\bibitem{mesh}
Microsoft research lab.
\newblock Self organizing neighborhood wireless mesh networks.
\newblock http://research.microsoft.com/mesh/.

\bibitem{SW06}
D.~Shah and D.~J. Wischik.
\newblock Optimal scheduling algorithm for input queued switch.
\newblock In {\em Proceeding of IEEE INFOCOM}, 2006.

\bibitem{SW07}
D.~Shah and D.~J. Wischik.
\newblock The teleology of scheduling algorithms for switched networks under
  light load, critical load, and overload.
\newblock http://web.mit.edu/devavrat/www/shahwischik.pdf, 2007-09.

\bibitem{sinclair}
A.~Sinclair.
\newblock {\em Algorithms for Random Generation and Counting: A Markov Chain
  Approach}.
\newblock Birkh{\"{a}}user, Boston, 1993.

\bibitem{stolyar}
A.~L. Stolyar.
\newblock {M}ax{W}eight scheduling in a generalized switch: State space
  collapse and workload minimization in heavy traffic.
\newblock {\em Annals of Applied Probability}, 14(1):1--53, 2004.

\bibitem{tassiulas98}
L.~Tassiulas.
\newblock Linear complexity algorithms for maximum throughput in radio networks
  and input queued switches.
\newblock In {\em IEEE INFOCOM}, volume~2, pages 533--539, 1998.

\bibitem{TE92}
L.~Tassiulas and A.~Ephremides.
\newblock Stability properties of constrained queueing systems and scheduling
  policies for maximum throughput in multihop radio networks.
\newblock {\em IEEE Transactions on Automatic Control}, 37:1936--1948, 1992.

\bibitem{IS}
L.~Trevisan.
\newblock Non-approximability results for optimization problems on bounded
  degree instances.
\newblock In {\em ACM STOC}, 2001.

\bibitem{optical}
S.~Verma, H.~Chaskar, and R.~Ravikanth.
\newblock Optical burst switching: a viable solution for terabit ip backbone,
  2000.

\end{thebibliography}
\begin{appendix}

\section{A Useful Lemma}\label{a0}

\begin{lemma}\label{lem:last}
Let $P_1, P_2 \in \R^{N\times N}$. Then,
$$\left\|e^{P_1}-e^{P_2}\right\|_{\infty}\leq
e^{NM}\left\|P_1-P_2\right\|_{\infty},$$
where $M=\max\{\|P_1\|_{\infty},\|P_1\|_{\infty}\}$.
\end{lemma}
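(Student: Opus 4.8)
The plan is to control $e^{P_1}-e^{P_2}$ by interpolating linearly between the two exponentials. First I would use the identity
\[
e^{P_1}-e^{P_2} \;=\; \int_0^1 \frac{d}{ds}\Bigl(e^{sP_1}e^{(1-s)P_2}\Bigr)\,ds \;=\; \int_0^1 e^{sP_1}\,(P_1-P_2)\,e^{(1-s)P_2}\,ds ,
\]
where the second equality uses that $P_1$ commutes with $e^{sP_1}$ and $P_2$ with $e^{(1-s)P_2}$, so the product rule gives $\frac{d}{ds}\bigl(e^{sP_1}e^{(1-s)P_2}\bigr)=e^{sP_1}(P_1-P_2)e^{(1-s)P_2}$, while the boundary values at $s=1$ and $s=0$ are exactly $e^{P_1}$ and $e^{P_2}$.

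Next I would pass the matrix norm $\|\cdot\|_\infty$ through the integral and invoke its submultiplicativity (it is the operator norm induced by the vector $\ell_\infty$ norm, so $\|AB\|_\infty\le\|A\|_\infty\|B\|_\infty$ and $\|I\|_\infty=1$), which gives
\[
\bigl\|e^{P_1}-e^{P_2}\bigr\|_\infty \;\le\; \|P_1-P_2\|_\infty \int_0^1 \bigl\|e^{sP_1}\bigr\|_\infty\,\bigl\|e^{(1-s)P_2}\bigr\|_\infty\,ds .
\]
Each exponential factor I would bound by expanding the (absolutely convergent) power series: for $j\in\{1,2\}$ and $s\in[0,1]$,
\[
\bigl\|e^{sP_j}\bigr\|_\infty \;\le\; \sum_{k\ge 0}\frac{s^k\,\|P_j\|_\infty^k}{k!} \;=\; e^{s\|P_j\|_\infty}\;\le\; e^{sM},
\]
so the integrand is at most $e^{sM}e^{(1-s)M}=e^{M}$, the integral is at most $e^{M}$, and therefore $\|e^{P_1}-e^{P_2}\|_\infty\le e^{M}\|P_1-P_2\|_\infty\le e^{NM}\|P_1-P_2\|_\infty$, using $M\ge 0$ and $N\ge 1$ in the last step. (This argument in fact delivers the sharper constant $e^{M}$, but $e^{NM}$ is all that is needed.)

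I do not expect a real obstacle. If one prefers to avoid the matrix-valued integral, the identical bound follows by a purely algebraic route: expand $e^{P_1}-e^{P_2}=\sum_{k\ge1}(P_1^k-P_2^k)/k!$, telescope $P_1^k-P_2^k=\sum_{j=0}^{k-1}P_1^{\,j}(P_1-P_2)P_2^{\,k-1-j}$ to get $\|P_1^k-P_2^k\|_\infty\le k\,M^{k-1}\|P_1-P_2\|_\infty$, and sum $\sum_{k\ge1}k\,M^{k-1}/k!=e^{M}$. The only points worth a sentence of justification are that the series involved converge absolutely (legitimizing the termwise norm estimates and the interchange of $\|\cdot\|_\infty$ with the integral) and that $\|\cdot\|_\infty$ is submultiplicative with $\|I\|_\infty=1$; after that the conclusion is immediate.
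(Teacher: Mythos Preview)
Your argument is correct, and in fact sharper than what the paper proves, but it rests on reading $\|\cdot\|_\infty$ as the induced operator norm (maximum row sum). The paper's own proof makes clear that the authors intend the \emph{entrywise} maximum norm: they explicitly invoke the inequality $\|Q_1Q_2\|_\infty\le N\|Q_1\|_\infty\|Q_2\|_\infty$, which is the right submultiplicativity constant for the entrywise norm (and is where the factor $N$ in $e^{NM}$ originates). Under that norm your submultiplicativity step $\|AB\|_\infty\le\|A\|_\infty\|B\|_\infty$ and the bound $\|e^{sP_j}\|_\infty\le e^{sM}$ are not available, and your final constant $e^M$ would need to become $e^{NM}$ after inserting the missing $N$'s.

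Methodologically, the paper argues by induction on $k$ to get $\|P_1^k-P_2^k\|_\infty\le k(NM)^{k-1}\|P_1-P_2\|_\infty$ and then sums the exponential series; your telescoping alternative is exactly this computation (done in one line), while your primary route via the interpolation identity $e^{P_1}-e^{P_2}=\int_0^1 e^{sP_1}(P_1-P_2)e^{(1-s)P_2}\,ds$ is a genuinely different and cleaner device. Either way the statement holds with constant $e^{NM}$ for the entrywise norm and with the better constant $e^{M}$ for the operator norm; since the lemma is only used inside an $O(\cdot)$ in the main text, both readings suffice for the paper's purposes.
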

\begin{proof}
Using mathematical induction, we first establish that for any $k\in \N$,
\begin{eqnarray}\label{eq:lasteq1}
\|P_1^k- P_2^k\|_{\infty} & \leq & k (N M)^{k-1}\|P_1-P_2\|_{\infty}.
\end{eqnarray}
To this end, the base case $k=1$ follows trivially. Suppose it is
true for some $k \geq 1$. Then, the inductive
step can be justified as follows.
\begin{eqnarray*}
\|P_1^{k+1}- P_2^{k+1}\|_{\infty}
&=&\left\|P_1\left(P_1^{k}- P_2^{k}\right)
+\left(P_1-P_2\right)P_2^k\right\|_{\infty}\\
&\leq&\left\|P_1\left(P_1^{k}- P_2^{k}\right)\right\|_{\infty}
+\left\|\left(P_1-P_2\right)P_2^k\right\|_{\infty}\\
&\stackrel{(a)}{\leq}&N\,\left\|P_1\right\|_{\infty}\left\|P_1^{k}- P_2^{k}\right\|_{\infty}
+N\,\left\|P_1-P_2\right\|_{\infty}\left\|P_2^k\right\|_{\infty}\\
&\stackrel{(b)}{\leq}&N M\times k (N\,M)^{k-1}\|P_1-P_2\|_{\infty}
+N\,\left\|P_1-P_2\right\|_{\infty}\times N^{k-1}M^k\\
&=&(k+1) (N M)^{k}\|P_1-P_2\|_{\infty}.
\end{eqnarray*}
In above, (a) follows from an easily verifiable fact that for any $Q_1, Q_2 \in \R^{N\times N}$,
$$\|Q_1Q_2\|_{\infty}\leq N\|Q_1\|_{\infty}\|Q_2\|_{\infty}.$$
We use induction hypothesis to justify (b). Using \eqref{eq:lasteq1}, we
have
\begin{eqnarray*}
\left\|e^{P_1}-e^{P_2}\right\|_{\infty}
&=& \left\|\sum_k \frac1{k!} \left(P_1^k-P_2^k\right)\right\|_{\infty}\\
&\leq& \sum_k \frac1{k!}\left\|P_1^k-P_2^k\right\|_{\infty}\\
&\leq& \sum_k \frac1{k!}k (N M)^{k-1}\|P_1-P_2\|_{\infty}\\
&=& e^{NM}\|P_1-P_2\|_{\infty}.
\end{eqnarray*}

\end{proof}
\end{appendix}
\end{document}